\documentclass[acmsmall,nonacm,acmthm]{acmart}
\usepackage{comment}

\usepackage{float}

\usepackage{amssymb}

\usepackage{xfrac}

\usepackage{changepage}%used by fig3 to arrange 
\usepackage{hyperref}

\AtEndPreamble{%
	\theoremstyle{acmtheorem}%
	\newtheorem{Claim}[theorem]{Claim}%
	\newtheorem{Remark}{Remark}[section]
}

\usepackage{algorithm}
\usepackage{algpseudocode}
\algnewcommand{\LeftComment}[1]{\Statex \(\triangleright\) #1}

\author{Michael Elkin}
\authornote{This research is supported by the ISF grant 3413/25.}
\affiliation{
	\institution{Ben Gurion University of the Negev}
	\city{Beer Sheva}
	\country{Israel}}
\email{elkinm@bgu.ac.il}

\author{Tanya Goldenfeld}
\authornotemark[1]
\affiliation{
	\institution{Ben Gurion University of the Negev}
	\city{Beer Sheva}
	\country{Israel}}
\email{goldenft@post.bgu.ac.il}

\title{Time, Message and Memory-Optimal Distributed Minimum Spanning Tree and Partwise Aggregation}

\begin{document}
 	\begin{abstract}
    \label{0_Abstract}
    
	Memory-(in)efficiency is a crucial consideration that oftentimes prevents deployment of state-of-the-art distributed algorithms in real-life modern networks \cite{BCCHLS25}. In the context of the MST problem, roughly speaking, there are three types of algorithms. The GHS algorithm and its versions \cite{GHS83, Awerbuch87} are memory- and message- efficient, but their running time is at least linear in the number of vertices $n$, even when the unweighted diameter $D$ is much smaller than $n$. The GKP algorithm and its versions \cite{gkp, KP1998fast, elkin2006faster} are time-efficient, but not message- or memory-efficient. The more recent algorithms of \cite {PRS17, E20Simple, haeupler2018round} are time- and message-efficient, but are not memory-efficient. As a result, GHS-type algorithms are much more prominent in real-life applications than time-efficient ones \cite{murmu2015distributed}. In this paper we develop a deterministic time-, message- and \emph{memory-efficient} algorithm for the MST problem. It is also applicable to the more general \emph{partwise aggregation} problem. We believe that our techniques will be useful for devising memory-efficient versions for many other distributed problems.
	\end{abstract}

\maketitle
	\section{Introduction}
	\label{1_Introduction}    

\subsection{Background and Main Results}
\label{1.1_BgMainResults}    
    The Minimum Spanning Tree problem (hereafter, MST) is one of the most fundamental, classical and extensively studied graph problems in distributed computing. The cornerstone work in this area is the paper by Gallager et al. \cite{GHS83}. Their algorithm, called the \textit{GHS algorithm}, requires $O(n \log n)$ time and $O(m+n \log n)$ messages, where $n=|V|$ and $m = |E|$ are the number of vertices and edges, respectively, of the input graph $G = (V,E)$. Their result was improved by \cite{ChinTing85, Gafni85, Awerbuch87,FM04}, ultimately achieving time $O(n)$ and message complexity $O(m+n \log n)$ \cite{Awerbuch87}. These algorithms have an optimal memory complexity of $O(\log n)$\footnote{The memory complexity of a distributed algorithm is the maximum amount of space (in bits) stored in the local memory of any of the vertices.}, in addition to time $\tilde{O}(n)$ and message complexity $O(m + n \log n)$.
    
    Awerbuch et al. \cite{awerbuch1990trade} and Kutten et al. \cite{kutten2015complexity} showed that $\Omega(m)$ messages are required for any deterministic MST algorithm. Moreover, their lower bound is applicable to randomized comparison-based algorithms, and even to randomized algorithms that are not comparison-based, if they operate in the clean network model. (In this model, at the beginning of computation vertices know only their own IDs.) In this paper we focus on this model, on comparison-based algorithms, and mostly on deterministic ones. 
    
    Following a breakthrough by Garay et al. \cite{gkp}, Kutten and Peleg \cite{KP1998fast} devised an algorithm with running time $O(D+\sqrt{n}\log^* n)$ and message complexity $O(m+n^{\sfrac{3}{2}})$. Peleg and Rubinovich \cite{PelegRub99_nearTightLwrBound} proved that the running time of the algorithm of \cite{KP1998fast} is near-optimal. Specifically, they showed that $\Tilde{\Omega}(\sqrt{n})$ time is required for this problem, even when $D = O(\log n)$. Following a breakthrough of \cite{PRS17}, algorithms with near-optimal time \textit{and} message complexities were also devised in \cite{E20Simple, haeupler2018round}. The state-of-the-art algorithm of \cite{E20Simple} is deterministic and requires $O(D + \sqrt n)\cdot \log n$ time and $O((m + n \cdot \log^* n)\cdot \log n)$ messages.
    
	\begin{table*}[h]
	
	\centering
	\label{Tbl1}
	\caption{A concise summary of distributed MST algorithms}
	\begin{tabular}{ c c c c c} 
		\toprule
		Reference & Time & Message & Memory & Computation \\ [0.5ex] 
		\midrule
		Gallager et al. \cite{GHS83} & $O(n \log n)$ & $O(m+ n\log n)$ & $O(\log n)$ & Deterministic \\ 
		Kutten and Peleg \cite{KP1998fast} & $\tilde{O}(D+\sqrt{n})$ & $O(m+n^{\sfrac{3}{2}})$ & $O(\sqrt n)$ & Deterministic \\ 
		Pandurangan et al. \cite{PRS17} & $\tilde{O}(D+\sqrt{n})$ & $\tilde{O}(m)$ & $O(\sqrt n)$ & Randomized  \\
		Elkin \cite{E20Simple} & $\tilde{O}(D+\sqrt{n})$ & $\tilde{O}(m)$ & $O(\sqrt n)$ &Deterministic \\
		Haeupler et al. \cite{haeupler2018round} & $\tilde{O}(D+\sqrt{n})$ & $\tilde{O}(m)$ & $O(\sqrt n)$ & Randomized, Deterministic \\
		\textbf{Our algorithm} & $\tilde{O}(D+\sqrt{n})$ & $\tilde{O}(m)$ & $O(\log^2 n)$ & Deterministic \\
		\bottomrule
	\end{tabular}
\end{table*}
 
However, the \textit{memory complexity} of these algorithms \cite{PRS17, E20Simple, haeupler2018round} is not analyzed, and can in fact be quite large. Unfortunately, this is also the case for the algorithms of Garay et al. \cite{gkp} and of Kutten and Peleg \cite{KP1998fast} (which are time-efficient but not message-efficient).

This leads to the following open problem: 
\\\\
\textbf{Open problem:} \textit{Does there exist an algorithm for the MST problem which is simultaneously near-optimal in terms of time, message and memory complexity?}
\\

We answer this question in the affirmative, and devise a deterministic algorithm with running time $O((D + \sqrt n)\cdot \log^2 n \cdot \log^* n)$, message complexity $O((m + n \log n)\cdot \log n)$ and memory complexity $O(\log^2 n)$. We also show that this result applies to the more general \textit{partwise aggregation} problem (see Section  \hyperref[2.2_PWA]{2.2} for its definition). This problem was introduced by Haeupler et al. \cite{haeupler2016low, haeupler2018faster, haeupler2018round}, and was shown to be a powerful abstraction of many other important classical distributed problems.

\subsection{Related Work}
\label{1.2_Related}

\subsubsection{Memory Complexity}
\label{1.2.1_MemCplx}
Memory efficiency is a central consideration in the context of self-stabilizing distributed algorithms \cite {awerbuchVarghese91, awerbuchOstrovsky1994memory, datta2008self, KKM15_SelfStabilizing}. Self-stabilizing algorithms for MST with small memory complexity were devised in \cite {higham2001self, gupta2003self, tixeuil2009self, blin2013fast, KKM15_SelfStabilizing}. However, all these algorithms require $\Omega(n)$ time. Memory-efficient distributed algorithms were also explored for the closely related problem of \textit{MST verification} \cite{awerbuchVarghese91, korman2007controller, kor2011tight, sarmaEtAl12_distributedVerification, KKM15_SelfStabilizing}. Solutions for this problem typically consist of a pair of distributed algorithms: the \textit{marker} algorithm that assigns labels to vertices, and the \textit{verifier} algorithm that uses these labels to validate whether the input subgraph is an MST. Korman et al. \cite{KKM15_SelfStabilizing} demonstrated that both these tasks can be efficiently accomplished using small memory. 

Another distributed problem for which memory efficiency is crucial is the \textit{routing} problem \cite{awerbuch1990improved, awerbuch1992routing, gavoille2013communication,lenzen2013fast, lenzen2015fast, elkin2018near, elkin2018efficient, lenzen2019distributed}. Similarly to the MST verification problem, the routing problem also involves two stages. The first is a  preprocessing stage, during which routing tables and labels are computed by a distributed algorithm. Next, there follows a routing stage. In \cite{elkin2018efficient, elkin2018near, lenzen2019distributed} both these tasks are implemented efficiently, while using small memory.

Recently Ben Basat et al. \cite{BCCHLS25} explored memory efficiency in the context of clique-listing algorithms. They motivate the importance of designing memory-efficient algorithms by the need to deploy them in modern communication environments, where vertices are severely limited in terms of their computational resources.

Emek and Wattenhofer \cite{emek2013stone} considered an even more memory-restrictive model than ours, or than any of the aforementioned papers. While we allow polylogarithmic memory, they require \textit{constant} space complexity. Their model is motivated by wireless and ad-hoc networks, as well as by networks of biological cells. To be meaningful, this model requires randomization, and they indeed devise very elegant and efficient randomized algorithms for a number of core symmetry-breaking problems, such as MIS and 3-coloring of oriented trees. 
Memory-efficient algorithms for distributed MST were also devised in the MPC model. See \cite{azarmehr2025massively}, and the references therein.

\subsubsection{Additional Related Work}
\label{1.2.2_MST}
We refer to an excellent survey of Pandurangan et al. \cite{P18SURVEY} for a thorough overview of the enormous body of research concerning the distributed MST problem. Randomized non-comparison-based MST algorithms that apply to a slightly less stringent $KT_1$ model (in which at the beginning of the computation every vertex knows the IDs of its neighbors) with message complexity $\tilde{O}(n)$ were devised in \cite{king2015construction, mashreghi2021broadcast, gmyr2018time}. Their time and memory complexities are, however, quite large. Approximation algorithms and hardness of approximation of distributed MST was studied in \cite{elkin2006unconditional, sarmaEtAl12_distributedVerification, khan2008fast, elkin2014can}. A time- and message-efficient (but not memory-efficient) distributed MST algorithm that works in the asynchronous model was devised by Dufoulon et al. \cite{dufoulon2022almost}.

\subsection{Technical Overview}
\label{1.3_TechOvViw}

The algorithms of Garay et al. \cite{gkp} and of Kutten and Peleg \cite{KP1998fast} (that achieve sublinear time in $n$ for $D = o(n)$) compute a significant part of the solution using a supergraph in which fragments of the MST constitute vertices. Vertices upcast messages about the edges of this supergraph over an auxiliary BFS tree $T$, which is computed at the preprocessing stage of these algorithms. As a result, vertices need to store information about $\Omega(\sqrt n)$ fragments.

This high memory requirement is inherited by the first two time- and message-optimal MST algorithms, which are due to Pandurangan et al. \cite{PRS17} and Elkin \cite{E20Simple}. Roughly speaking, the algorithms of \cite {gkp, KP1998fast, elkin2006faster, PRS17, E20Simple} consist of two stages. First, a version of the GHS algorithm is executed, up until fragments become too large. At this point, the fragments are called \textit{base fragments}, and the algorithms process them in the second stage. In both the algorithms of \cite{PRS17, E20Simple}, the root $r$ of $T$ stores some information about each of the base fragments. As the number of base fragments may be $\Omega(\sqrt n)$, this is also a lower bound on the memory complexities of these algorithms.

The algorithm of Haeupler at al. \cite{haeupler2018round} does not suffer from this problem. In this algorithm the base fragments (and the fragments that are created from them) continue running a version of GHS. However, when fragments search for minimum-weight outgoing edges (henceforth, MWOEs), communication is no longer performed over the MST fragments themselves, but rather it is carried out over a \textit{shortcut structure} \cite{haeupler2018round, ghaffari2016distributed, haeupler2018faster, ghaffari2021low, haeupler2016low}. For general graphs, the worst-case shortcut structure reduces to the auxiliary BFS tree $T$ that was discussed above. 

While this reduces the load placed on the root $r$ of $T$, there are other memory-consuming parts in the algorithm of \cite{haeupler2018round}. In particular, all fragments interconnect via subtrees of the same BFS tree $T$, which may overlap. In the worst case, this causes large congestion, of order $\Theta(\sqrt n)$. As a result, vertices need to store up to $\Theta(\sqrt n)$ enqueued messages in their local memory. Haeupler et al. \cite{haeupler2018round} provide a randomized solution to this problem, ensuring that buffers will be of size $O(\log n)$ at all time, with high probability. The deterministic version of their algorithm does, however, suffer from this problem of congestion. 

Secondly, the algorithm of \cite{haeupler2018round} handles broadcasts as symmetrical to convergecasts. To broadcast a message from the root $r_F$ of a fragment $F$ to its vertices $v^F_1, v^F_2, \ldots$, the message needs to pass through vertices in the shortcut structure. All intermediate vertices on the path $r_F$ - $v^F_i$ need to remember routing information that is specific to $F$. A vertex needs to remember a different piece of information for every overlapping shortcut structure that it is a part of. As a result, the memory complexity of the algorithm of \cite{haeupler2018round} (both of its deterministic and randomized versions) is $\Tilde{O}(\sqrt n)$.

To overcome these hurdles, we devise a deterministic time-, message- and memory-efficient \emph{Communication Cycle protocol}, that enables multiple vertex-disjoint subgraphs to communicate with one another in a pipelined manner. The execution of the second part of the algorithm proceeds in phases, mirroring the phases of the GHS algorithm. At the beginning of each phase, our algorithm distributes \textit{time slots} to communicating fragments. Our analysis guarantees that as long as communication cycles employ these time slots, no congestion occurs. (This is achieved \textit{deterministically}, while the logarithmic bound on the congestion of \cite{haeupler2018round} holds with high probability!)
 
The communication cycle protocol maintains message efficiency by sending broadcast messages through a path from the root $r_i$ of a fragment $F_i$ to the BFS tree root $r$, and from $r$ to the roots of the base fragments $B_{i_1},B_{i_2},\ldots$ contained in fragment $F_i$. Analogously, convergecast outcomes are sent from the root $r$ to the root $r_i$ of each fragment $F_i$. A similar approach was used in \cite{E20Simple}, but here we adapt it to the more stringent memory-restrictive model. To direct messages from $r$ to the correct fragment roots, or base fragment roots, we employ Thorup-Zwick's tree routing mechanism \cite{TZ01Compact}. During preprocessing, we build and distribute tree routing tables and labels. For every slot set, once slots are in place, we use the communication cycle protocol to distribute the routing labels of the roots of communicating fragments. As a result, we are able to route messages via designated shortest $r$-$r_i$ paths, without increasing either memory or message complexity.

Another challenge arises as the root $r$ of $T$ cannot store messages designated to multiple fragments. Consider a situation when we have multiple fragments $F_1, F_2,\ldots,F_t$ rooted at vertices $r_1,r_2,\ldots,r_t$, respectively, and each fragment $F_i$ consists of base fragments $B_{i_1}, B_{i_2}, \ldots$ rooted at vertices $r_{i_1},r_{i_2},\ldots$, respectively. Now each root $r_i$ of $F_i$ ($i\in [t]$) needs to broadcast a message $m_i$ to all roots $r_{i_1},r_{i_2},\ldots$ of the base fragments that comprise $F_i$. The solution employed by \cite{elkin2006faster, PRS17, E20Simple} is to first perform a pipelined convergecast of all messages $m_i$ to the root $r$, and then broadcast these messages to the respective roots of base fragments. This approach, however, is not memory-efficient, as the root $r$ will need to store all these messages before it can dispatch any of them. 

Our algorithm, on the other hand, interleaves convergecasts and (targeted) broadcasts through a carefully designed scheduling mechanism. The scheduling is implemented by arranging time slots in so-called \textit{publish-query} sets. The schedule ensures that immediately after the root $r$ receives message $m_1$ from the root $r_1$ of fragment $F_1$, $r$ obtains the queries from the roots $r_{1_1},r_{1_2},\ldots$ of the base fragments $B_{1_1},B_{1_2},\ldots$ that comprise $F_1$. Crucially, these queries contain the base fragments' routing labels, and thus the root $r$ does not need to store them. Upon receiving a query, the root $r$ sends the message $m_1$ to the base fragment that has sent this query. Once queries from base fragments that comprise $F_1$ are exhausted, the root $r$ removes $m_1$ from its local memory. Only then does the root $r$ receive the message $m_2$ from $r_2$, and afterwards queries from $r_{2_1},r_{2_2},\ldots$, and so on. As a result, the root $r$ is never required to store more than one message in its local memory.

\subsection{Structure of the Paper}
\label{1.4_StructPaper}
In Section \hyperref[2_Prelims]{2} we give basic definitions, and explain some concepts that will be used later. In Section \hyperref[3_Convergecast]{3}, we provide a description of the convergecast variant of the communication cycle protocol. In Section \hyperref[4_Broadcast]{4}, we describe the more complicated broadcast variant of the protocol. In Section \hyperref[5_MSTConstruction]{5} we show how the communication cycle mechanism is used to carry out MST construction. Appendix \hyperref[Appendix.A_ExistingAlgs_MemCplx]{A} provides detailed analysis of the memory complexity of elements of MST construction algorithms used in this work. Appendix \hyperref[Appendix.B_Analysis]{B} provides the details of analysis for the communication cycle procedure.
\section{Preliminaries}
\label{2_Prelims}

\subsection{The Computational Model}
\label{2.1_The computational Model}

We consider the standard CONGEST clean-network (aka $KT_0$) model (see \cite{peleg2000distr}, Ch. 5), but every vertex $v$ has memory that is polylogarithmic in $n$ (specifically, $O(\log^2 n)$ bits). It also has $O(deg(v))$ read-only memory for storing its neighbors and outgoing ports leading to those neighbors. A vertex also has buffers in each of its ports, which store incoming messages. A vertex is allowed to read these messages as long as they are not overwritten by more recent messages, and as long as its local memory is not full.

In this model, in order to store an MST, or in fact any subgraph $H = (V,E_H)$ of the input graph $G=(V,E)$, we require that every edge $e\in E_H$, $e=(u,v)$ is recorded by either endpoint $u$ or $v$ as belonging to $H$. In our algorithm, every vertex ends up storing up to $O(\log n)$ edges of the MST, even though its degree in the MST may be much higher.

This model is the same as in \cite{KKM15_SelfStabilizing, lenzen2013fast,lenzen2015fast, elkin2018efficient, elkin2018near}. Some other works, such as \cite{BCCHLS25}, allow at least $\tilde{O}(deg(v))$ memory per vertex. We note that previous time and message-optimal algorithms for MST \cite{PRS17, E20Simple,haeupler2018round} (and also time-optimal algorithms with higher message complexity \cite{gkp, KP1998fast}) require $\Omega(\sqrt n)$ memory complexity even on graphs with constant maximum degree.

\subsection{The Partwise Aggregation Problem}
\label{2.2_PWA}
Partwise aggregation (hereafter, PWA), formulated in \cite{haeupler2018round}, is a generally stated primitive of distributed computation, which appears in numerous distributed problems. The input of the problem is a graph $G = (V,E)$, a partition $\mathcal{P}=\{P_i\}_{i=1}^{i=N}$ of $V$ such that every part $P_i$ induces a connected subgraph, and a commutative associative function $f$. Every $v\in V$ has some input $x_v$ for function $f$. The problem is considered solved when, for every part $P_i$, the aggregate of function $f$ over all inputs of vertices $v\in P_i$ is calculated, and known to each vertex $v\in P_i$. Finding the MWOEs of fragments can be viewed as an instance of PWA. Every vertex $v$ in a fragment $F$ submits its local minimum-weight edge, and the minimum function is calculated, to find each fragment's minimum.

\subsection{MST-construction Algorithms}
\label{2.3_MSTAlgorithms}

We describe aspects of existing algorithms for constructing MST which we use in this paper.  
\subsubsection{The GKP phase}
\label{2.3.1_GKP}

The GHS algorithm \cite{GHS83} is a parallelization of the Bor\r{u}vka algorithm. We describe it in detail and establish its memory complexity in Appendix \hyperref[Appendix.A.2.1_GHS]{A.2.1}. The modified, more time-efficient version of the GHS phase devised by Garay et al. \cite{gkp} increases both the minimum fragment size and the maximum fragment depth by constant factors. It requires an additional step, after finding the MWOEs. Virtual trees, whose nodes are fragments and whose edges are MWOEs, are partitioned into subtrees, and the subtrees merge. The additional step entails executing the following distributed algorithms, over the virtual trees:
\begin{enumerate}
	\item Computing a proper 3-coloring.
	\item Computing a maximal matching, using the coloring.
	\item Unmatched fragments find an adjacent matched pair and join it.
\end{enumerate}

Once these steps are complete, every matched pair and its attached fragments form a subtree of depth $O(1)$, and every fragment belongs to a subtree of size greater than one. 

We state the following claim without proof:

\begin{Claim}
	\label{C2.1_GKP_MsgTypes}
	\cite{gkp} The messages exchanged between fragments during this additional step belong to one of three types:
\begin{enumerate}
	\item Each parent sends a uniform message to all its children.
	\item Each parent sends a message to one arbitrary child.
	\item Each child fragment messages its parent fragment.
\end{enumerate}
\end{Claim}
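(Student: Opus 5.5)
The plan is to verify Claim~\ref{C2.1_GKP_MsgTypes} by stepping through the three sub-steps of the additional GKP step: 3-coloring, maximal matching, and joining. For each sub-step I will list every inter-fragment message and check that it fits one of the three types. I will fix the standard implementation on a rooted virtual tree. Each fragment $F$ knows its MWOE, and the MWOE orients the virtual tree: the fragment whose MWOE leads to $G$ treats $G$ as its parent. In each Bor\r{u}vka tree, the unique pair of fragments that chose the same edge is broken by ID, and the smaller one becomes the root.

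For the 3-coloring I will use the Cole--Vishkin deterministic coin tossing on rooted trees, followed by the standard shift-down reduction from $6$ to $3$ colors.
\begin{itemize}
\item In each Cole--Vishkin iteration, a fragment only needs its parent's current color. Each parent therefore sends its color to all its children, which is a uniform message of type~1.
\item A shift-down step also sends a uniform message from each parent to all its children (type~1).
\item Recoloring one color class only requires a vertex to know its parent's color, which it has, and the colors of its children. After a shift-down all children share the same color, namely the vertex's own old color. So no child-to-parent information is needed.
\end{itemize}

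For the maximal matching I will iterate over the three color classes. In round $c$, each unmatched fragment of color $c$ proposes to one arbitrary unmatched child, which is a message of type~2. Beforehand, it must learn which of its children are unmatched. That status flows from child to parent (type~3). Alternatively, each child could announce matched or unmatched status upward, also type~3. The child then accepts, and the parent learns whether its own parent matched it via a uniform broadcast, "I matched child $x$ / I am matched", which is type~1.

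The delicate point is how a proposal to \emph{one arbitrary child} is realised. A fragment cannot name a child it does not distinguish. I will handle this by having the parent pick the child whose type-3 report reached it first, or the one with the smallest ID. Exactly one child then receives the proposal, which is what type~2 formalizes.

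For the join step, an unmatched fragment $F$ must find an adjacent matched fragment.
\begin{itemize}
\item By maximality, either its parent or some child is matched.
\item If the parent is matched, $F$ sends a join request upward (type~3).
\item Otherwise some child is matched. $F$ learns this from child reports (type~3) and attaches by notifying one such child (type~2).
\end{itemize}
The resulting merged subtrees then have depth $O(1)$, as stated.

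I expect the main obstacle to be the coloring reduction: making sure that no step secretly needs a parent to send \emph{different} information to several children, or to aggregate children's colors. I will rely on the shift-down trick to rule this out, since it makes all of a vertex's children monochromatic. Since the claim is cited from \cite{gkp}, the proof amounts to this case audit of their procedure.
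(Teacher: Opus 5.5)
Your audit is correct, and it goes further than the paper. The paper states this claim without proof, deferring to \cite{gkp}. Its only support is the later remark in Section 5.2.3 that child-to-parent messages serve merely to count children or to signal that some child exists, so they can be combined by summation and AND/OR. You instead fix a concrete implementation and check every message. The repeated shift-down is exactly what removes any need for a parent to send child-specific data or to collect its children's colours. Your separate treatment of an unmatched root also covers a case the paper's phrase ``attaching unmatched nodes to a parent'' glosses over: the root has no parent, so it attaches to one of its children, all of which are matched by maximality. What the paper's framing buys, and what you should make explicit, is the form in which Claim 5.3 later simulates these message types. Type 3 is a child-to-parent round followed by semigroup aggregation, and type 2 is ``children query, the parent elects one query''. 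Your uses fit this: ``some child is unmatched'' and join requests are OR/sum aggregates, and the proposal to an unmatched child becomes ``unmatched children query, the parent elects the minimum ID''. However, drop the alternative of choosing the report that ``reached it first'', since all reports arrive in the same virtual round and are only seen in aggregated form. Two small precisions remain. First, the shift-down must be redone before each of the three recolourings, because after a colour class is recoloured the children of a vertex need not be monochromatic. Second, a vertex about to recolour must be told its parent's post-shift colour, which is one more uniform type-1 message rather than something it already holds.
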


The memory complexity of the GKP phase is $O(\log n)$ (see Appendix \hyperref[Appendix.A.2.2_GKP]{A.2.2}).
\subsubsection{Lenzen's phase}
\label{2.3.2_LenzenGKP}
In the GKP phase, the minimum size of fragments grows by a factor of at least 2, while the maximum diameter of fragments grows by a factor of at most 3. The version of the GKP phase described by Lenzen \cite{Lenzen2016}, Ch. 6 (see also Section 4 of \cite{E20Simple}) ensures that both these factors are equal to 2. Time- and message-efficient algorithms of \cite{E20Simple, PRS17, haeupler2018round} rely on Lenzen's routine, to ensure that at the end of the first stage of the algorithm there are at most $O(\min(\sqrt{n},\sfrac{n}{D}))$ base fragments, with maximum diameter $k=O(\max(\sqrt{n},D))$. We also use Lenzen's phase in the first stage of our algorithm to guarantee these bounds of $O(\min(\sqrt{n},\sfrac{n}{D}))$ and $O(\max(\sqrt{n},D))$ on the number of base fragments and their maximum diameter, respectively.
The main difference between Lenzen's phase and the GKP phase is that in the former only fragments with small diameter compute MWOEs and request to merge with their MWOE-adjacent neighbors. (Fragments of large diameter may accept merge requests from neighbors, though.) As we argue in Appendix \hyperref[Appendix.A.2.3_Lenzen]{A.2.3}, Lenzen's phase can be readily implemented with small memory. 

In the second stage of our algorithm we employ the simpler GKP phase (as opposed to Lenzen's phase), as at that point we no longer need to guarantee that the product of the number of fragments and their maximum diameter is linear in $n$. 

\subsubsection{Message-Efficient Algorithms}
\label{2.3.3_MsgEfficient}
The deterministic time- and message-efficient algorithm devised by \cite{E20Simple} (see also \cite{PRS17} for an earlier randomized algorithm) guarantees nearly-tight upper bounds on both time and message complexities. The algorithm executes Lenzen phases until it reaches a point of transition, which is selected depending on both graph diameter $D$ and vertex number $n$. We fix a parameter $k=\max {\{D,\sqrt{n}\}}$. Bor\r{u}vka-style fragment-merging phases are carried out directly over the MST-fragments, until fragment diameter is $O(k)$. The size of fragments at this point is $\Omega(k)$ and so the number of fragments is $N_b=O(\sfrac{n}{k})$. Fragments are then fixed, and thereafter they are called \textbf{base fragments}. Next, Bor\r{u}vka-style fragment-merging phases continue, but communication occurs in two parts - within base fragments, and between base fragments. We state the following claim without proof. See \cite{E20Simple} for more details:
\begin{Claim}
	\label{C2.4_E20_MsgCplx}
	\leavevmode
	\begin{enumerate}
		\item The message complexity of the first part of a phase (i.e., within base fragments) is $O(n)$.
		\item 	The message complexity of the second part of a phase (between base fragments) is $O(D\cdot N_b)=O(n)$.
	\end{enumerate}
	
\end{Claim}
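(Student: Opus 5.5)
The claim has two parts, and I would handle them separately.

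\textbf{Part 1: communication inside base fragments.} Each base fragment $B$ is a subtree of the MST, so it has $|B|-1$ tree edges. In one phase, communication within $B$ consists of a constant number of broadcasts and convergecasts over these edges. The root of $B$ receives the fragment-level message and floods it down, and the vertices then convergecast their local candidate MWOE up to the root. Each such operation sends $O(1)$ messages per tree edge, hence $O(|B|)$ messages in total.

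We also need each vertex to find its local minimum-weight outgoing edge. This uses the standard GHS test-edge technique: edges are probed in increasing weight order, and each probed edge is either rejected permanently or becomes the candidate. Rejections are charged globally, for $O(m)$ messages over the whole run. Within a single phase, each vertex additionally sends $O(1)$ accepted or pending test messages, giving $O(n)$ per phase.

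Summing over the disjoint base fragments gives $\sum_B O(|B|) = O(n)$ messages per phase, excluding the globally amortized rejections. This is the accounting convention of \cite{E20Simple}.

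\textbf{Part 2: communication between base fragments.} This traffic goes through the auxiliary BFS tree $T$, which has depth $O(D)$. Each base fragment root $r_B$ sends its aggregated value up $T$ toward $r$ and later receives the fragment-level outcome back down. Routed along a single root-to-$r$ path, each of these transmissions costs $O(D)$ messages. A constant number of such transmissions per base fragment per phase yields $O(D\cdot N_b)$ messages.

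The main obstacle is making sure messages are not duplicated along shared paths of $T$. The way around it is that each message is sent point-to-point along one path, and is never flooded over all of $T$. Each message therefore costs exactly its path length, at most $D$.

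\textbf{Bounding $D\cdot N_b$.} Since $N_b = O(n/k)$ with $k=\max\{D,\sqrt n\}\ge D$, we get
\[
D\cdot N_b = O\!\left(\frac{nD}{k}\right) = O(n).
\]
Combining the two parts gives the claim.
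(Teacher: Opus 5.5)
Your argument is correct and matches the reasoning the paper defers to \cite{E20Simple} (the paper states this claim without proof) and uses itself in Claims B.4 and B.8: broadcasts and convergecasts over the disjoint base-fragment trees cost $O(n)$, and $O(1)$ messages per base fragment along $T$-paths of length at most $D$ cost $O(D\cdot N_b)=O(n)$ since $k\ge D$. Drop the GHS test-edge paragraph: it is unnecessary, and it is not the convention here or in \cite{E20Simple}. The paper explicitly forgoes rejection bookkeeping (Appendix A.2.1) and pays a separate $O(m)$-message candidate-discovery round every phase, which is the source of the $m\log n$ term, so the claim's first part covers only the tree communication inside base fragments.
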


\subsection{Generalizing MST to MSF}
\label{2.4_toMSF}
The MST problem can be stated as a special case of the MSF (minimum spanning forest) problem. Given a graph $G=(V,E)$, a weight function $w(e)$, and a subgraph $H=(V,E_H)$ such that $E_H\subseteq E$, find a forest of minimum spanning trees that span the connected components of $H$. Our algorithm can be extended to solve the more general MSF case. In order to adapt Bor\r{u}vka-style fragment-merging solutions, in general, we substitute weight function $w(e)$ for a modified weight function $w_H(e)$ \footnote{We assume that $E_H$ is stored compactly, so that any $e\in E_H$ is known as such to one endpoint. Any endpoint $v$ of edge $e=(u,v)$ can locally derive $w_H(e)$ after one round of communication with $u$. Thus, no additional memory complexity is required to store $w_H(E)$.}:

\begin{equation}
\label{eqn_subgraph_weights}
	w_H(e) = 
\begin{cases}
	w(e) & e\in E_H\\
	\infty & e\notin E_H
\end{cases}
\end{equation}

If $H$ has more than one connected component, we are required to handle the case where a fragment $F$ reaches its maximum size, and cannot find MWOE-candidates belonging in $E_H$. If this is the case, then the fragment $F$ is required to stop searching for MWOEs. Once all fragments have terminated in this way, the algorithm is completed, and the remaining terminated fragments are the forest of minimum spanning trees.

Note that storing the input graph $H$ may require vertex memory above the polylogarithmic upper bound that we seek to establish for our algorithm. For the sake of our analysis, we assume that this memory is available in read-only regime. However, even though we solve the PWA problem via a reduction to the MSF problem, the resulting algorithm for the PWA problem does not require vertices to store any additional subgraph $H$ (see Corollary \hyperref[C5.9.1_PWA]{5.9} for details).

\section{Performing a Convergecast through a Communication Cycle}
\label{3_Convergecast}
In this section we describe the \textbf{communication cycle}, a procedure that enables each one of $K$ disjoint fragments $F_1,\ldots,F_K$ to perform a convergecast calculation on a set of inputs distributed among its vertices. Each fragment $F_i$ will then send the outcomes to some vertex $r_i\in F_i$ designated as $F_i$'s leader.

Suppose that we have $K$ disjoint fragments $F_1, \ldots, F_K$, and that each fragment $F_i$ is partitioned into connected subsets $\mathcal{B}^i=\{B^i_1,B^i_2,\ldots\}$, $\bigcup_{B\in \mathcal{B}^i}{B} = F$, $ \bigcap_{B\in \mathcal{B}^i}{B} = \emptyset$. Every subset $B$ has a spanning tree $T_B$ with diameter at most $D_b$. We denote the overall number of such connected subsets in $G$ from all fragments $K_b = |\bigcup_{i=1}^K{\mathcal{B}^i}|$. The communication cycle procedure enables all fragments to perform a convergecast calculation on the respective inputs of their vertices. Next, the procedure enables all fragments to deliver their respective outcomes to designated leader vertices. The communication cycle requires $O((D+D_b+K)\log n)$ rounds and $O((D_b\cdot K)\log n)$ messages to carry out, and makes use of $O(\log^2 n)$ bits of memory for each vertex $v\in V$. 

\subsection{Prerequisites of a Convergecast Communication Cycle}
\label{3.1_preq}
Within the communication cycle, the convergecast calculation is carried out in two steps. In the first, \textbf{intrinsic} step, the subsets hold convergecast calculations simultaneously, over their disjoint spanning trees. In the second, \textbf{extrinsic} step, subset roots participate in a convergecast, in a pipelined fashion, over a BFS tree $T$ spanning the entire graph $G$. 

In order to carry out the intrinsic step, each subset $B$ needs a spanning tree $T_B$. For the extrinsic step, we require each vertex $v\in V$ to know its depth $d(v)$ in $T$, and $T$'s overall depth $d(T)$. All vertices discover these values during the computation of $T$. This calculation requires $O(D)$ rounds and $O(n)$ messages. Additionally, in order to carry out the extrinsic step, every fragment $F_i$ is required to be associated with a distinct number $i\in [1,K]$. We will refer to this number as a \textbf{time slot}, or \textbf{slot number}. The vertices contained in $F_i$ are required to know the value of $i$. For a vertex $v\in F_i$, its value $i$ identifies the fragment, and also indicates the position of $F_i$ in the pipeline of convergecast calculations.

In the course of the extrinsic step, for each fragment $F_i$ the final result of the convergecast is calculated at the root of $T$, and routed to $F_i$'s designated leader vertex $r_i$. In order to accomplish this, we use Thorup-Zwick's tree routing scheme (see Appendix \hyperref[A.5.2_TZ_Tree_Routing]{A.5.2}). We require that at the outset, for every fragment $F_i$, every vertex in $F_i$ knows $r_i$'s routing label, $lbl_i$.

We note that labels have size $O(\log^2 n)$ \cite{TZ01Compact}. In the extrinsic step, we send routing labels as part of messages. Because we are in the CONGEST setting, a message carrying a label would require $O(\log n)$ rounds to deliver, and $O(\log n)$ messages. We therefore multiply time complexity and message complexity by a factor of $\log n$. See Remark \hyperref[B.1.Remark_Convt_MemCplx_wrt_storingPartialResults]{B.1} in Appendix \hyperref[Appendix.B_Analysis]{B} for more details.

\subsection{Description of a Single Convergecast Cycle}
\label{3.2_description}
The intrinsic step takes place within the subsets. Subsets simultaneously hold convergecast calculations over their respective disjoint spanning trees. The intrinsic step requires at most $D_b$ rounds, and $O(n)$ messages overall. At the end of the intrinsic step, in every subset $B$, the root vertex $r_B$ of spanning tree $T_B$ of $B$ knows the partial convergecast outcome over the inputs from vertices of $B$. 

The extrinsic step is carried out over the BFS tree $T$ of the entire graph. The extrinsic step requires $2d(T)+K$ rounds to execute. During this time, convergecast calculations are held over $T$ for all $K$ fragments, in a separated and pipelined fashion. For every fragment $F_i$, root vertices of its subsets $B^i_1,B^i_2, \ldots \in\mathcal{B}_i$ inject their partial results into the appropriate convergecast calculation. 

We use time slot numbers to separate the calculations of different fragments into distinct tree depths. Separation of time slots enables us to avoid congestion, and the memory costs of queuing. Separation also removes the need for vertices to store partial results for different fragments.

Each fragment $F_i$ with a slot number $1\leq i\leq K$ has a \textbf{window of exclusive ownership} which travels up the BFS tree $T$. For every round $t\in\{i,i+1,\ldots,i+d(T)\}$, vertices at depth: 
\begin{equation}
	\label{eqn_slot_to_depth}
	dp(i,t)=d(T)+i-t
\end{equation}
handle only messages from fragment $F_i$ on this round, i.e., are \textbf{owned} by $F_i$ on this round. In particular, on round $t$, vertices at depth $dp(i,t)$ receive only messages associated with $F_i$. Any vertex $v$ at some depth $d(v)$ assumes that all the messages that it receives in round $t$ come from a single fragment $F_i$ associated with slot $i=d(v)-d(T)+t$. Thus, the vertex handles all messages together, aggregating them according to the convergecast function designated for the cycle. If a vertex $v$ with depth $dp(i,t)$ belongs to the fragment $F_i$, it adds any input it has to the calculation during round $t$. It then upcasts the result, in the same round. It follows that in round $t$, $dp(i,t)$-depth vertices also send only messages for fragment $F_i$. 

Finally, in round $i+d(T)$, the outcome for $F_i$ reaches $r_T$, the root of $T$ (note that $dp(i,i+d(T))=0$). The outcome is then sent from $r_T$ to $F_i$'s leader vertex $r_i$. The path is encoded in the label $lbl_i$, which is also delivered to $r_T$ during that round of the convergecast. A copy of the label $lbl_i$ is attached to every partial outcome message sent by a subset root, for every subset $B^i_j\subseteq F_i$.

Overall, every slot number $i$ is associated with a sequence of pairs $$\{\langle i+t,d(T)-t \rangle|t\in[0,d(T)]\}$$ of round and depth. This sequence describes the movement of the window of ownership from maximum depth $d(T)$ to tree root $r_T$. A corresponding sequence of pairs $\langle i+t,t-d(T)\rangle,t\in[d(T)+1,2d(T)]$ describes the downward movement of the returning message (see Fig. \hyperref[Fig1_SlotsPipelining]{1}).

\begin{figure}[h]
	\label{Fig1_SlotsPipelining}
	\begin{center}
		\includegraphics[width=\textwidth]{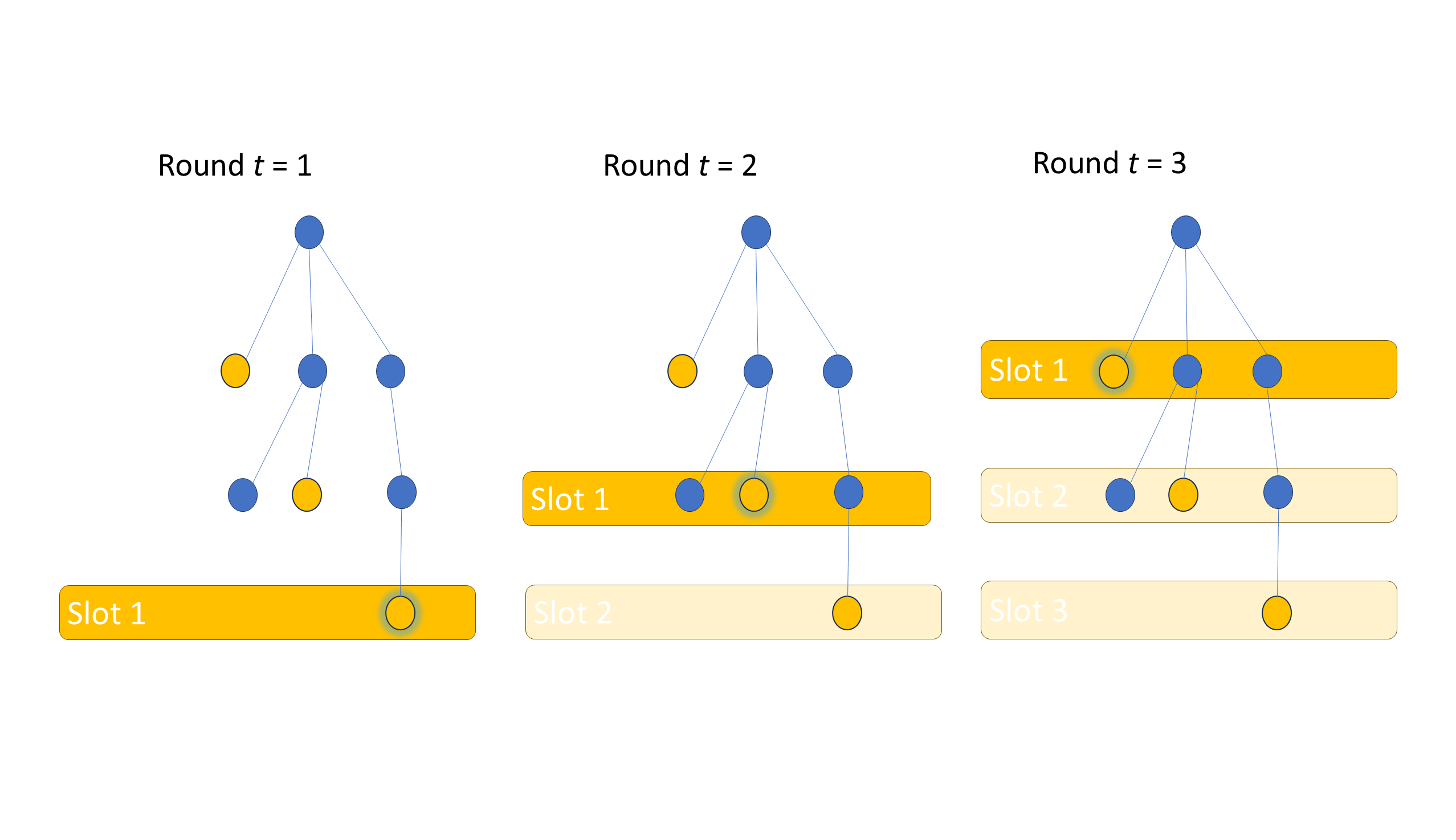}
		
		\caption[Progression of time slots through the BFS-Tree]{Progression of time slots through the BFS Tree $T$: The yellow vertices belong to the fragment $F_1$ associated with slot $1$. The overall tree depth is $d(T)$ (here $d(T)=3$). For each slot $i=1,2,\ldots$, during round $t$ the vertices at depth $d(T)-t+i$ upcast the messages of slot $i$. If a vertex $v$ is associated with slot $i$ and has depth $d(T)-t+i$, then $v$ injects its input into the convergecast during round $t$.}
		
		\Description{Three diagrams showing the same rooted tree graph, with labels indicating consecutive rounds 1, 2, 3. The tree has overall depth 4. At level 0 there is one root vertex, and at levels 1, 2, 3 there are several vertices. One vertex at each level is marked yellow. A yellow bar labeled ``slot 1" is drawn behind the tree in each of the three diagrams, indicating the vertices at depth 3, 2, 1 at rounds 1, 2, 3 respectively. In each diagram, the marked vertex at the depth level indicated by the bar is surrounded with a blue aura.}
	\end{center}
\end{figure}

A vertex $v\in V$ performs two tasks. As a vertex in the BFS tree $T$, it carries out the pipelined convergecasts. If in round $t$ vertex $v$ received input values from its $T$-children, then in that same round $v$ sends the output calculated from these values by the convergecast function $f$. If $v$ is also the root vertex of a subset $B\subseteq F_i$, then $v$ has another responsibility - to inject $B$'s partial convergecast result into the calculation at the correct round, $d(T)-d(v)+i$. In Algorithm \hyperref[Alg1_convergecast]{1} we provide the pseudocode.

\begin{algorithm}
	\label{Alg1_convergecast}
	\caption{Convergecast cycle, a single round of the extrinsic step}
	\begin{algorithmic}[1]
		\Require Vertex $v$ belongs to a base fragment $B\in \mathcal{B}_i$. Convergecast function is $f$

		\LeftComment{Vertex $v$ initializes values $output$, $lbl$:}
		\If{Current round $t$ equals $d(T)-d(v)+i$ AND $v$ is the root vertex of $B$}
		\State Set $output$ to the partial outcome of $B$; set $lbl$ to $lbl_i$.
		\Else
		\State Set $output$ to $\{\}$; set $lbl$ to $\{\}$.
		\EndIf
		
		\LeftComment Vertex $v$ aggregates values from $T$-children:
		\For{each input message from child $u$ received $\langle inp_u,lbl_x\rangle$}
		\If{$inp_u\neq\{\}$}			
		\If{$output \neq  f(output,inp_u)$}
		\State Update $output \coloneq f(output,inp_u)$
		\State Update $lbl\coloneq lbl_x$.
		\EndIf
		\EndIf		
		\EndFor
		
		\State Send message $\langle output, lbl\rangle$ to $T$-parent of $v$.
		
	\end{algorithmic}
\end{algorithm}

Our analysis shows that the algorithm guarantees that, for each fragment $F_i$, all messages originating from $F_i$ are handled by vertices at a particular depth of $T$ at every round. It also ensures that for every round, the messages of different fragments are handled at different depths of $T$ (see Appendix \hyperref[B.1_analysis]{B.1}, Claim \hyperref[CB.1_CycleSeparatesFrags]{B.1}).

The messages that are upcast during the extrinsic step of the communication cycle consist of two items of information. The first item is a result of partial convergecast, and the second item is a routing label. The messages associated with fragment $F_i$ are partial outcomes from the roots of subsets contained in $F_i$, or are aggregated from partial outcomes. The roots of subsets contained in $F_i$ have the routing label of $F_i$'s designated leader vertex $r_i$, and append this label to messages that they compose. In the MST construction algorithm that will be described in Section \hyperref[5_MSTConstruction]{5}, the leader will be the root vertex of some subset within $F_i$. For the purpose of the convergecast cycle, it can be any vertex in $V$, and we require only that the label is consistent across $F_i$. Any vertex $v$ that aggregates outcomes from messages received in this round will copy the label from one of these messages.

Once the final aggregated message $msg_i$ for $F_i$ has reached $T$-root $r_T$, it is routed from $r_T$ to $r_i$ according to $lbl_i$. This path requires up to $d(T)$ steps and up to $d(T)$ messages. All messages are sent down the tree $T$, in consecutive rounds, and do not create congestion with upcasts, or with one another. When a convergecast cycle is completed, for each fragment $F_i$, the designated leader vertex $r_i$ of $F_i$ receives the aggregated outcome of convergecast function $f$ over all the inputs of vertices $v\in F_i$ (see Claim \hyperref[CB.2_ConvCastCorrectness]{B.2}). A convergecast communication cycle has time complexity of $O((d(T)+K)\log n)$ rounds (see Claim \hyperref[CB.3_ConvCastTimeCplx]{B.3}), message complexity of $O((d(T)\cdot K)\log n)$ (see Claim \hyperref[CB.4_ConvCastMsgCplx]{B.4}) messages, and memory complexity of $O(\log^2 n)$ bits (see Claim \hyperref[CB.5_ConvCastMemCplx]{B.5}).
%\subsection{Analysis}

\section{Performing a Broadcast through a Communication Cycle}
\label{4_Broadcast}
The broadcast cycle enables each fragment $F_i\in \mathcal{F}$ to transmit a message $msg_i$ from a leader vertex $r_i\in F_i$ to all vertices in $F_i$. We devise a method to send messages from the leader $r_i$ of fragment $F_i$ to all the vertices of $F_i$ using two steps, mirroring the convergecast cycle. First, we perform an extrinsic step, where we route the messages from each fragment leader $r_i$ to its respective subset roots. Next, we perform an intrinsic step, broadcasting the messages over the disjoint spanning trees of the subsets. In order to convey messages from fragment leaders to subset roots, we make use of routing labels. 

\subsection{Arrangement of Time Slots}
\label{4.1_slotOrdering}
A fragment $F_i$ can have up to $O(K_b)$ subsets, and its leader vertex $r_i$ needs to be able to broadcast a message to the leaders of all these subsets. Recall that $K_b = |\bigcup_{i=1}^K\mathcal{B}_i|$. Using just polylogarithmic memory per vertex, it is impossible to store the routing labels of all the subset roots in the memory of fragment leader $r_i$ simultaneously. Instead, we pair the outgoing messages with the routing labels of subset roots at the BFS tree root $r_T$. This is made possible by assigning additional time slots to subsets, and by arranging the time-slots of leader vertices and subsets in a particular order. 

We assign a time slot to each fragment and also to each subset. Hence, the overall number of required slots is $K+K_b\leq2K_b$. Suppose that a fragment $F_i$ consists of $k_i=|\mathcal{B}_i|$ subsets. We assign a sequence of $k_i+1$ consecutive slots to $F_i$,  $\{j_i,j_i+1,\ldots,j_i+k_i\}\subset\{1,\ldots,2K_b\}$. For every $i \in [K]$, denote by $j_i$ the slot associated with the fragment $F_i$. We give the first slot $j_i$ to the fragment as a whole, and store its value with the leader vertex $r_i$. The leader uses its slot to deliver the broadcast message to the BFS-root $r_T$. Next, subsets use their slots to send queries to $r_T$, requesting the message from their leader. These queries carry the respective routing labels of the roots of subsets. The BFS-root $r_T$ stores the message received from the leader of fragment $F_i$ for the following $k_i$ rounds, until a new message arrives from a different fragment leader.

We refer to the leader slot $j_i$, used for publishing the message $msg_i$, as a \textbf{publishing slot} or a \textbf{p-slot}. We refer to the next slots $\{j_i+1,\ldots,j_i+k_i\}$, used by subset roots to send queries, as \textbf{query slots} or \textbf{q-slots}. We call the whole range $\{j_i,j_i+1,\ldots,j_i+k_i\}$ a \textbf{p-q range}. The set of ranges for all the fragments $\mathcal{F}=\{F_1,\ldots,F_K\}$ in the graph $G$ constitutes a \textbf{p-q slot set}. In this section we will assume the slots are in place, correctly assigned to fragments and subsets in non-overlapping consecutive ranges. In Section \hyperref[5.3_SlotSetGen]{5.3} we describe a procedure that calculates and assigns slot sets at every phase of the algorithm, and show how this procedure is integrated in the MST construction.

\subsection{The P-Q Broadcast Cycle}
\label{4.2_pq_broadcast}
Next, we describe a broadcast cycle. First comes the extrinsic step, mirroring the extrinsic step of the convergecast cycle. The extrinsic step takes place in a pipelined fashion. As in the convergecast cycle in Section \hyperref[3.2_description]{3.2}, every time slot carries an upcast over BFS tree $T$, towards the root $r_T$. Slot uniqueness guarantees that messages originating in distinct slots are using distinct depths of $T$ in each round (see Claim \hyperref[CB.1_CycleSeparatesFrags]{B.1} for detailed proof). It follows that $r_T$ receives one message in each round $t=i+d(T)$, for $i\in[1,K+K_b]$. 

Every slot $j$ carries a message to $r_T$ in round $j+d(T)$. The message consists of two elements, type and content. The first element is a bit indicating the message type, either publishing (P) or query (Q). The second element varies according to type. For a message of type P sent by the leader vertex of $F_i$, the content is the message $msg_i$ that $F_i$ is broadcasting. For a message of type Q sent by a subset $B$, the content is the routing label $lbl_B$ of the root vertex $r_B$ of the subset $B$. 

The BFS-root $r_T$ handles a message of type P $\langle P, msg_i \rangle$ by storing $msg_i$ in a local variable $msg_T$, overwriting any message that was stored there previously. The root $r_T$ handles each query message $\langle Q, lbl_B \rangle$ by composing a response $\langle msg_T, lbl_B \rangle$ using the locally stored message, and sending it back down the BFS tree $T$, to be routed according to the label $lbl_B$. 

For every query from a subset root, there arrives a response within $2d(T)$ rounds of sending the query. Once all responses have reached their destination subset roots, in $2d(T) + K + K_b$ rounds, the extrinsic step is completed. Next, subsets carry out the intrinsic step. Subset roots initiate parallel broadcasts of their received responses over the spanning trees of their respective subsets.

When a broadcast cycle is completed, for each fragment $F_i$, if designated leader vertex $r_i$ sent a message then all vertices $v\in F_i$ receive it (see Claim \hyperref[CB.6_BCastCorrectness]{B.6}). A broadcast communication cycle has time complexity of $O((d(T)+K)\log n + D_b)$ rounds (see Claim \hyperref[CB.7_BCastCTimeCplx]{B.7}), message complexity of $O((d(T)\cdot K)\log n + n)$ (see Claim \hyperref[CB.8_BCastCMsgCplx]{B.8}) messages, and memory complexity of $O(\log^2 n)$ bits (see Claim \hyperref[CB.9_BCastCMemoryCplx]{B.9}).
%\subsection{Analysis of the Broadcast Cycle}

\section{Constructing MST via Communication Cycles }
\label{5_MSTConstruction}
Our algorithm for constructing MST, like the algorithms of \cite{gkp, KP1998fast,PRS17, E20Simple, haeupler2016low, haeupler2018round}, consists of two parts. The first part conducts Lenzen's variant of GKP (see Section \hyperref[2.3.2_LenzenGKP]{2.3.2}). In the first part, communication within and between fragments is carried out over the MST itself. The second part conducts basic GKP phases (see Section \hyperref[2.3.1_GKP]{2.3.1}): they proceed by finding MWOEs, partitioning the emergent virtual trees into subtrees with bounded depth, and merging the subtrees. 

In the second part of the algorithm fragments perform broadcast and convergecast through communication cycles. In Section \hyperref[5.1_GlobalSetup]{5.1} we describe the setup procedure that takes place once, just before the second half of the algorithm begins. The setup enables communication cycles in the first phase. In Section \hyperref[5.2_GKPPhaseOverCycle]{5.2} we review the GKP phase, and show that the communication cycle supports all the types of communication necessary to carry out a GKP phase. In Section \hyperref[5.3_SlotSetGen]{5.3} we describe a memory-efficient method for recalculating the slot set after each phase. In Section \hyperref[5.4_RefToPWA]{5.4} we describe the connection of the MSF problem, and of our solution, to the PWA problem. In Section \hyperref[5.5_Cplx]{5.5} we summarize the overall complexity of the MST construction procedure.

\subsection{Meeting the Prerequisites for Communication Cycles} 
\label{5.1_GlobalSetup}	

In order to meet the prerequisites of the communication cycle, a BFS tree $T$ spanning the entire graph $G$ is constructed, rooted at some vertex $r_T$. Next, routing tables and labels for tree routing over $T$ are computed. Every vertex $v\in V$ is assigned an $O(\log^2 n)$-bit label $lbl_v$, as well as an $O(\log n)$-bit routing table that will enable $v$ to route messages over $T$ using these labels. Here we use the compact tree routing scheme devised by \cite{TZ01Compact}, described in Appendix \hyperref[A.5.2_TZ_Tree_Routing]{A.5.2}.

%<redundant with intro>Communication cycles require fragments to be partitioned into disjoint, connected subsets (see Section \hyperref[3.1_preq]{3.1}). In each GKP phase, we will distinguish between fragments and \textbf{base fragments}. Similarly to \cite{E20Simple, PRS17}, we will use the term ``base fragments'' to refer to the MST fragments discovered by the end of the first part of the algorithm.

In each phase, fragments act as virtual nodes which connect to form trees and then partition into bounded-depth subtrees. To do all this, fragments use the communication cycle mechanism. In the first phase of the second part of our algorithm, every fragment consists of a single base fragment. In subsequent phases, every fragment will be the result of merging the subtrees of fragments from the previous phase. Thus, in subsequent phases every fragment will inherit a partition into base fragments from its components.

In order to proceed with cycle communication, every vertex $v\in V$ needs to know the slot numbers of its fragment and of its subset, as well the routing label of the leader vertex of its fragment. 
In the first phase of the second part, a single slot number for each base fragment is sufficient. Slot numbers meet requirements as long as they are all unique values in the range $[1,\ldots,K_b]$. To assign slot numbers, we use interval allocation \cite{santoro1985labelling} over the auxiliary BFS tree $T$ (see Fig. \hyperref[Fig2_rangePropagation]{2}).  Every root of a base fragment requests and receives a distinct interval of size $1$ from the range $[1,\ldots,K_b]$. Next, these roots send a broadcast over the base fragment trees, notifying other vertices in the base fragment about their slots. Every root vertex of a base fragment also becomes a fragment leader, and announces its routing label in the same way.

\begin{figure}[h]
	\begin{center}
		
		\label{Fig2_rangePropagation}
		
		\includegraphics[width=0.7\textwidth]{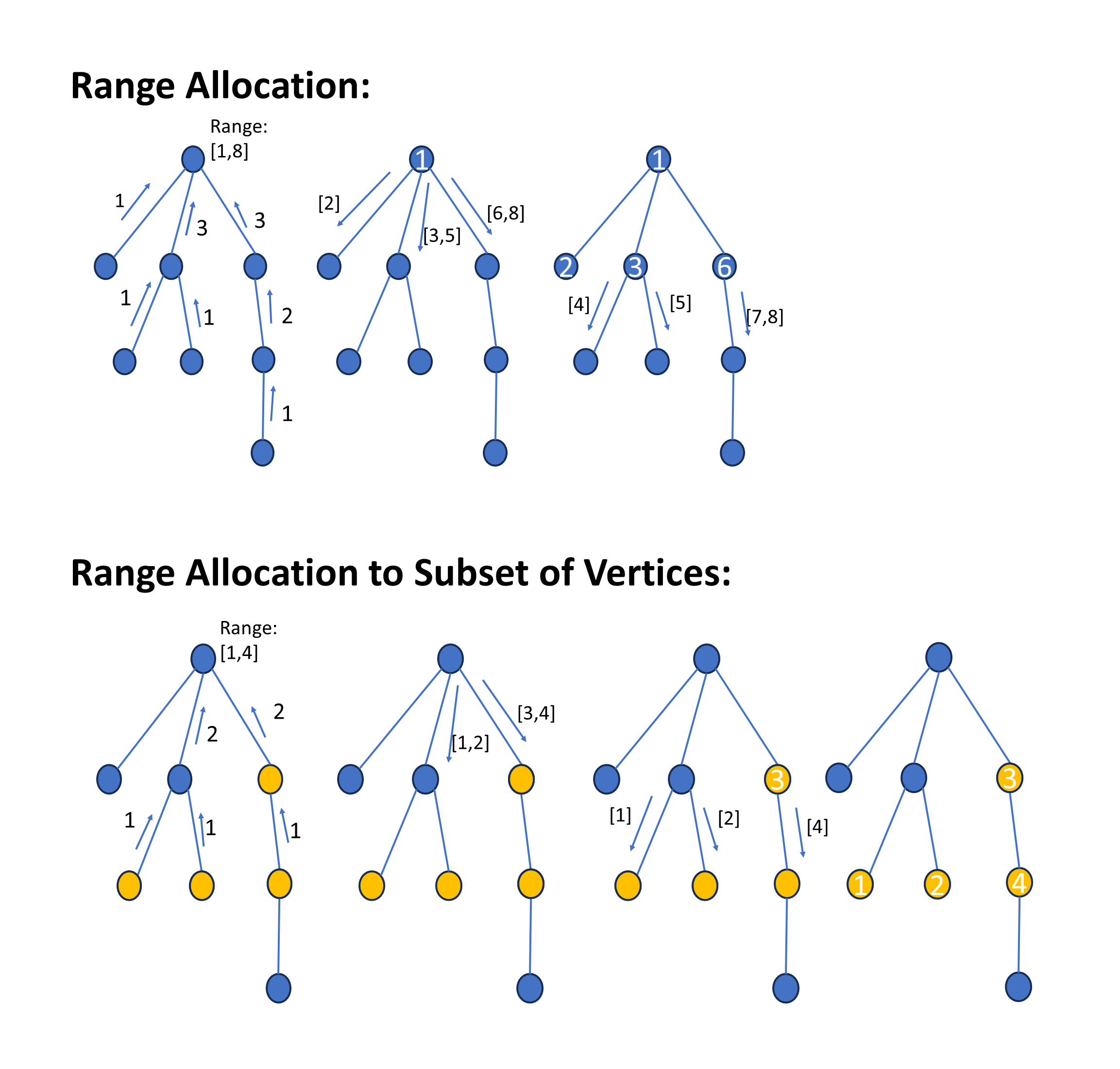}
		
		\caption{The top diagram illustrates an allocation of one slot to every node. The bottom diagram illustrates an allocation of slots only to the yellow nodes.}
		
		\Description{In the top row, three diagrams showing a tree. In the first diagram, arrows pointing up the tree along the edges, labeled with the number of descendants that each vertex has in its subtree (i.e. 1 from every leaf node, 3 from a node with two children, etc.). In the next two diagrams, arrows pointing down along tree edges - depth 0 to 1 in the second diagram and depth 1 to 2 in the third diagram. Arrows are labeled with ranges of numbers. Range sizes match descendant numbers from the first diagram. Vertices above the range-labeled arrows are marked with a number taken from the range. In the bottom row, four diagrams showing a tree which has some nodes marked yellow. In the first diagram, arrows pointing along every edge, labeled with the number of descendants that are marked yellow. In the next diagrams, arrows pointing down along tree edges, as in the top row. Yellow-colored vertices above the range-labeled arrows are marked with a number taken from the range.}
		
	\end{center}
\end{figure}

The following claims establish the correctness and complexity of the setup procedure.

\begin{Claim}
\label{C5.1}
After performing the following actions: Constructing a spanning BFS tree $T$, calculating routing tables and vertex labels for routing over $T$, assigning a distinct time slot to each base fragment; we can use communication cycles in the first phase of the algorithm that computes MST.
\end{Claim}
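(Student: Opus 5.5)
The plan is to check, one at a time, the prerequisites of the communication cycle listed in Section 3.1 and Section 4.1. We verify that each of the three setup actions in Claim 5.1 (together with the broadcasts that follow them) establishes these prerequisites for the partition of the first phase of the second part. In that phase every fragment $F_i$ is exactly one base fragment $B_i$, so $K=K_b$ and $\mathcal{B}^i=\{B_i\}$. The designated leader $r_i$ is the root $r_{B_i}$ of the base fragment.

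The prerequisites to check are: (a) every subset $B$ has a spanning tree $T_B$ of diameter at most $D_b$; (b) a BFS tree $T$ spans $G$, and every vertex knows its depth $d(v)$ and the depth $d(T)$; (c) every vertex knows the distinct slot number of its fragment and of its subset, in the range $[1,K]$ for convergecasts, or as a p-q slot set for broadcasts; (d) every vertex of $F_i$ knows $lbl_i$, the root of each subset knows $lbl_B$, and every vertex holds its Thorup--Zwick routing table, so that messages can be routed from $r_T$.

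Item (a) is immediate. The base fragments are MST fragments produced by the first part, and by Section 2.3.2 they are spanning trees of diameter $O(\max(\sqrt n, D))$; we take $D_b$ to be this bound. For item (b), the BFS construction gives $d(v)$ to every vertex directly. We then propagate $d(T)$ by a convergecast of the maximum depth over $T$, followed by a broadcast. This costs $O(D)$ rounds and $O(n)$ messages, and each vertex stores only $O(\log n)$ bits.

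For item (d), we invoke the tree-routing construction of Appendix A.5.2. It gives each vertex an $O(\log^2 n)$-bit label and an $O(\log n)$-bit table. Each base-fragment root then broadcasts its label over its own base fragment tree. This takes $O(D_b\log n)$ rounds, because labels are pipelined in $O(\log n)$ CONGEST messages, and $O(n\log n)$ messages in total. Since $F_i=B_i$, this simultaneously gives $lbl_i$ to all of $F_i$ and $lbl_B$ to $r_B$.

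For item (c), interval allocation over $T$ works in two passes. In the first pass, a convergecast counts, in every subtree of $T$, how many base-fragment roots it contains. In the second pass, $r_T$ splits the range $[1,K_b]$ top-down among its children in proportion to these counts. Each vertex needs only its own count and its children's counts while forwarding. To stay within $O(\log^2 n)$ bits at high-degree vertices, we have the vertex process its children sequentially through its ports: it keeps a running offset and re-queries each child's count, rather than storing all of them. Distinctness follows because the intervals assigned to disjoint subtrees are disjoint. Each root then broadcasts its slot number over $B_i$.

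For the broadcast variant, the p-q range of $F_i$ must be $\{j_i,j_i+1\}$, with $j_i$ used by $r_i$ and $j_i+1$ used by $r_{B_i}=r_i$. We obtain this by allocating intervals of size $2$ instead of $1$, or equivalently by having each vertex set $j_i=2s-1$ from its slot $s$. This is a purely local rule and satisfies the non-overlapping consecutive-range requirement of Section 4.1.

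With (a) through (d) in place, Algorithm 1 and the p-q broadcast of Section 4.2 run exactly as described. Their correctness and complexity then follow from Claims B.2--B.9, and this proves Claim 5.1.

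The main obstacle I expect is the memory bound of the interval allocation at high-degree vertices of $T$. I would resolve it with the sequential, port-by-port handling of children described above; its time overhead must be checked against the overall budget.
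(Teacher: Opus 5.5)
Your proposal is correct and follows the paper's proof: it checks the same prerequisites (BFS depths, Thorup--Zwick tables and leader labels, distinct slots via interval allocation, base-fragment trees with the $D_b$/$K_b$ bounds from Lenzen's phases) and then notes that the Section 4.1 p-q requirement is trivial because each fragment is a single subset. Your $\{2s-1,2s\}$ ranges meet the $k+1$-slot hypothesis of Claim B.6 literally, whereas the paper implicitly relies on the leader's own base fragment needing no q-slot. The time overhead you worry about is irrelevant to this claim and vanishes anyway if all children resend their counts in one round and the parent scans its port buffers locally with a running offset.
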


\begin{proof}
	\label{C5.1_Proof}
	We review the prerequisites described in Section \hyperref[3.1_preq]{3.1}. 
	
	The prerequisites are: 
	\begin{enumerate}
		\item There is a tree $T$, spanning the graph $G$, and every vertex $v\in V$ knows the overall depth of the tree $d(T)$ and its own depth $d(v)$ in $T$.
		\item A tree-routing scheme over $T$ is in place. Every vertex $v$ knows its own routing label. Every vertex $v_b$ in a base fragment $B$ knows the routing label of $B$'s fragment leader vertex.
		\item Every fragment has a unique slot number from the range $[1,\ldots,K_b]$.
		
		\item Every base fragment $B$ has a spanning tree $T_B$. There is an upper bound on both the number of subsets $K_b$ and the depth $D_b$ of the spanning trees which span the subsets.
	\end{enumerate}
	
	Prerequisites 1-3 can be satisfied by performing the procedures listed above: BFS tree construction, routing scheme setup, interval allocation. Prerequisite 4 is satisfied by the base fragments that exist after the completion of the first part of our algorithm, with $D_b$ and $K_b$ given below. After $O(\log n)$ Lenzen phases it is possible to create base fragments that satisfy the requirement that their overall number is bounded by $K_b=O(\min{\{\sfrac{n}{D},\sqrt{n}\}})$, and the diameter of the MST-subtrees spanning them is bounded by $D_b=O(\max{\{D,\sqrt n\}})$ (see \cite{Lenzen2016}, Ch. 6 or \cite{E20Simple}, Section 4). 
	
	The arrangement of time slots (see Section \hyperref[4.1_slotOrdering]{4.1}) imposes the additional requirement that subsets should have unique slots too, and that they consecutively follow fragment slots. This trivially holds for the first phase, because every fragment consists of one single subset.
	
\end{proof}

\begin{Claim}
\label{C5.2}
	The setup procedure has time complexity of  $O((D+\sqrt{n})\log n)$ rounds, message complexity of $O(m+ n\log n)$ messages, and memory complexity of $O(\log^2 n)$ bits.
\end{Claim}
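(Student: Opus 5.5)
We bound time, messages and memory separately for each component of the setup procedure of Claim~\ref{C5.1}, and then add them up. The components are: (i) construct the BFS tree $T$ and let every vertex learn $d(v)$ and $d(T)$; (ii) compute Thorup--Zwick tree-routing tables and labels over $T$; (iii) allocate slots by interval allocation over $T$; (iv) broadcast each base fragment root's slot number and routing label over its base-fragment tree $T_B$.

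\textbf{Step (i).} We use a standard flooding BFS from $r_T$, with $r_T$ chosen e.g.\ by leader election, or taken to be the root of the GHS-type first stage. This takes $O(D)$ rounds and $O(m)$ messages. If we pay for leader election by the standard $O(m+n\log n)$-message method, that is within budget. Each vertex stores only its parent port, its depth and a counter, i.e.\ $O(\log n)$ bits. Since $d(T)=O(D)$, a convergecast and broadcast of the maximum depth over $T$ costs a further $O(D)$ rounds and $O(n)$ messages.

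\textbf{Step (ii).} This is the routing scheme of Appendix~A.5.2. Thorup--Zwick labels are built from subtree sizes (to identify heavy children) and DFS intervals. Subtree sizes come from one convergecast over $T$. The DFS numbering is an interval allocation: each vertex receives an interval from its parent and splits it among its children by their subtree sizes. The heavy-path ``light-edge list'' that forms the label is passed down $T$, and each vertex appends at most one entry. The label has size $O(\log^2 n)$ bits, i.e.\ $O(\log n)$ CONGEST words, so sending it down one tree edge takes $O(\log n)$ rounds. Pipelining the words along $T$ gives $O(D+\log n)$ rounds, or naively $O(D\log n)$. In either case each tree edge carries $O(\log n)$ messages, for $O(n\log n)$ messages in total.

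\textbf{Memory in step (ii) is the main obstacle.} Computing subtree sizes is easy. The danger is that a vertex may have many children and must hand each of them a distinct sub-interval, which naively means storing one size per child. We would avoid this by processing the children sequentially through their ports, which are held in read-only memory. The vertex keeps a running offset, queries one child at a time for its subtree size (or receives the sizes as they arrive in the port buffers and consumes them one by one), and assigns sub-intervals on the fly. It also remembers only the heaviest child seen so far. Then only $O(\log n)$ bits of working state are needed beyond the vertex's own label of $O(\log^2 n)$ bits. Processing the children sequentially could cost up to $\deg(v)$ rounds per vertex. We would instead let all children respond in the same round into the port buffers and scan the buffers locally. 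Local computation is free, so the round complexity stays $O(D\log n)$.

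\textbf{Steps (iii) and (iv).} Step (iii) is the same interval allocation restricted to base-fragment roots, as in Fig.~2: a convergecast of counts followed by a downward split. It costs $O(D)$ rounds, $O(n)$ messages and $O(\log n)$ memory. In step (iv), each root broadcasts an $O(\log n)$-bit slot number and an $O(\log^2 n)$-bit label over $T_B$, pipelined word by word. By Claim~\ref{C5.1}, $D_b=O(\max\{D,\sqrt n\})$, so this takes $O(D_b+\log n)$ rounds and $O(n\log n)$ messages. Each vertex stores one label, i.e.\ $O(\log^2 n)$ bits.

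\textbf{Summing up.} The total is $O((D+\sqrt n)\log n)$ rounds, $O(m+n\log n)$ messages and $O(\log^2 n)$ bits of memory, which is the bound claimed in Claim~\ref{C5.2}.
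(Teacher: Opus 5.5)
Your proposal is correct and follows the paper's proof. You use the same decomposition: BFS construction ($O(D)$ rounds, $O(m)$ messages), Thorup--Zwick setup ($O(D\log n)$ rounds, $O(n\log n)$ messages), interval allocation, and broadcast of the $O(\log^2 n)$-bit leader labels over the base-fragment trees ($O((D+\sqrt n)\log n)$ rounds, $O(n\log n)$ messages). Your running-offset scan of the port buffers is a more explicit version of the paper's bare assertion that each step uses no more memory than the $O(\log^2 n)$ bits of information it produces.

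One correction to your aside on choosing $r_T$. An $O(m+n\log n)$-message GHS/Awerbuch-style election takes time linear in $n$ even when $D\ll n$, so it would break the time bound. Also, the first stage ends with many base-fragment roots, not a single root. The paper's proof simply takes BFS construction from a given root as $O(D)$ rounds and $O(m)$ messages, and you should drop that remark rather than rely on it.
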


\begin{proof}
	\label{C5.2_Proof}
	Time requirements consist of $O(D)$ rounds for the construction of the BFS tree, $O(D\log n)$ rounds for the setup of the routing tree \cite{TZ01Compact}, and $O((D+\sqrt{n})\log n)$ rounds for broadcasts of leader labels from the base fragment roots. The time complexity of the setup process is thus, in total, $O((D+\sqrt{n})\log n)$ rounds. Message complexity for the construction of the BFS tree is $O(m)$ (\cite{peleg2000distr}, Ch. 5). The other steps are broadcasts of $O(\log^2 n)$-bit messages, over a tree spanning $G$ or over disjoint subtrees of such a tree, and so their message complexity is $O(n\log n)$. 
	
	The memory complexity of the steps of the setup procedure - constructing a BFS tree $T$, setup for tree routing, interval allocation - does not exceed the memory complexity of the information that they generate. Memory is required for the storage of the tree $T$ itself, the routing labels and tables, and the time slots. Specifically, each vertex $v$ is required to store $v$'s parent edge and number of children in the tree $T$. Additionally, each vertex $v$ needs to store $v$'s routing label $lbl_v$, and the routing label $lbl_{B}$ of the leader vertex $r_B$ of $v$'s base fragment $B$. Finally, each vertex $v$ needs to store the value of a single slot number. All these variables have memory complexity of $O(\log n)$ bits, except for the vertex routing labels, which require $O(\log^2 n)$ bits.
\end{proof}

After every phase, compatibility with prerequisites needs to be reestablished. Every phase $t>1$ requires a new set of slot numbers, which meets the requirements described in Section \hyperref[4.1_slotOrdering]{4.1} (see also Claim \hyperref[CB.6_BCastCorrectness]{B.6}). For each phase $t>1$, slot numbers are computed in the final step of the preceding phase. We discuss this further in Section \hyperref[5.3_SlotSetGen]{5.3}. 

%\label{Appendix.F.2_BeforeBfrags}
When we apply our algorithm to solving the general MSF problem, we are required to handle an additional case in the first part of the algorithm. Some fragments may stop growing before reaching the size and diameter threshold of base fragments. We call these fragments \textbf{small components}. We stipulate that, up to the moment of transition to communication cycles, any fragment that fails an attempt to find MWOE candidates belonging to $E_H$ will terminate. At the time of transition, every terminated fragment is designated as a small component. Such a fragment does not go on to become a base fragment. This guarantees that:

\begin{enumerate}
	\item Base fragment size is subject to the same lower bound as in the MST case, and so the upper bound on the number of base fragments still holds.
	\item Small components are of size $O(D+\sqrt{n})$, and so can carry out broadcast and convergecast computations over their respective minimum spanning trees in $O(D+\sqrt{n})$ rounds, without using communication cycles.
\end{enumerate}

A small component $S$ does not request slots during setup, or at any later point. The root vertex $r_S$ of a small component $S$ is not required to publish its routing label to the vertices in $S$. During a communication cycle, a vertex $v\in S$ is required to aggregate and upcast pipelined messages from other fragments, and to route messages, but never sends any new messages on behalf of $S$.

\subsection{Carrying out a Single Phase Using the Cycle Protocol}
\label{5.2_GKPPhaseOverCycle}
The GKP phase \cite{gkp} proceeds in three stages. First, in order to discover the MWOE, each fragment carries out a convergecast calculation. Secondly, in order to partition the trees of the emerging virtual forest into $O(1)$-depth subtrees, fragments communicate with their immediate neighbors on these virtual trees. Finally, in order to merge the subtrees and create the fragments of the next phase, a broadcast over these virtual subtrees is carried out.

We will show that the GKP phase can be executed, with polylogarithmic memory complexity, using communication cycles. The first step, finding MWOEs, forms a virtual graph - a forest of virtual trees. Roughly speaking, the rest of the GKP phase executes distributed algorithms on these trees. Communication cycles help fragments to emulate vertices in the virtual tree. 

Communication cycles enable fragments to efficiently coordinate the states of virtual vertices. Once every vertex $v$ in a fragment $F$ has a copy of the state of $F$, every vertex $v$ is able to compose the same message on behalf of $F$. In particular, a vertex $v$ that is adjacent to an MWOE $e$ is able to compose the correct message. Vertex $v$ is then able to convey the message to the MWOE-adjacent fragment, in $O(\log n)$ rounds. Fast coordination is possible because the distributed algorithms that we use are memory-efficient, and the state of a virtual vertex can be represented using $O(\log^2 n)$ bits.

\subsubsection{Discovery of MWOEs by Fragments}
\label{5.2.1_MWOEDiscovery}
Every fragment $F$ may consist of multiple base fragments which function as subsets, and communication within $F$ is conducted via convergecast and broadcast cycles. A single convergecast cycle is required for the leader vertex $v_F$ of a fragment $F$ to discover the identity of the MWOE $e=\{v_{adj},u\}$, which connects $F$ to some neighboring fragment $F'$. Afterwards, a broadcast cycle notifies every vertex in the fragment $F$ about the discovered MWOE. Specifically, the vertex $v_{adj}\in F$ adjacent to $e$ is notified. At this point the edge $e$ is recorded by $v_{adj}$ as an MST edge.

Notifying $v_{adj}$ enables communication between fragments $F$, $F'$ in the next stages of the phase. The MWOE notification to $v_{adj}$ includes the ID of $F'$, which at this point becomes the parent of $F$ in a virtual fragment tree. The notification also enables the vertex $v_{adj}$ to identify the MWOE edge $e$. Later, whenever $F$ will decide to message its parent $F'$, the vertex $v_{adj}$ will deduce that it is responsible for sending the message. %At this point the vertex $v_{adj}$ will also deduce the edge $e$ that connects it to $F'$.

When we apply our algorithm to solving the general MSF problem, we are required to handle an additional eventuality during MWOE discovery. If a fragment $F$ reaches its maximal size and spans its connected component at the end of phase $t-1$, in phase $t$ it will be unable to discover MWOE candidates from $E_H$. In phase $t$, the fragment becomes a \textbf{terminated fragment}. MWOE discovery during phase $t$ is the point in the algorithm's execution where fragment $F$ discovers that it is terminated, because MWOE candidates fail to arrive. In this case, fragment leader $v_F$ uses the broadcast cycle to inform all vertices of $F$ that the fragment is terminated. After this broadcast cycle, we add an additional convergecast cycle over BFS in order to detect non-terminated (i.e., active) fragments. Fragment roots announce their status (terminated or active). The BFS root $r_T$ determines when no active fragments remain, and terminates execution.

\subsubsection{Communication with Adjacent Fragments}
\label{5.2.2_CommWithAdjFragments}
In the following stages of the GKP phase, fragment trees find maximal matching \cite{gkp}, and subdivide into subtrees around matched pairs. Next, we argue that all necessary communication can be carried out efficiently using broadcast and convergecast cycles within fragments, and single-round messages over the MWOEs of the ongoing phase. In Section \hyperref[5.3_SlotSetGen]{5.3}, we show that communication cycles are sufficient to support the process of slot set generation as well.

GKP phases can be executed using three types of message exchanges (listed in Claim \hyperref[C2.1_GKP_MsgTypes]{2.1}) over the virtual trees. We show that these message exchanges can be carried out between fragments using communication cycles.

\begin{Claim}
	\label{C5.3}
	A virtual fragment tree can simulate any of the following three interactions using $O(1)$ communication cycles and $O(1)$ additional rounds of messaging over MWOEs.
	\begin{enumerate}	
		\item A round of messaging from each parent to all its respective children.
		\item A round of messaging from each parent to one arbitrarily selected child.
		\item A round of messaging from all children to their respective parents, followed by aggregation with a semigroup function of the received messages.
	\end{enumerate}
	
\end{Claim}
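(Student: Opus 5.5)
We prove the three items separately, using as building blocks the broadcast cycle of Section 4 (correctness from Claim B.6) and the convergecast cycle of Section 3 (correctness from Claim B.2). Two invariants of the MWOE discovery stage (Section 5.2.1) are also needed. First, every non-root fragment $F$ has a unique vertex $v_{adj}\in F$ that knows the MWOE $e=\{v_{adj},u\}$ leading to its parent $F'$. Second, after a broadcast cycle, every vertex of a fragment holds a copy of the fragment's $O(\log^2 n)$-bit virtual state. The parent side, however, does not know its child-MWOEs in advance. A vertex $u\in F'$ only learns that an edge incident to it is a child-MWOE when a message arrives over that edge. The whole construction is built around this asymmetry.

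\emph{Item 3 (children to parent, with aggregation).} Each child fragment $F$ uses one broadcast cycle so that its leader's outgoing message $m_F$ reaches all of $F$, in particular $v_{adj}$. If every vertex already holds the state, $v_{adj}$ composes $m_F$ locally and no broadcast is needed. Then, in one round, every $v_{adj}$ sends $m_F$ over its MWOE. Each endpoint $u$ in a parent fragment $F'$ aggregates, with the semigroup function $f$, all messages it received in that round. This fits in $O(\log^2 n)$ bits, since $f$ is applied on the fly and a label-size message costs $O(\log n)$ rounds. Finally, one convergecast cycle in $F'$ with function $f$ delivers the aggregate to $F'$'s leader. Correctness follows from the associativity and commutativity of $f$ together with Claim B.2.

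\emph{Item 1 (parent to all children).} The parent leader first broadcasts its message $m$ to all of $F'$ with one broadcast cycle. In the preceding round, each child's $v_{adj}$ has sent a one-bit ``I am your child'' signal over its MWOE. This lets every vertex $u\in F'$ mark the incident edges that are child-MWOEs, which it stores as a counter or by ports in read-only port memory. After the broadcast, every marked $u$ sends $m$ over every marked edge in one round. If memory for many marked edges is an issue, $u$ instead re-uses the signal round's ports: it simply replies on all ports that carried a signal in that round, using the port buffers. The receiving $v_{adj}$ then starts a broadcast cycle in its own fragment. For this, $v_{adj}$ injects $m$ through a convergecast cycle to its leader (aggregation ``take any''), and the leader then broadcasts. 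This uses $O(1)$ cycles in total.

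\emph{Item 2 (parent to one arbitrary child).} This is the main obstacle, because the parent cannot store the list of its children and must still pick exactly one. I would use a convergecast cycle in $F'$ computing $\min$ over the child-MWOEs identified via the signal round of item 1, keyed for instance by edge weight or ID. The leader thereby learns a single selected edge $e^*$. It then broadcasts the pair $(m,e^*)$. Only the endpoint of $e^*$ forwards $m$, and the child proceeds as in item 1. Counting all three items, each uses at most three cycles and two rounds over MWOEs. Memory stays $O(\log^2 n)$ because only one message, one edge identifier and one aggregate are held per vertex, and the cycle memory bounds of Claims B.5 and B.9 apply.
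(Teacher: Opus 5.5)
Your proposal is correct and follows essentially the same route as the paper. Children ping over their MWOEs so that the parent side learns which edges to answer. In case 2 a convergecast/broadcast pair inside the parent elects a single child. One round over the MWOEs carries the messages, and a convergecast cycle then brings the result to the recipient fragment's leader; the paper follows this with a broadcast cycle, which you omit in item 3.

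Two minor slips do not affect the $O(1)$ bound. First, item 2 uses four cycles rather than three, as in the paper's Table 2. Second, every vertex of $F'$ already knows the parent's message from the shared state, so the ping should come immediately before the reply. That way no per-edge marks need to survive an intervening cycle, and such marks could not be written into read-only memory anyway.
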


\begin{proof}
	\label{C5.3_propf}
	Suppose that a sender fragment has an $O(\log n)$-bit message to send in one of these three ways. We review the required steps.
	
	First, the sender fragment identifies the edges over which the message needs to be sent. In the first case, every child pings its parent, and receives the parent's message in response. In the third case, the MWOE-adjacent vertex on the child side already knows the edge. In the second case (where every parent sends a message to a single arbitrary child), each parent fragment receives multiple queries. Then, each parent fragment participates in a convergecast cycle to elect a single query, and in a broadcast cycle to announce the elected query to the correct MWOE-adjacent vertex.
	
	Next, recipient fragments aggregate received messages and announce the outcome. This is done with a convergecast cycle and a broadcast cycle. In the first and second cases every child fragment contains a single vertex that knows the message, and a convergecast cycle brings these messages to their respective fragment leaders. In the third case, recipient parent fragments participate in a convergecast cycle to aggregate received values. Once the leaders of recipient fragments know the respective outcome values, a broadcast cycle is used to announce them. 
\end{proof}

Table \hyperref[Tbl2_VirtualRoundStagesVSMSGTypes]{2} describes these three procedures, along with their respective complexities.

\begin{table*}[h]
	\label{Tbl2_VirtualRoundStagesVSMSGTypes}
	\centering

	\caption{Virtual Round Complexity, By Message Type}
	\begin{tabular}{p{0.15\textwidth} p{0.25\textwidth} p{0.25\textwidth} p{0.25\textwidth}}
		\toprule
		Stage & Parent to all Children & Parent to Arbitrary Child& Children to Parent \\ [0.5ex] 
		\midrule
		Source fragments select edge & Queries from children& Queries from children, convergecast and broadcast cycles to select and announce a child& MWOE is known on child side\\ 
		\midrule
		Messages sent & Response to query & Elected vertex responds to query & Messages over MWOE \\ 
		\midrule
		Received messages aggregated & Convergecast conveys message to the root of each fragment & Convergecast conveys message to the root of the elected child fragment & Convergecast aggregates multiple received messages to the root of the parent\\ 
		\midrule
		Aggregate message published & Broadcast outcome from the root of each child& Broadcast outcome from the root of the elected child& Broadcast outcome from the root of the parent\\
		\midrule
		Overall complexity & 2 rounds, 2 cycles & 2 rounds, 4 cycles & 1 round, 2 cycles  \\
		\bottomrule
	\end{tabular}
\end{table*}

\subsubsection{Maximal Matching and Tree Partitioning}
\label{5.2.3_MaxMatching}
In the GKP phase (see Section \hyperref[2.3.1_GKP]{2.3.1}), the virtual fragment trees partition into subtrees of depth $O(1)$. Specifically, subtree depth is at most 3. We recall the subroutines used in this step: 3-coloring \cite{CV}, maximal matching based on the coloring, and attaching unmatched nodes to a parent. By Claim \hyperref[C2.1_GKP_MsgTypes]{2.1}, each step consists of messages to all children, messages to an arbitrary child, and messages from children to their respective parents. We further observe that messages to parents are used either for counting children, or for notifying parents that children exist. Thus, messages from children to parents can be aggregated through summation and logical AND/OR.

It follows that the GKP phase can be executed using only the three types of messages between fragments that are summarized in Table \hyperref[Tbl2_VirtualRoundStagesVSMSGTypes]{2}. Since the GKP phase requires $O(\log^* n)$ rounds, carrying it out with virtual nodes simulated by fragments will require $O(\log^* n)$ communication cycles.

\begin{corollary}
		\label{C_5.4}
	A GKP phase can be carried out using $O(\log^* n)$ virtual rounds, i.e., its overall time, message and memory complexity is $O(\log^* n)$ communication cycles.
\end{corollary}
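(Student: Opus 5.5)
The plan is to reduce the corollary to Claim~\ref{C5.3} and the round structure of the GKP step from Section~\ref{2.3.1_GKP}. The GKP phase has three stages, and each is simulated by a fixed number of communication cycles.

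\emph{MWOE discovery.} As described in Section~\ref{5.2.1_MWOEDiscovery}, this stage costs exactly one convergecast cycle and one broadcast cycle. In the MSF case we add one further broadcast and one further convergecast cycle to detect termination. Either way the cost is $O(1)$ cycles.

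\emph{Partitioning the virtual trees.} The middle stage runs the Cole--Vishkin 3-coloring \cite{CV} on the rooted virtual forest, then the coloring-based maximal matching, then the attachment of unmatched fragments. I would fix the standard implementation in which each of these sub-steps proceeds in synchronous virtual rounds. The coloring takes $O(\log^* n)$ such rounds, while the matching and the attachment take $O(1)$ rounds each, since each iterates over the three color classes.

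By Claim~\ref{C2.1_GKP_MsgTypes}, every virtual round consists only of messages of the three types listed there. As observed in Section~\ref{5.2.3_MaxMatching}, child-to-parent messages are aggregated by summation or AND/OR, and these are semigroup functions. Hence Claim~\ref{C5.3} applies. If a virtual round mixes several types, we simulate them one after another.

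Each virtual round is therefore simulated by at most $3\cdot 4=12$ cycles plus $O(1)$ real rounds over MWOEs, using the counts in Table~\ref{Tbl2_VirtualRoundStagesVSMSGTypes}. The $O(1)$ real rounds are dominated by a cycle, each costing $O(\log n)$ rounds and $O(\log n)$ messages per MWOE, i.e.\ $O(n\log n)$ messages in total.

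\emph{Merging.} The final stage broadcasts within each depth-$\le 3$ subtree, which is $O(1)$ virtual rounds of the parent-to-all-children type. It is also simulated by $O(1)$ cycles. The slot-set regeneration of Section~\ref{5.3_SlotSetGen} is accounted for separately.

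Summing the three stages gives $O(\log^* n)$ cycles in total. Time and message complexity are then at most $O(\log^* n)$ times the per-cycle bounds of Claims~B.3, B.4, B.7 and B.8.

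For memory, cycles run sequentially, so the bound is not summed. Each vertex keeps only the state of its virtual node, namely a color, a matching status and a parent ID, which is $O(\log n)$ bits; in the worst case (with labels) this is $O(\log^2 n)$ bits. On top of this it needs the $O(\log^2 n)$ working memory of a single cycle from Claims~B.5 and B.9. Since cycle memory is released between cycles, the maximum stays $O(\log^2 n)$.

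The main obstacle is memory in the coloring step, specifically the claim that the virtual-node state is consistent and small. Cole--Vishkin needs each fragment to know its parent's current color, which takes $O(\log n)$ bits. This is obtained by a parent-to-all-children round: each child pings its parent over its own MWOE, and the parent-side vertex answers from its replicated copy of the parent's state. We must check that every vertex of a fragment holds an identical copy after each broadcast cycle, and that no vertex must remember messages for several neighboring fragments at once. A vertex adjacent to many child MWOEs can answer all pings in the same round, because it sends the same uniform message on every port; this avoids per-child storage.
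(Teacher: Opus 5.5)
Your proposal is correct and follows essentially the paper's own argument. By Claim 2.1, with child-to-parent messages aggregated by summation or AND/OR, every virtual round of the GKP step uses only the three message types, each of which Claim 5.3 simulates with $O(1)$ cycles; since the phase has $O(\log^* n)$ virtual rounds, the bound follows. Your separate accounting of MWOE discovery and merging, and your remark that memory is a maximum over sequential cycles rather than a sum, only spell out details the paper leaves implicit.
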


Recall also that at the start of each phase, there is a single round in which  vertices discover their respective MWOE-candidates. This round requires $O(m)$ messages. This round also has memory complexity $O(\log n)$. 

If we are solving the MSF problem, some fragments may already be terminated during this step. Terminated fragments cannot be a part of a tree with more than one fragment. Therefore terminated fragments do not take part in tree partitioning. A terminated fragment, since it has no neighbors, is not required to participate in maximal matching, and doesn't need to send messages or to handle incoming messages. Vertices contained in terminated fragments, similarly to vertices in small components, take part in communication cycles by forwarding traffic. However, vertices in terminated fragments do not inject input on behalf of their fragments.

%\subsubsection{Merging the Fragment Subtrees}
%\label{5.2.4_MergingFragTrees}
Finally, to complete a phase fragment subtrees are required to merge into single fragments. When fragments merge in the original GKP phase, a broadcast of the fragment ID is sent over each subtree, from the root to the other nodes. To sustain communication cycles, a more complicated process needs to take place, carrying more information. This process follows paths of broadcast and convergecast over the subtrees created in this section.

%In order to become a single fragment in the next phase, a set of base fragments requires a single leader and a consecutive slot range. To prepare for the next phase, first, every subtree calculates the number of base fragments that it contains. Then, a p-q slot set is generated so that it contains a distinct slot range for every subtree, with the correct number of slots. Finally, slot numbers are delivered to their respective base fragments and fragment leaders. 

%We explain this procedure in detail in Appendix \hyperref[Appendix.C_SlotSetGen]{C}. Roughly speaking, first every subtree holds a convergecast to count the base fragments contained in its fragments. Then, subtrees reserve distinct intervals from the new slot range (using the interval allocation procedure from Appendix \hyperref[5.3.1_subrangeAlloc_subroutine]{C.1}). Finally, subtrees propagate the slots from their respective intervals to their fragments and then to their base fragments, again using the interval allocation procedure. Further details can be found in Appendix \hyperref[5.3.2_slotSetGen]{C.2}. 

%The following claims (proved in appendix \hyperref[Appendix.C_SlotSetGen]{C}) establish the correctness and complexity of this computation:

\subsection{Generating the Slot Set After a Phase}
\label{5.3_SlotSetGen}

%\label{5.3_CreatingSlotSet}
After the partition of virtual trees into subtrees is completed, the next step is to construct a new slot set. In the next phase, the new slot set will enable subtrees to become fragments, and to use cycle communication as fragments. Each newly formed fragment requires a distinct, consecutive p-q range (see Section \hyperref[4.1_slotOrdering]{4.1}).

To fulfill the requirements of the broadcast communication cycle, in every fragment $F$, the leader vertex $r_F$ needs a slot range of a size matching the number of subsets (i.e., base fragments) in $F$. If the fragment $F$ contains $k$ subsets, the leader $r_F$ will require $k$ slots - a slot for every subset, and a single slot for the leader itself. The leader of a fragment is also the root of some base fragment, and that base fragment will not require a slot, so we have $k$ slots required in total.

The process of subrange distribution is similar to the allocation of intervals for interval routing \cite{santoro1985labelling}. There, every node that requires an interval makes a request to its parent, and requests are aggregated in a convergecast until they reach the root. Each request is met with a distinct interval in response. In this way, sub-intervals propagate down the tree (see Fig. \hyperref[Fig2_rangePropagation]{2}). Here the process of aggregating requests, and retracing their path back to origin, is carried out across two levels of hierarchy. Requests flow from base fragments to fragments, and then from leaf fragments to root fragments in virtual fragment subtrees.

\subsubsection{The Interval Allocation Subroutine}
\label{5.3.1_subrangeAlloc_subroutine}
The \textit{interval allocation subroutine} is used several times in the process of generating slot sets. This subroutine requires one broadcast communication cycle. By participating in the subroutine, a fragment leader vertex that has an interval $[N,N+S)$ can distribute it among its base fragments. When several fragment leaders have intervals, they can all participate in the subroutine within the same cycle.

Suppose that at the start of the cycle, a leader vertex has a number $N$ representing the bottom of its interval. Suppose that each base fragment $B$ requires an interval of size $S_B$. In the ultimate slot allocation, every base fragment requires a single slot. However, over the course of the slot set generation process, base fragments submit requests for slot intervals of various sizes, in order to propagate those intervals to adjacent fragments.
\begin{itemize}
	\item Every participating leader vertex uses its p-slot to upcast $N$ to the root $r_T$ of the BFS tree $T$.
	
	\item Every participating base fragment root $r_B$ upcasts its desired interval size $S_B$. Once the base fragment receives, in response, some value $N_0$, it claims the interval $[N_0,N_0+S_B)$.
	
	\item The root $r_T$ acts as an arbiter for all participating fragments by executing Algorithm \hyperref[Alg2_SubrangeAlloc]{2}.
\end{itemize}

\begin{algorithm}
	\label{Alg2_SubrangeAlloc}
	\caption{Interval Allocation}
	\begin{algorithmic}[1]
		%\Require BFS tree root $r_T$ Receives a message $\langle TYPE, VALUE, LABEL\rangle$
		\State Set local value $rangebottom := 1$
		
		\For{each round of the cycle}
		\If{tree root $r_T$ receives a message $\langle type, value, label\rangle$}			
		\If{Message type $type = P$}
		\State Set $rangebottom := value$.
		\ElsIf {Message type $type = Q$}
		\State Send message $\langle rangebottom, label\rangle$ over $T$, to be routed according to label.
		\State Update $rangebottom := rangebottom + value $
		\EndIf
		\EndIf		
		\EndFor
	\end{algorithmic}
\end{algorithm}

The following claim is immediate and we state it without proof:
\begin{Claim}
	\label{C5.5_IntervalAllocSubroutine}

	\begin{enumerate}
		\item Suppose that at the start of the cycle, a fragment $F$ has an interval $[N,N+S)$. Suppose also that the base fragments $B_1, B_2, \ldots B_k$ that make up $F$ have range requests with sizes $S_1, S_2, \ldots, S_k$ so that $\sum_{i=1}^{i=k} S_i \leq S$. Then the intervals claimed by the base fragments at the end of the cycle will be disjoint, and contained in the interval $[N,N+S)$.
		\item If the intervals held by several fragments $F_1, F_2, \ldots$ at the start of the interval allocation cycle are disjoint, and if request sizes match interval size for each participating fragment, then the intervals held by base fragments at the end of the cycle will also be disjoint.
	\end{enumerate}
	
\end{Claim}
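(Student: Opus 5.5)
The plan is to follow the sequence of messages that the root $r_T$ sees during the broadcast cycle and track the variable $rangebottom$ of Algorithm~2. By Claim~B.1 (slot separation), the message carried in slot $j$ reaches $r_T$ alone and exactly in round $j+d(T)$. Hence $r_T$ processes the slots one per round, in increasing slot order. By the p-q arrangement of Section~4.1, the slots of a fragment $F$ form a consecutive range $\{j_F,\ldots,j_F+k\}$. The p-slot $j_F$ comes first and is followed by the q-slots of $B_1,\ldots,B_k$. Different fragments' ranges do not overlap. The order of the $B_i$ inside the range is irrelevant, so I index them by q-slot order.

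For part~(1), I first show by induction over the rounds $j_F+d(T),\ldots,j_F+k+d(T)$ that $rangebottom$ equals $N+\sum_{l<i} S_l$ just before $r_T$ handles the query of $B_i$. The base case is the P message of $F$, which sets $rangebottom:=N$. The inductive step is the Q-branch of Algorithm~2: it returns the current value $N_0^{(i)}$ and then adds $S_i$. No message from another fragment intervenes, because the range is consecutive and slots are separated. Next I check delivery. The response $\langle N_0^{(i)}, lbl_{B_i}\rangle$ is routed by Thorup--Zwick tree routing to $r_{B_i}$ (Claim~B.6, broadcast correctness). So $B_i$ claims $[N+\sum_{l<i}S_l,\,N+\sum_{l\le i}S_l)$. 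These intervals are consecutive and pairwise disjoint. They lie in $[N,N+\sum_l S_l)\subseteq[N,N+S)$, using the hypothesis $\sum_l S_l\le S$.

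For part~(2), note that each base fragment belongs to exactly one fragment. By part~(1) applied to each $F_j$, every claimed interval lies inside the interval of its own fragment. Intervals from the same fragment are disjoint by part~(1). Intervals from different fragments lie inside disjoint parent intervals, so they are disjoint as well. The reset performed by each new P message means no state leaks from one fragment's range to the next. This is exactly why the initial value $rangebottom:=1$ plays no role.

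The only delicate point is the ordering argument: the root must receive P before all of its Q's, and no foreign message may fall between them. I would derive this directly from the consecutive p-q range requirement together with Claim~B.1. If the range convention allowed the leader's own base fragment to omit its q-slot, I would treat that base fragment as requesting through the P message with an implicit offset. The main case is already covered by the induction above.
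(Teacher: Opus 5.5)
Your proof is correct. It is essentially the argument the paper has in mind: the paper states Claim 5.5 without proof as immediate, and it relies on the same reasoning as its proof of Claim B.6, namely that Claim B.1 delivers messages to $r_T$ one per round in slot order, so consecutive non-overlapping p-q ranges ensure each query is answered right after its own fragment's P message and that fragment's earlier queries, which your induction on $rangebottom$ turns into consecutive disjoint sub-intervals of $[N,N+\sum_l S_l)\subseteq[N,N+S)$. Part (2) then follows exactly as you argue, since each base fragment's interval lies inside its own fragment's interval and those intervals are pairwise disjoint.
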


We take note of a variant of the interval allocation procedure where there is no fragment leader upcasting a range bottom, and the range bottom is effectively $1$. In that case, if some vertices $\{v_1, v_2, \ldots\}\in V$ upcast requests with interval sizes $S_1, S_2, \ldots, S_k$, they will receive disjoint intervals $I_1, I_2, \ldots, I_k$, so that their union is the interval $[1, \sum_{i=1}^{i=k}S_i)$. This variant will be used in the procedure that generates slot sets. It will be used in order to request disjoint intervals for the next phase's fragments, before each of these new fragments begins to divide its interval among its components.

\subsubsection{Generating the Slot Sets}
\label{5.3.2_slotSetGen}

%Recall that after a GKP phase is complete, fragments are organized in subtrees of depth $O(1)$. Each vertex in these subtrees is a fragment. Each fragment may consist of many base fragments (or, equivalently, subsets). Our goal is to turn every subtree into a single fragment. For the leader vertex of this single fragment, we designate the vertex which served as leader of the fragment at the root of the subtree. 

At the beginning of the procedure that generates slot sets, in each subtree, the designated leader vertex collects slot requests from subsets, from all the fragments in the subtree. As a result, the designated leader vertex discovers the number of subsets in its subtree. Then, each leader requests and receives from the root $r_T$ of BFS tree $T$ a distinct interval from the range $[1, 2K_b]$. Finally, intervals are divided and propagated back to leader vertices of fragments that made requests. Fragment leader vertices then distribute slots from these ranges to their base fragments.

A challenge arises when requests for slots need to be delivered from subsets to leader vertices, and to the leader vertices of the respective root fragments of their subtrees. The challenge lies in delivering requests through all these steps in a way that enables each request to be traced back later to its origin. 

The requests travel through a path similar to the path taken by a child-to-parent message, as described in Claim \hyperref[C5.3]{5.3} (see also Table \hyperref[Tbl2_VirtualRoundStagesVSMSGTypes]{2}). Some steps are intrinsic convergecast steps - direct convergecast computations over base fragments. Other steps are extrinsic convergecast steps - pipelined convergecast computations over the BFS tree $T$. In order to enable tracing back requests, vertices at some points along the path cache the requests they send.

We describe the process below, and follow in detail the path taken by requests and partially aggregated requests. We specify the vertices that cache the partial sums that they upcast while participating in data transmission. We show that the cached information is sufficient to retrace the paths that were taken. We also show that the cached information takes up only $O(\log n)$ bits per vertex of additional memory.

The process consists of the following steps:

\begin{enumerate}
	\item A round of communication takes place, during which all child fragments ping their parent fragments. Parent fragments then participate in a convergecast cycle to count their child fragments in the virtual subtrees, formed in the preceding GKP phase.
	
	\item The following steps are iterated $d$ times, where $d$ is the upper bound on the depth of fragment subtrees (i.e., $d=3$). In each iteration, requests for intervals are sent from child fragments to parent fragments (see Fig. \hyperref[Fig3_rangeRequestPathIteration]{3} for an illustration).
	\begin{enumerate}
		\item Fragments that are ready send messages to their parent fragments, over the MWOE. A fragment $F$ becomes ready when it has received interval requests from all its child fragments in the previous iteration. Leaf fragments are considered ready in the first iteration. The message that fragment $F$ sends contains the size of the interval that $F$ is requesting. This size equals the number of subsets in all fragments downstream of the $F$, including $F$ itself. The MWOE-adjacent vertex $u_{adj}\in F$ that sends the message also caches the requested interval size at this point.
		
		\item Base fragments carry out the intrinsic step of a convergecast cycle. Any vertex $v_{adj}$ that received an interval request over an MWOE in (a) submits this request to the convergecast. Two convergecast calculations take place in parallel - summing up the interval sizes that were requested by child fragments, and counting the received messages. Every vertex that takes part in the convergecast caches the partial sum of interval request sizes that it upcasts.\footnote{Note that a base fragment $B$ may be required to perform convergecast in multiple iterations of this step. In that case, the cached interval sizes are updated in every iteration. Consider a single base fragment $B$ of fragment $F$ containing two vertices $v_1, v_2$ that are adjacent to MWOEs from two of $F$'s child fragments - $F_1$ and $F_2$, respectively. Suppose that $F_1$ sends a request for interval size $n_1$ in iteration $1$, whereas $F_2$ becomes ready and sends a request for size $n_2$ in iteration $2$. Suppose vertex $v\in B$ is a common ancestor of $v_1$ and $v_2$. Vertex $v$ will participate in the convergecast during both iterations. In iteration $1$, vertex $v$ will upcast $n_1$, and in iteration $2$ it will upcast $n_2$. In that case, vertex $v$ will cache $n_1$ in iteration $1$, and in iteration $2$ it will update the cached value to $n_1 + n_2$.}

		\item Every base fragment $B$ that was active in 2(b) participates in the extrinsic step of a convergecast cycle. The root vertex $r_B$ of $B$ upcasts the values it has obtained in step 2(b). These values are aggregated and routed to the leader vertex $r_F$ of $B$'s fragment $F$. Base fragment $B$ submits to the convergecast both its requested interval size, and its number of received messages. The root of every base fragment $B$ caches the size value that it sends. Every fragment leader vertex $r_F \in F$ maintains and updates a sum of received interval sizes. Leader vertex $r_F$ also counts the messages that it has received from child fragments since the start of step 2. If messages have arrived from every child of $F$, the fragment $F$ will become ready in the next iteration. 
		
		\item Each fragment $F$, if it is ready and is not the root of its subtree, participates in a broadcast cycle. The leader vertex $r_F$ of $F$ notifies the vertices of $F$ that $F$ is ready, and publishes the size of the interval that $F$ will request from its subtree parent in the next iteration. This size is the sum of interval requests from child fragments, and of the interval that $F$ itself requires.
	\end{enumerate}
	
	\item For fragments that are at the roots of their respective subtrees, the leader vertices are designated as leader vertices for the merged fragments of the next phase. Designated leaders initiate an interval allocation cycle. In this interval allocation cycle, there is no p-slot, and designated leader vertices use their current slots as q-slots. They request and receive intervals matching the number of base fragments in their respective subtrees. Upon receiving an interval, each designated leader vertex reserves the first slot from the interval as its p-slot.
	
	\item Every fragment $F$ that is a root of a subtree distributes the the interval that it obtained among its descendants in the virtual subtree. This step mirrors step 2, proceeding in the opposite direction. Like in step 2, it is iterated $d=3$ times (recall that subtrees have depth of at most 3).
	
	\begin {enumerate}
	\item If the leader vertex $r_F$ of a fragment $F$ has received an interval, in step 3 or in the previous iteration, $r_F$ reserves a sub-interval from the beginning for fragment $F$, of length equal to the number of subsets that $F$ contains.
	
	\item Next, $F$ participates in an interval allocation cycle, assigning intervals to base fragments according to requests. Base fragments re-send the requests for interval sizes that they cached in step 2(c). For each base fragment $B$, the requested size is the sum of requests from fragments that are adjacent to $B$ via incoming MWOE edges. Every fragment $B$ calculated this sum in step 2(b).
	
	\item For each base fragment $B$ whose root $r_B$ received an interval, a broadcast is held, retracing step 2(b). In step 2(b), over $B$ we summarized interval requests from adjacent child fragments of the fragment $F$ that contains $B$. Correspondingly, here we distribute the interval that was received by $r_B$ over $B$, among child fragments of $F$ that are adjacent to $B$ (via incoming MWOE edges). Every child vertex $v\in B$ re-upcasts the value that it cached in 2(b), and in response receives an interval. As a result, interval requests are traced back to their respective origins. Intervals are broken up into smaller parts where necessary, and propagated back to the MWOE-adjacent vertices that requested them.
	
	\item MWOE-adjacent vertices in child fragments re-send the requests they have cached in 2(a). MWOE-adjacent vertices in parent fragments, if they received an interval in step 4(c), reply to requests with sub-intervals. Next, receiving vertices in child fragments send the intervals to their respective fragment leaders, using a convergecast cycle.
	
\end{enumerate}

\item Fragment leader vertices distribute the slots from their respective intervals, reserved in 4(a), to their subsets. This is done using the interval allocation subroutine.

\end{enumerate}

\begin{figure*}[h]
\label{Fig3_rangeRequestPathIteration}
\begin{adjustwidth}{-9mm}{-9mm}
	\begin{minipage}[H]{0.5\textwidth}
		\centering
		\includegraphics[width=0.6\linewidth]{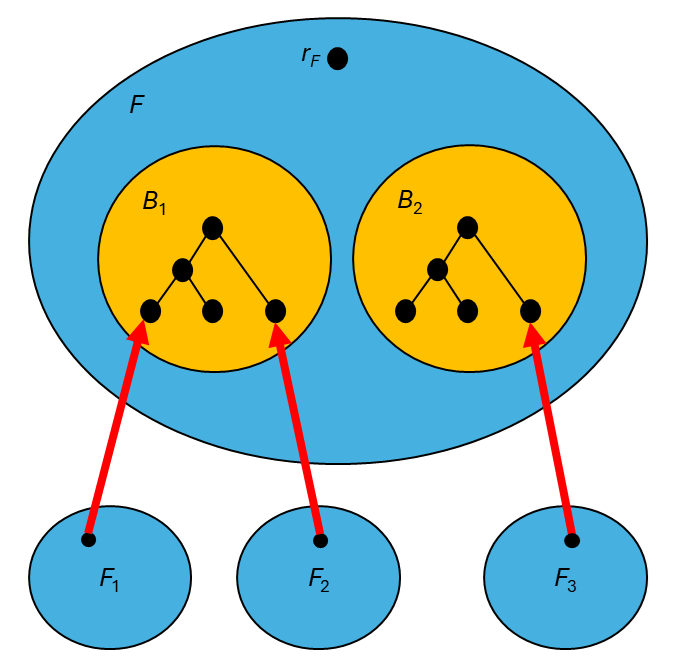}
		%\caption*{ZZZDiagram 1}
	\end{minipage}
	\hfill
	\begin{minipage}[T]{0.5\textwidth}
		\small
		\textbf{Step 2a.}  
		Child fragments $F_1$, $F_2$, $F_3$ send their interval request sizes to parent fragment $F$ over their respective MWOEs. The MWOE-adjacent vertices contained in $F_1$, $F_2$, $F_3$ cache the sizes that they send.
	\end{minipage}
	
	\begin{minipage}[H]{0.55\textwidth}
		\centering
		\includegraphics[width=0.6\linewidth]{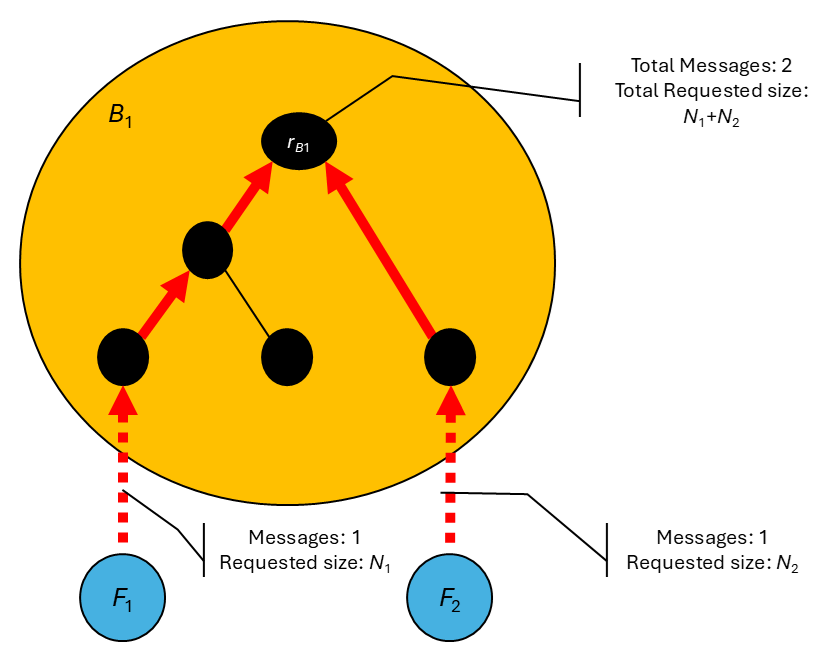}
		%\caption*{ZZZDiagram 1}
	\end{minipage}
	\hfill
	\begin{minipage}[T]{0.5\textwidth}
		\small
		\textbf{Step 2b.}  
		The intrinsic step of the convergecast cycle. A convergecast computation is held over base fragment $B_1$. The calculation summarizes the sizes that were requested by the fragments $F_1$, $F_2$ that are adjacent to vertices contained in $B_1$. Every vertex in $B_1$ caches the sum that it upcasts. The convergecast also summarizes the number of messages.
	\end{minipage}
	
	\begin{minipage}[H]{0.55\textwidth}
		\centering
		\includegraphics[width=0.6\linewidth]{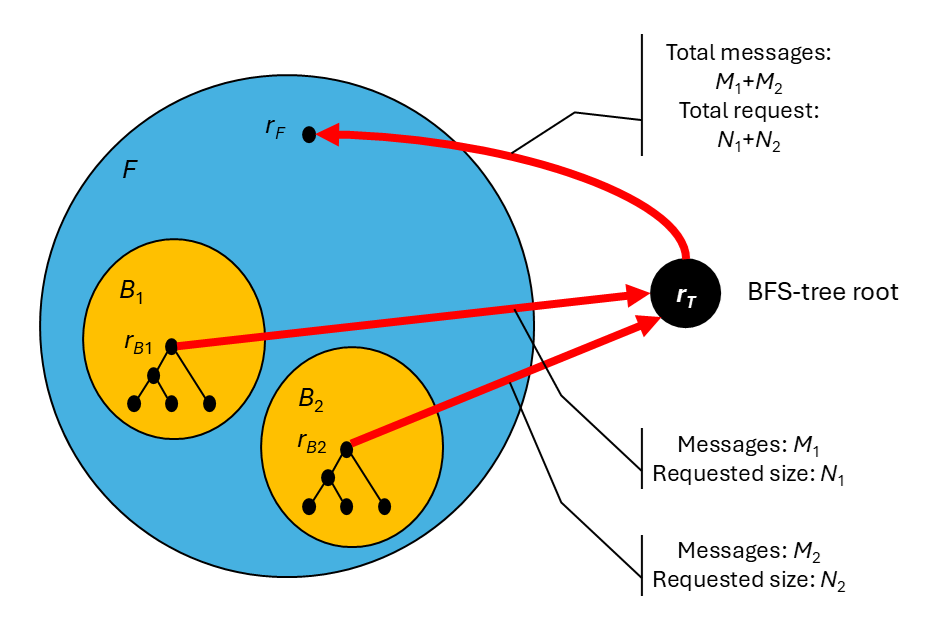}
	\end{minipage}
	\hfill
	\begin{minipage}[T]{0.5\textwidth}
		\small
		\textbf{Step 2c.}  
		The extrinsic step of the convergecast cycle. Pipelined convergecast computations, for each fragment $F$, are held over the BFS tree $T$. Base fragments $B_1$, $B_2$ contained in the fragment $F$ notify the leader vertex of $F$, $r_F$, about the number and total size of requests they have collected. Base fragment roots $r_{B1}$, $r_{B2}$ cache the sizes they request.
	\end{minipage}
	
	\begin{minipage}[T]{0.55\textwidth}
		\centering
		\includegraphics[width=0.6\linewidth]{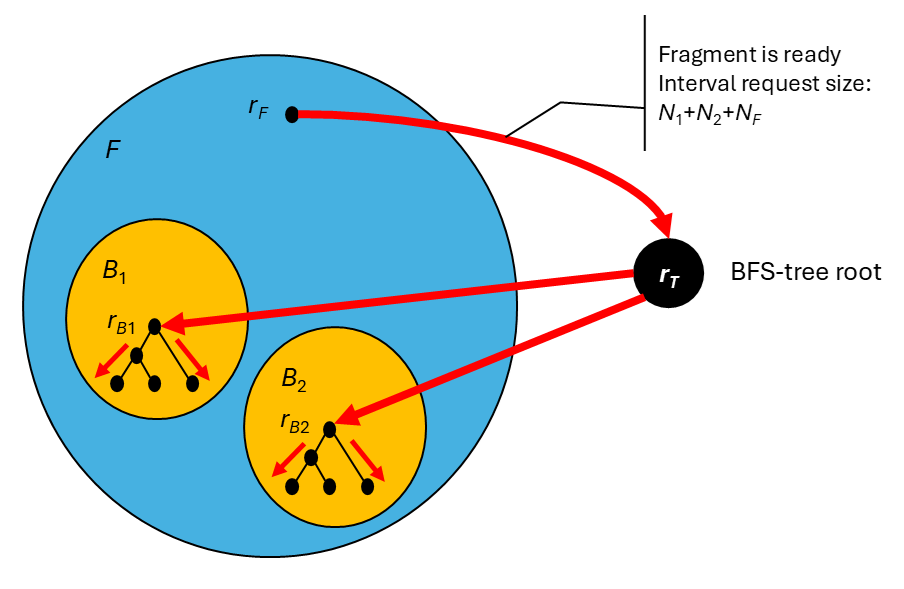}
		
	\end{minipage}
	\hfill
	\begin{minipage}[T]{0.5\textwidth}
		\small
		\textbf{Step 2d.}  
		A broadcast cycle. If messages from all children of $F$ have reached leader vertex $r_F$, then $r_F$ announces that $F$ is ready, and broadcasts the total interval request size of $F$. The size of the requested interval equals the sum of requests from children, and of the interval size $N_F$ needed for $F$'s base fragments.
	\end{minipage}
	\caption{A detailed illustration of a single iteration of step 2 of the slot set generation procedure described in Section \hyperref[5.3_SlotSetGen]{5.3}.}
\end{adjustwidth}
\end{figure*}

The following claims establish the correctness and complexity of the slot set generation procedure. 
\begin{Claim}
	\label{C5.6_SlotGen_correctness}
	For every phase, the slot set created using the above method is correct, i.e.: the leader vertex of every fragment holds the first slot, and the subrange for every fragment is consecutive, and does not intersect with other subranges. %See proof in Appendix \hyperref[Appendix.D_C5.5_proof]{C}.
\end{Claim}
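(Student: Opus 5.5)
We argue by induction over the steps of the slot set generation procedure, keeping at each step an invariant on the intervals held. The invariant is that the intervals currently held by distinct entities (subtree roots, fragments, base fragments, MWOE-adjacent vertices) are pairwise disjoint. Moreover, each interval held on behalf of some collection of base fragments has length exactly equal to the number of base fragments in that collection. The proof splits into an upward part (steps 1--2) and a downward part (steps 3--5).

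\textbf{Upward part.} We first show by induction on the iteration of step 2 that counts are correct. After iteration $\ell$, every fragment at height at most $\ell-1$ in its virtual subtree has become ready. Its published request size equals the number of subsets in its virtual sub-subtree. This uses the correctness of convergecast and broadcast cycles (Claims \hyperref[CB.2_ConvCastCorrectness]{B.2} and \hyperref[CB.6_BCastCorrectness]{B.6}). It also uses the fact that step 1 gives each leader the number of children it must wait for. Since subtree depth is at most $d=3$, after $d$ iterations every subtree root knows the total number of base fragments in its subtree. Alongside this we record the caching invariant. Every vertex or base-fragment root that contributed a partial sum in 2(a), 2(b) or 2(c) has cached exactly the value it upcast. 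The last value cached in 2(b) is the cumulative sum, as explained in the footnote. Consequently, the cached values at the children of any node of the aggregation path sum to the value that node forwarded.

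\textbf{Downward part.} Step 3 is the rangebottom-$1$ variant of interval allocation. Requests equal subtree sizes, so subtree roots receive disjoint intervals whose union is $[1,\sum]$, with lengths equal to their subtree sizes. Each designated leader takes the first slot as its p-slot. For step 4 we apply Claim \hyperref[C5.5_IntervalAllocSubroutine]{5.5} at each level of the retracing. In 4(a), a leader splits off a prefix of length $|\mathcal{B}_F|$. In 4(b), base fragments re-request their cached sums. These sums total exactly the remaining length, by the caching invariant, so Claim \hyperref[C5.5_IntervalAllocSubroutine]{5.5}(1) gives disjoint subintervals inside the remainder. Steps 4(c) and 4(d) follow the same pattern along intra-base-fragment trees and across MWOEs. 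Each is an interval allocation in which requests sum to the held interval, again by the cached sums. So disjointness is preserved, and Claim \hyperref[C5.5_IntervalAllocSubroutine]{5.5}(2) handles many fragments acting concurrently. After $d$ iterations, every fragment $F$ of every subtree holds a private interval of length $|\mathcal{B}_F|$. These intervals are disjoint across all fragments and are contained in its subtree root's interval.

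\textbf{Conclusion.} In step 5 each old fragment leader gives one slot per base fragment from its reserved interval. Within a subtree, the union of these intervals is exactly the subtree's interval. This is because at every split the requested lengths sum to the held length, so there are no gaps and the union is consecutive. The designated leader's p-slot is the first slot of that interval. Its own base fragment needs no q-slot, which is exactly the accounting that makes the count come out to $k$. So every new fragment gets a consecutive p-q range starting at its leader's slot, and these ranges are disjoint across new fragments.

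\textbf{Main obstacle.} The hardest step is showing that the retracing in step 4 reproduces the aggregation tree of step 2 exactly. Three points need care. First, that the cumulative cached values in 2(b), updated across several iterations, still split correctly when re-upcast in a single cycle in 4(c). Second, that a vertex adjacent to several child MWOEs, together with its own subtree contribution, splits its interval consistently. Third, that gaps are avoided. I would handle these by proving a single lemma. It states that for every vertex $v$ on an aggregation path, the cached value of $v$ equals the sum of the cached values of its contributing children plus $v$'s own MWOE requests. Given this lemma, each step-4 allocation is an exact partition.
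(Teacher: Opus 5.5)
Your route is the paper's: counts are correct going up through step~2 by Claim~B.2, and intervals split exactly and disjointly going down through steps~3--5 by Claim~5.5. Your exact-length invariant (requested lengths always sum to the held length, so there are no gaps and the new p-q range is consecutive) makes explicit what the paper only asserts ``by construction''.

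What you are missing is the outer induction over phases, which is the organizing hypothesis of the paper's proof. Every tool you invoke runs on the \emph{current} phase's slot set. Claims~B.2 and~B.6 and the allocation cycles of Claim~5.5 all presuppose that the current p-q ranges are consecutive and disjoint, and that every vertex knows its fragment's slot and leader label. So does step~3, where the designated leaders' current slots serve as q-slots and must be distinct so that $r_T$ receives their requests in separate rounds. That presupposition is exactly the statement you are proving, for the previous phase. Likewise, the first iteration of 2(a), the size published in 2(d), and the prefix reserved in 4(a) all use $|\mathcal{B}_F|$. A leader knows this number only because it was computed as its subtree total in step~2 of the previous phase. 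You need to carry both facts as an induction hypothesis and check them for the first phase: there each fragment is a single base fragment holding a unique slot from the setup of Claim~5.1, so $|\mathcal{B}_F|=1$. To close the step, you also need a final broadcast over the virtual subtrees, using the old slots, that tells every vertex its new fragment's p-slot and the new leader's routing label, as the paper does in its last paragraph.

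A smaller point concerns two sentences in your upward paragraph: ``has cached exactly the value it upcast'' and ``the cached values at the children of any node of the aggregation path sum to the value that node forwarded''. Both are false when a base fragment is active in several iterations of step~2, because the caches are cumulative while each iteration forwards only an increment. Only the cumulative statement of your lemma is correct. For that lemma to yield 4(b), the cache kept by base-fragment roots in 2(c) must also be cumulative across iterations; the paper's footnote states this only for 2(b). Otherwise a base fragment active in two iterations re-requests only its last increment, and the re-requests in 4(b) no longer sum to the interval length minus $|\mathcal{B}_F|$.
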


\begin{proof}
	\label{C5.6_proof}
	By construction, in step 3 every designated leader vertex reserves the first slot of a distinct consecutive interval, and allocates the remainder of the interval to other fragments in the subtree. It remains to establish that the allocated intervals are large enough, and that slots reach the appropriate subsets. The correctness of interval sizes is guaranteed if the numbers of subsets which are calculated and submitted to leader vertices of root fragments in step 2 are correct. Slots reach the correct subsets if convergecast paths are retraced correctly in step 4.
	
	For every phase, we make the induction assumption that the current slot set is correct. We also assume that the last count of subsets was correct, and every fragment knows the number of subsets that it contains. These assumptions hold for the first phase, where every fragment consists of a single subset (i.e., base fragment). 
	
	Recall that step 2 is iterated three times. Step 2(a) is correct in the first iteration because fragments know their subset count. In every subsequent iteration, 2(a) succeeds if the previous iterations have been correct. Step 2(b) relies on convergecast and step 2(c) relies on the correctness of the convergecast cycle (See Claim \hyperref[CB.2_ConvCastCorrectness]{B.2}).
	
	We establish the correctness of step 4 from the correctness of the broadcast procedure used in 4(b) and the broadcast cycle used in 4(c). The broadcast cycle involves the interval allocation subroutine (see Claim \hyperref[C5.5_IntervalAllocSubroutine]{5.5}). The allocation of single slots in step 5 is also an instance of the interval allocation subroutine.
	
	We note for completeness that in order to satisfy all the requirements detailed in Sections \hyperref[3.1_preq]{3.1} and \hyperref[4.1_slotOrdering]{4.1}, every vertex is required to know the time slot and the routing label of its new fragment leader vertex. We observe that when the slot set creation process is complete, the leader vertex of every new fragment discovers its new time slot. Both the time slots and the routing labels of new leader vertices can then be announced by performing a broadcast over the virtual subtrees, using the old slot sets to communicate. This concludes the process of merging the new fragments.
\end{proof}

\begin{Claim}
	\label{C5.7_slotSet_Cplx}
	For every phase, the above method for creating the slot set has the time and message complexity of $O(1)$ communication cycles. The slot set created using the above method, and the creation process, both have memory complexity of $O(\log n)$. %See proof in Appendix \hyperref[Appendix.C_C5.6_proof]{C}.
\end{Claim}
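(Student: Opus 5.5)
The plan is to walk through the five steps of the slot set generation procedure and bound each separately, reusing the per-cycle bounds of Sections 3 and 4 (Claims B.3--B.5 and B.7--B.9) together with Claim 5.5. For time and messages, it suffices to count the constant number of primitive operations in each step. Each primitive is one of the following: a single round of messages over MWOEs, an intrinsic convergecast or broadcast over base fragments, a convergecast cycle, a broadcast cycle, or an interval allocation cycle.

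Step 1 uses one round of pings over MWOEs and one convergecast cycle. Step 2 is iterated $d=3$ times. Each iteration uses one round over MWOEs (2a), one intrinsic convergecast (2b), one extrinsic convergecast (2c) and one broadcast cycle (2d). Step 3 is one interval allocation cycle, which the paper notes is a single broadcast cycle. Step 4 is again iterated $d=3$ times. Each iteration consists of local work (4a), an interval allocation cycle (4b), an intrinsic broadcast over base fragments (4c), and a round over MWOEs followed by a convergecast cycle (4d). Step 5 is one interval allocation cycle, and the final announcement of new slots and labels is $O(1)$ broadcasts over virtual subtrees of depth at most 3, so it costs $O(1)$ cycles by Claim 5.3.

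The intrinsic steps cost $O(D_b)$ rounds and $O(n)$ messages. The MWOE rounds cost $O(\log n)$ rounds and $O(n\log n)$ messages, since each vertex sends at most one message per MWOE it knows about, and the number of MWOEs is less than the number of fragments. Both costs are dominated by the cost of a broadcast cycle. Summing gives $O(1)$ communication cycles in time and messages.

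For memory, there are two parts. First, the slot set itself: each vertex stores its fragment slot and, if it is a subset root, its subset slot and leader label. Only the slot values are new relative to the existing state, and those are $O(\log n)$ bits. Second, the cached values during creation. I will enumerate every cached quantity and argue that each vertex caches $O(1)$ numbers bounded by $2K_b\le 2n$. These are:
\begin{enumerate}
\item the request size cached by the MWOE-adjacent vertex $u_{adj}$ in 2(a);
\item the partial sum cached in 2(b);
\item the value cached by base fragment roots in 2(c);
\item the running sum and message counter kept by leaders $r_F$;
\item the current range bottom at $r_T$ in Algorithm 2.
\end{enumerate}

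The main obstacle is point (1): a vertex $u_{adj}$ might be adjacent to several MWOEs. On the child side, however, each fragment has exactly one outgoing MWOE, so only one vertex per child fragment caches a request, and it caches exactly one value. On the parent side, a vertex $v_{adj}$ may receive many incoming requests. I will argue that it need not store them separately. In 4(d), child endpoints re-send their cached sizes, so $v_{adj}$ only needs its aggregated cached sum from 2(b). It can then hand out consecutive sub-intervals on the fly, keeping a single running pointer, exactly as in Algorithm 2. The same running-pointer argument handles a vertex in 2(b)/4(c) with many children in $T_B$: children re-upcast their cached values and the parent streams sub-intervals to them one at a time. The footnote's accumulation across iterations (updating a single cached value to $n_1+n_2$) keeps this at one value per vertex. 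Hence the total extra memory is $O(\log n)$ bits, and by Claims B.5 and B.9 the cycles themselves stay within their stated budgets.
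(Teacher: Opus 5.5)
Your proposal is correct and follows essentially the same route as the paper: each step of the procedure is a constant number of intra-fragment or MWOE exchanges, each realizable with $O(1)$ communication cycles, and the extra state is a constant number of $O(\log n)$-bit counters and cached sums per vertex. Your memory audit is more explicit than the paper's, which lists only fragment-level counters and leaves implicit both the single-accumulated-value and the running-pointer handling of vertices that receive many requests in steps 2(a)--(b) and 4(c)--(d).
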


\begin{proof}
	\label{C5.7_proof}
	In the procedure that was described above, every step consists of $O(1)$ message exchanges within fragments, or between adjacent fragments. By Claim \hyperref[C5.3]{5.3}, every such exchange can be carried out through $O(1)$ communication cycles.
	
	We review additional memory space required by these steps, beyond the needs of maintaining communication cycles. Memory is used for counting base fragments, or counting responses from child fragments. Fragments store their number of subsets. Every fragment also stores the total number of base fragments contained in its  descendants in the virtual tree. Additionally, during an execution of the algorithm, every fragment needs to know the number of its children in the virtual subtree. All these values can be stored using $O(\log n)$ bits.
\end{proof}

If we are solving the MSF problem, terminated fragments \textit{do} participate in slot allocation. If a fragment $F$ has terminated during phase $t$ (or during a previous phase), at the end of phase $t$ its leader vertex $r_F$ requests a slot range, of the same size that $F$ was using during phase $t$.

When we are solving the MSF problem, and some fragment $F$ is terminated during the slot set generation procedure described here, fragment $F$ requests and receives a range within the new set. The leader vertex $r_F$ of a terminated fragment $F$ does not take part in steps 1, 2 and 4 of the process described above. During part 3, the leader vertex sends a range request. The range is meant only for the base fragments that $F$ contains. In step 5, $r_F$ delivers new slots for phase $t+1$ to its base fragment roots, using the slots of phase $t$ for communication. In consequence, at the end of each phase, a new slot set is generated that enables all non-small connected components to participate in communication cycles.

\subsection{Using the Algorithm to Solve Partwise Aggregation}
\label{5.4_RefToPWA}

At the completion of each phase, after the slot set is generated, we are able to perform convergecast and broadcast over every fragment. Thus, we can perform partwise aggregation over every fragment. We combine this fact with the generalized MSF algorithm in order to solve the partwise aggregation problem. Given a partition $\mathcal{P}=\{P_i\}_{i=1}^{i=N}$, a function $f$ and inputs $\{x_v|v\in V\}$, we first find the MSF for the subgraph induced by partition $\mathcal{P}$ (see Equation (\hyperref[eqn_subgraph_weights]{1}); an edge $e=(u,v)$ belongs to $H$ iff $u$ and $v$ belong to the same part $P_i$). Then, using the last slot set produced by the algorithm, we calculate $f$ over the fragments spanning the parts, and broadcast the respective outcomes.
 
 \begin{Claim}
 	\label{Cor5.9_PWA_FOR_MSF}
 	
 	 When the execution of the MSF algorithm is complete, fragments are the minimal spanning trees of the components of $H$. The slot set for these fragments, created in the last executed phase, makes it possible to perform a PWA computation on these connected components. 
 \end{Claim}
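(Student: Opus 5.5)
The plan splits into two parts: first the claim that the final fragments are exactly the minimum spanning trees of the components of $H$, and then the claim that the last slot set supports convergecast and broadcast cycles over these fragments. Together these give a PWA computation.

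For the first part, I would argue via the standard Bor\r{u}vka invariant applied to the weights $w_H$ of Equation (\hyperref[eqn_subgraph_weights]{1}). Every fragment is at all times a subtree of the MSF of $H$, because each MWOE selected with respect to $w_H$ has finite weight; hence it lies in $E_H$ and is, by the cut property, an MSF edge. Conversely, a fragment stops only in two cases. It may become a small component at the transition point, or it may become a terminated fragment in some phase. In both cases no MWOE candidate from $E_H$ exists, so the fragment is closed under $E_H$-adjacency and therefore spans a whole connected component of $H$. The algorithm halts only when the BFS-root $r_T$ detects, via the extra convergecast cycle of Section \hyperref[5.2.1_MWOEDiscovery]{5.2.1}, that no active fragments remain. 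Since every phase at least halves the number of active fragments, this happens after $O(\log n)$ phases. At that point every fragment is the MST of a component of $H$.

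For the second part, I would invoke Claim \hyperref[C5.6_SlotGen_correctness]{5.6} by induction on phases, extended to the MSF setting. The key point is the paragraph following Claim \hyperref[C5.7_slotSet_Cplx]{5.7}: terminated fragments still request a slot range equal to their current subset count in step 3. They then redistribute it to their base fragments in step 5, using the old slots. So after the last executed phase, every non-small fragment, terminated or not, holds a consecutive p-q range, with its leader on the first slot and each of its base fragments on a distinct slot. In addition, every vertex knows its leader's routing label. These are exactly the prerequisites of Sections \hyperref[3.1_preq]{3.1} and \hyperref[4.1_slotOrdering]{4.1}. Claim \hyperref[CB.2_ConvCastCorrectness]{B.2} then gives correctness of one convergecast cycle computing $f$ over each fragment, delivered to its leader. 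Claim \hyperref[CB.6_BCastCorrectness]{B.6} gives correctness of one broadcast cycle announcing the result to all vertices of the fragment. Vertices of small components and of fragments without slots only forward traffic, so they do not disturb the separation guaranteed by Claim \hyperref[CB.1_CycleSeparatesFrags]{B.1}.

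Small components must be handled separately. They have diameter $O(D+\sqrt n)$, so they compute $f$ by a direct convergecast and broadcast over their own spanning trees in $O(D+\sqrt n)$ rounds, in parallel with the cycles. Finally, since an edge lies in $E_H$ iff both endpoints are in the same part $P_i$, the components of $H$ are exactly the parts, because each part induces a connected subgraph.

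The main obstacle I expect is making sure the slot set of the final phase is valid for terminated fragments. These fragments skip steps 1, 2 and 4 of Section \hyperref[5.3_SlotSetGen]{5.3}, so I would check two things. First, the intervals they obtain in step 3 are disjoint from those of active subtrees; this follows from the rangebottom-$1$ variant of Claim \hyperref[C5.5_IntervalAllocSubroutine]{5.5}. Second, their leader slot and routing labels are unchanged, since no merging occurs.
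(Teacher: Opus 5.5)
Your proposal is correct and follows essentially the paper's route: Claim 5.6 supplies a valid slot set for the final fragments, one convergecast cycle and one broadcast cycle (Claims B.2 and B.6) then carry out the aggregation, and small components aggregate directly over their own spanning trees. Beyond this, you also justify the first sentence of the claim (via the cut property and the termination-detection convergecast), which the paper simply takes for granted.

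Two minor points of care. First, the paper runs the small-component aggregation \emph{after} the cycles rather than in parallel; otherwise a $T$-edge inside a small component could be asked to carry two messages in the same round, although interleaving rounds would also fix this. Second, a terminated fragment's leader \emph{slot} does change with its newly allocated range; only the leader vertex and its routing label remain the same.
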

 \begin{proof}
 	%\label{Appendix.C_C5.7_PWA_FOR_MSF_proof}
 	By Claim \hyperref[C5.6_SlotGen_correctness]{5.6}, the slot set is correct at the end of every phase. The slot set is therefore correct at the end of the last phase, when fragments are equal to (non-small) connected components. For any given aggregation function $f$, a convergecast cycle can compute the aggregate of $f$ over inputs for each connected component $C$ of the MSF. A broadcast cycle can then announce the result $f(C)$ to the vertices of every connected component $C$. Once non-small components have performed PWA using the latest slot set, small components can perform PWA with convergecast and broadcast over their respective spanning trees within $O(D+\sqrt n)$ rounds.
 \end{proof}
 
 This procedure has the time and message complexity of $O(1)$ communication cycles, and does not incur any memory complexity beyond the size of inputs for the function $f$.

 \begin{corollary}
	\label{C5.9.1_PWA}
	Using the MSF algorithm, we can carry out a partwise aggregation computation for a given partition $\mathcal{P}=\{P_1,...P_k\}$, commutative associative function $f$ and set of inputs $\{x_v|v\in V\}$ held by the vertices in $V$.
\end{corollary}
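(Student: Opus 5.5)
The plan is to reduce PWA to the MSF problem of Section 2.4 and then invoke Claim 5.8 on the final slot set.

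Given the partition $\mathcal{P}$, I define the subgraph $H=(V,E_H)$ by putting $e=(u,v)\in E_H$ iff $u$ and $v$ lie in the same part $P_i$. Each endpoint decides this locally after one round in which neighbors exchange part identifiers, so $E_H$ never has to be stored explicitly. Whenever an edge weight is needed, $w_H(e)$ from Equation (1) is recomputed on the fly. Since every $P_i$ induces a connected subgraph, the connected components of $H$ are exactly the parts $P_i$. Running the MSF algorithm on $H$ therefore yields a forest in which every part is spanned by one tree. That tree is either a small component, of size $O(D+\sqrt n)$, or a non-small fragment that was terminated and has kept receiving slot ranges in every later phase, as described at the end of Section 5.3.

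Next I apply Claim 5.8, which rests on Claim 5.6. The slot set produced in the last phase is a correct p-q slot set for all non-small fragments, and these fragments coincide with the non-small parts. For them I run one convergecast cycle with function $f$ and inputs $x_v$. By Claim B.2, each leader then holds $f$ over its whole part. I follow this with one broadcast cycle, which by Claim B.6 delivers the value to every vertex of the part. Small components handle themselves: they perform convergecast and broadcast directly over their MSF trees in $O(D+\sqrt n)$ rounds. While the cycles run, vertices of small components only forward traffic and never inject input.

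Finally I bound the cost. Besides the MSF algorithm itself, the procedure uses $O(1)$ communication cycles, $O(D+\sqrt n)$ extra rounds, and memory of $O(\log^2 n)$ plus the size of an $f$-value, because partial aggregates are never queued (Claims B.5 and B.9).

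The main obstacle is the memory of the reduction. One has to check that vertices never store $H$, which is why $w_H$ is recomputed locally from part identifiers. One also has to check that terminated fragments remain correctly slotted across all subsequent phases, so that the last slot set really covers every non-small part. I would make the second point rigorous by induction on phases, extending the argument of Claim 5.6 to include terminated fragments' range requests in step 3 and their slot redistribution in step 5.
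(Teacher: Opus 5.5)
Your proposal is correct and follows the paper's proof: you reduce to MSF on the induced subgraph $H_P$ of same-part edges, avoid storing $E_H$ by deriving $w_H$ from part identifiers, and then run one convergecast cycle and one broadcast cycle with the last-phase slot set, with small components aggregating directly over their own trees. The one point to state explicitly is how the paper makes ``recomputing $w_H$ on the fly'' work: at the start of every phase each vertex re-sends its part ID together with its fragment ID, and picks its MWOE candidate among neighbors with the same part ID and a different fragment ID, so no vertex ever stores $\Theta(\deg(v))$ neighbor part IDs.
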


\begin{proof}
	
	We observe that on a graph $G=(V,E)$, a partition $\mathcal{P} = \{P_i\}_{i=1}^{i=N}$ induces a subgraph $H_P = (V, E_H)$ such that its connected components coincide with the parts of $\mathcal{P}$. We define $E_H$ to be the subset of edges $e\in E, e = (u,v)$ such that endpoints $u,v$ belong to the same part $P_i\in \mathcal{P}$.
	
	We also observe that in order to apply the MSF algorithm to the induced subgraph $H_P$, it is not necessary to store all the edge weights of $E_H$. The weight of an edge $e=(u,v)$ in $H_P$ can be deduced by the endpoints $u, v$ by exchanging their respective part identifiers (see Equation (\hyperref[eqn_subgraph_weights]{1})). At the start of each phase, every vertex $v$ sends a message with its part ID and fragment ID to all its neighbors. Upon receiving these messages from all neighbors, each vertex searches for the MWOE among edges connecting it to a neighbor with the same part ID and a different fragment ID.	
	
	It is therefore possible to apply the MSF algorithm to $H_P$ using only $O(\log n)$ additional bits of memory - the space required by every vertex $v\in V$ to store the label of its part $P_v$, and aggregation input $x_v$.
	
	Given a graph $G=(V,E)$ and partition $\mathcal{P} = \{P_i\}_{i=1}^{i=N}$, in order to carry out a partwise aggregation computation, first construct an MSF for the induced subgraph $H_P$. Next, use the slot set generated in the last phase of the MSF's construction for convergecast and broadcast.
\end{proof}
\subsection{Analysis of Complexity}
\label{5.5_Cplx}

In this section we summarize the complexity analysis of the entire algorithm. We start by analyzing the complexity of a single phase.

\begin{Claim}
\label{C5.8_Cplx_phase}
A single phase has time complexity of $O((D+\sqrt n) \cdot (\log n) \cdot (\log^* n))$ communication rounds, and message complexity of $O(m + n\log n\log^* n)$ messages. The memory complexity of a phase is $O(\log^2 n)$ bits.
\end{Claim}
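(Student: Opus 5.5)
The plan is to decompose a single phase into its constituent parts and bound each one. The parts are: the initial MWOE-candidate discovery round; one convergecast cycle and one broadcast cycle for MWOE discovery (Section 5.2.1), together with the extra termination-detection convergecast cycle in the MSF case; the $O(\log^* n)$ virtual rounds of the GKP partitioning step (Corollary 5.4); and the slot-set generation procedure (Claim 5.7). Each virtual round is simulated by $O(1)$ communication cycles plus $O(1)$ rounds over MWOEs (Claim 5.3, Table 2). Slot-set generation also costs $O(1)$ cycles. So a phase consists of $O(\log^* n)$ communication cycles, plus $O(\log^* n)$ single rounds over MWOEs, plus one round of $O(m)$ messages. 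The whole proof therefore reduces to bounding one cycle and multiplying by $\log^* n$.

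For a single cycle, I would instantiate the parameters of Sections 3 and 4. By Claim 5.1, the number of subsets is $K_b=O(\min\{n/D,\sqrt n\})$ and $D_b=O(\max\{D,\sqrt n\})$. The number of fragments is $K\le K_b$, and the number of slots is at most $2K_b$. The depth of the BFS tree is $d(T)=O(D)$.
\begin{itemize}
\item \textbf{Time.} By Claims B.3 and B.7, a cycle takes $O((d(T)+K_b)\log n+D_b)=O((D+\sqrt n)\log n)$ rounds. Multiplying by $O(\log^* n)$ cycles gives the stated time bound. The $O(\log^* n)$ MWOE rounds each cost $O(\log n)$ rounds, since a message carries at most an $O(\log^2 n)$-bit label, and this is dominated.
\item \textbf{Messages.} By Claims B.4 and B.8, a cycle sends $O(d(T)\cdot K_b\log n+n)$ messages. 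The key estimate is $D\cdot K_b=O(D\cdot n/D)=O(n)$, exactly as in Claim 2.4. So one cycle costs $O(n\log n)$ messages, and $O(\log^* n)$ cycles cost $O(n\log n\log^* n)$. The MWOE messages between fragments are $O(n)$ per virtual round, since each is sent over a single tree edge per fragment. Adding the $O(m)$ candidate-discovery round yields $O(m+n\log n\log^* n)$.
\end{itemize}

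For memory, each cycle uses $O(\log^2 n)$ bits by Claims B.5 and B.9. The simulated GKP state (the 3-coloring, matching status and child counts) takes $O(\log n)$ bits per virtual node, and every vertex of the fragment holds a copy. Claim 5.7 adds $O(\log n)$ cached bits. The persistent data are the routing table, own label, leader label and slots, costing $O(\log^2 n)$ bits by Claim 5.2. The step needing the most care is arguing that memory does not accumulate across the $O(\log^* n)$ cycles of a phase. Cycles run sequentially, and after each one a vertex retains only the current virtual-node state. Per-cycle buffers are discarded, and cached values are overwritten rather than appended, as in the footnote to step 2(b). Hence the maximum, not the sum, of the per-step memory bounds applies, giving $O(\log^2 n)$ bits.

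I expect the main obstacle to be the message bound, specifically checking that each cycle, including the broadcast cycle with its $K+K_b$ slots, is charged only $d(T)\cdot K_b$ messages and not something larger. The key ingredient is the bound $D\cdot K_b=O(n)$ from Lenzen's base-fragment construction.
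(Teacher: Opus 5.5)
Your proposal is correct and follows essentially the same route as the paper. You count $O(\log^* n)$ communication cycles per phase via Claims 5.3 and 5.7, bound each cycle with Claims B.3, B.4, B.7 and B.8 using $d(T)=O(D)$ and $K_b=O(\min\{n/D,\sqrt n\})$ so that $d(T)\cdot K_b=O(n)$, and add the single $O(m)$-message candidate round; your extra accounting of the intrinsic-step terms ($D_b$ and $n$), the MWOE rounds and the non-accumulation of memory across cycles just makes explicit what the paper's shorter proof leaves implicit.
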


\begin{proof}
	\label{C5.8_Cplx_phase_proof}
	It takes a single round, and $O(m)$ messages, for every vertex to find its MWOE-candidate. By Claims \hyperref[C5.3]{5.3}, \hyperref[C5.7_slotSet_Cplx]{5.7}, the remainder of the phase has the time and message complexity of $O(\log^* n)$ communication cycles, and memory complexity of $O(\log^2 n)$ bits.
	
	By Claims \hyperref[CB.3_ConvCastTimeCplx]{B.3}, \hyperref[CB.7_BCastCTimeCplx]{B.7}, every communication cycle requires $O((d(T) + K_b)\log n)$ rounds. By Claims \hyperref[CB.4_ConvCastMsgCplx]{B.4}, \hyperref[CB.8_BCastCMsgCplx]{B.8}, a communication cycle requires $O(d(T)\cdot K_b\cdot\log n)$ messages, where $K_b$ is the number of subsets, and $d(T)$ is the depth of the BFS tree. The depth of the BFS tree is $O(D)$ for graph diameter $D$. We choose $K_b$, the number of base fragments, using the same reasoning as in \cite{E20Simple} (see Claim \hyperref[C2.4_E20_MsgCplx]{2.2}). Base fragment diameter is $O(\max\{D,\sqrt n\})$ and the number of base fragments is $K_b=\min\{\frac{n}{D},\sqrt{n}\}$. The overall message complexity for a communication cycle is therefore $O(d(T)\cdot K_b\cdot\log n)=O(n\log n)$.
\end{proof}

In the next claim we summarize the complexity of the second part of the algorithm.

\begin{Claim}
\label{C5.9_Cplx_alg}
Overall, the second part of the algorithm has time complexity of $O((D+\sqrt{n})(\log^2 n)(\log^* n))$ rounds and message complexity of $O(m\log n + n\log^2n\log^*n)$ messages. It has memory complexity of $O(\log^2 n)$ bits.
\end{Claim}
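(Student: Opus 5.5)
The plan is to multiply the per-phase bounds of Claim \hyperref[C5.8_Cplx_phase]{5.8} by the number of phases executed in the second part, and to add the one-time cost of the setup procedure from Claim \hyperref[C5.2]{5.2}. Memory does not accumulate over phases, so for memory only the maximum over phases matters.

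The first step is to bound the number of phases. At the start of the second part there are $K_b = O(\min\{n/D,\sqrt n\})$ base fragments, and every initial fragment is a single base fragment. In each GKP phase, every fragment that has an MWOE belongs to a virtual subtree containing at least two fragments (see Section \hyperref[2.3.1_GKP]{2.3.1}). Hence the number of non-terminated fragments at least halves in each phase. Terminated fragments in the MSF case never merge again and are merely carried along. It follows that after $O(\log K_b) = O(\log n)$ phases no active fragment remains, and the BFS root $r_T$ detects termination through the additional convergecast cycle described in Section \hyperref[5.2.1_MWOEDiscovery]{5.2.1}, which adds only $O(1)$ cycles per phase.

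With $O(\log n)$ phases, the bounds follow directly. The time bound is $O(\log n)\cdot O((D+\sqrt n)\log n\log^* n)$, plus the setup cost $O((D+\sqrt n)\log n)$, which gives $O((D+\sqrt n)\log^2 n\log^* n)$. The message bound is $O(\log n)\cdot O(m+n\log n\log^* n)$, plus the setup cost $O(m+n\log n)$, which gives $O(m\log n + n\log^2 n\log^* n)$. For memory, I would argue that the only state carried from one phase to the next is the BFS tree data, the routing label and table, the vertex's fragment slot and subset slot, the leader's label, and $O(1)$ counters and cached interval sizes. Claims \hyperref[C5.2]{5.2} and \hyperref[C5.7_slotSet_Cplx]{5.7} bound all of these by $O(\log^2 n)$ bits. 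Temporary variables are overwritten in each phase. Each vertex also records the MWOEs it becomes adjacent to, which is at most one per phase as the $v_{adj}$ of its fragment, so $O(\log n)$ edges in total. The paper already says a vertex stores $O(\log n)$ MST edges; I would count this output storage separately, as the model of Section \hyperref[2.1_The computational Model]{2.1} does.

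I expect the main obstacle to be this last point, namely that memory does not grow across phases. One would have to check that cached partial sums from step 2 of slot generation are overwritten rather than accumulated across phases. One would also have to check that a vertex adjacent to MWOEs of several child fragments needs only $O(1)$ cached values per phase, because the footnote's update rule merges them into one sum. If the per-phase MST edge records (each of size $O(\log n)$ bits) are counted as memory rather than output, they contribute $O(\log^2 n)$ bits over $O(\log n)$ phases, which is still within the bound.
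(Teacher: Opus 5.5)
Your proposal is correct and follows the paper's proof. Like the paper, you multiply the per-phase bounds of Claim 5.8 by the $O(\log n)$ phases, add the one-time setup cost of Claim 5.2, and charge the at most $O(\log n)$ recorded MST edges (one per phase, $O(\log n)$ bits each) to get $O(\log^2 n)$ memory; you additionally justify the phase count by the halving of non-terminated fragments and argue that per-phase state is overwritten rather than accumulated, both of which the paper simply takes for granted.
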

\begin{proof}
	\label{C5.11_Cplx_alg_proof}
	The algorithm constructs MST within $O(\log n)$ phases. We also take into account the complexity of the setup process (see Claim \hyperref[C5.2]{5.2}), which is executed just once. 
	
	In addition to the memory complexity of the algorithm, we consider the memory complexity of storing the MST. We note that a vertex $v$ stores an edge $e$ only if $v$ has detected $e$ as an MWOE for the fragment that $v$ belongs to during some phase. Over the course of the algorithm, any vertex $v$ will undergo $O(\log n)$ phases. A vertex belongs to one fragment in each phase, and so it may learn up to $O(\log n)$ MST-edges overall, resulting in an upper bound of $O(\log^2 n)$ bits on its memory consumption.
\end{proof}

We have proved the following theorem:

\begin{theorem}
	Our distributed deterministic algorithm solves the MST, the MSF and the PWA problems on general n-vertex m-edge graph in the CONGEST model within time $O((D+\sqrt n)(\log^2 n)\log^* n)$, message complexity $O(m\log n + n\log^2 n \cdot \log^* n)$, using $O(\log^2 n)$ bits of memory per vertex. Moreover, while solving MSF problem requires storing the input graph H, solving the PWA problem does not require any additional memory (beyond the input graph).
\end{theorem}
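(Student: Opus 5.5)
The plan is to assemble the theorem from the two parts of the algorithm and the claims already established, summing their costs and taking the maximum of their memory bounds. I would treat the MST case first, then the MSF extension, and finally PWA via Corollary \ref{C5.9.1_PWA}.

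\textbf{First part.} The first part of the algorithm runs $O(\log n)$ Lenzen phases directly over the MST fragments. It stops once the fragments become base fragments, with $K_b=O(\min\{\sqrt n, n/D\})$ and diameter $D_b=O(\max\{D,\sqrt n\})$. Each such phase costs $O((D+\sqrt n)\log^* n)$ rounds and $O(m+n\log^* n)$ messages, as in \cite{E20Simple}. By Appendix A.2.3 its memory is $O(\log n)$ bits. The total is therefore $O((D+\sqrt n)\log n\log^* n)$ time and $O((m+n\log^* n)\log n)$ messages, both dominated by the second-part bounds.

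\textbf{Setup and second part.} The setup of Claim \ref{C5.2} adds $O((D+\sqrt n)\log n)$ rounds, $O(m+n\log n)$ messages and $O(\log^2 n)$ memory. Claim \ref{C5.1} guarantees that the communication-cycle prerequisites hold at the start of the second part. Claim \ref{C5.6_SlotGen_correctness}, applied by induction on phases, guarantees they are re-established after every phase, so all cycles are correct (Claims B.2, B.6). Claim \ref{C5.9_Cplx_alg} then gives the stated second-part bounds: $O(\log n)$ phases, each bounded by Claim \ref{C5.8_Cplx_phase}. Summing the first part, the setup and Claim \ref{C5.9_Cplx_alg} gives time $O((D+\sqrt n)\log^2 n\log^* n)$ and messages $O(m\log n+n\log^2 n\log^* n)$.

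\textbf{Memory.} The memory bound is the maximum of the memory used by the cycles (Claims B.5, B.9), by slot generation (Claim \ref{C5.7_slotSet_Cplx}), by the setup, and by the stored MST edges. The stored edges number $O(\log n)$ per vertex, as argued in the proof of Claim \ref{C5.9_Cplx_alg}. Memory is not cumulative across phases: each phase overwrites its slots and labels. The exception is the stored MST edges, which are already accounted for.

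\textbf{MSF and PWA.} For MSF, I would substitute $w_H$ from Equation (\ref{eqn_subgraph_weights}) and handle small components and terminated fragments as described in Sections \ref{5.1_GlobalSetup} and \ref{5.2_GKPPhaseOverCycle}. Three additional costs must be charged:
\begin{itemize}
\item the extra termination-detection convergecast cycle per phase, which is $O(1)$ cycles;
\item the $O(D+\sqrt n)$-round aggregation within small components;
\item terminated fragments requesting slot ranges in step 3.
\end{itemize}
None of these changes the asymptotics. For PWA, Claim \ref{Cor5.9_PWA_FOR_MSF} and Corollary \ref{C5.9.1_PWA} give the result, with $O(1)$ extra cycles. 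Its proof shows that $H_P$ is never stored: edge membership is decided by exchanging part IDs each phase, which costs only $O(\log n)$ bits beyond the input $x_v$.

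\textbf{Main obstacle.} The step I expect to need the most care is correctness across phases in the presence of terminated fragments and small components. These must keep forwarding pipelined traffic without injecting messages, while the slot sets must remain consecutive and disjoint for the active fragments. I would check this by extending the induction in Claim \ref{C5.6_SlotGen_correctness}: terminated fragments request ranges of exactly their old size, so Claim \ref{C5.5_IntervalAllocSubroutine} still yields disjoint consecutive ranges, and small components never hold slots. Correctness of the per-slot depth separation from Claim B.1 is therefore unaffected.
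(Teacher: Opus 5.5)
Your proposal is correct and takes essentially the same route as the paper. The paper obtains the theorem by combining the setup cost (Claim~\ref{C5.2}), the per-phase and overall second-part bounds (Claims~\ref{C5.8_Cplx_phase} and~\ref{C5.9_Cplx_alg}), and the MSF-to-PWA reduction (Claim~\ref{Cor5.9_PWA_FOR_MSF} and Corollary~\ref{C5.9.1_PWA}); your explicit accounting for the first-part Lenzen phases and for memory not accumulating across phases fills in details the paper leaves implicit, without changing the argument.
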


\section*{Acknowledgments}

The authors are grateful to Bernhard Haeupler and Goran Zuzic for helpful discussions about the memory requirements of the shortcut framework. 
\clearpage

\pagenumbering{gobble}

\bibliographystyle{ACM-Reference-Format}
%\bibliography{Bibliography_final}

\begin{thebibliography}{54}

%%% ====================================================================
%%% NOTE TO THE USER: you can override these defaults by providing
%%% customized versions of any of these macros before the \bibliography
%%% command.  Each of them MUST provide its own final punctuation,
%%% except for \shownote{} and \showURL{}.  The latter two
%%% do not use final punctuation, in order to avoid confusing it with
%%% the Web address.
%%%
%%% To suppress output of a particular field, define its macro to expand
%%% to an empty string, or better, \unskip, like this:
%%%
%%% \newcommand{\showURL}[1]{\unskip}   % LaTeX syntax
%%%
%%% \def \showURL #1{\unskip}           % plain TeX syntax
%%%
%%% ====================================================================

\ifx \showCODEN    \undefined \def \showCODEN     #1{\unskip}     \fi
\ifx \showISBNx    \undefined \def \showISBNx     #1{\unskip}     \fi
\ifx \showISBNxiii \undefined \def \showISBNxiii  #1{\unskip}     \fi
\ifx \showISSN     \undefined \def \showISSN      #1{\unskip}     \fi
\ifx \showLCCN     \undefined \def \showLCCN      #1{\unskip}     \fi
\ifx \shownote     \undefined \def \shownote      #1{#1}          \fi
\ifx \showarticletitle \undefined \def \showarticletitle #1{#1}   \fi
\ifx \showURL      \undefined \def \showURL       {\relax}        \fi
% The following commands are used for tagged output and should be
% invisible to TeX
\providecommand\bibfield[2]{#2}
\providecommand\bibinfo[2]{#2}
\providecommand\natexlab[1]{#1}
\providecommand\showeprint[2][]{arXiv:#2}

\bibitem[Awerbuch(1987)]%
        {Awerbuch87}
\bibfield{author}{\bibinfo{person}{Baruch Awerbuch}.}
  \bibinfo{year}{1987}\natexlab{}.
\newblock \showarticletitle{Optimal distributed algorithms for minimum weight
  spanning tree, counting, leader election, and related problems}
  \emph{(\bibinfo{series}{STOC '87})}. \bibinfo{publisher}{Association for
  Computing Machinery}, \bibinfo{pages}{230–240}.
\newblock
\showISBNx{0897912217}
\href{https://doi.org/10.1145/28395.28421}{doi:\nolinkurl{10.1145/28395.28421}}


\bibitem[Awerbuch et~al\mbox{.}(1990a)]%
        {awerbuch1990improved}
\bibfield{author}{\bibinfo{person}{Baruch Awerbuch}, \bibinfo{person}{Amotz
  Bar-Noy}, \bibinfo{person}{Nathan Linial}, {and} \bibinfo{person}{David
  Peleg}.} \bibinfo{year}{1990}\natexlab{a}.
\newblock \showarticletitle{Improved routing strategies with succinct tables}.
\newblock \bibinfo{journal}{\emph{Journal of Algorithms}} \bibinfo{volume}{11},
  \bibinfo{number}{3} (\bibinfo{year}{1990}), \bibinfo{pages}{307--341}.
\newblock


\bibitem[Awerbuch et~al\mbox{.}(1990b)]%
        {awerbuch1990trade}
\bibfield{author}{\bibinfo{person}{Baruch Awerbuch}, \bibinfo{person}{Oded
  Goldreich}, \bibinfo{person}{Ronen Vainish}, {and} \bibinfo{person}{David
  Peleg}.} \bibinfo{year}{1990}\natexlab{b}.
\newblock \showarticletitle{A trade-off between information and communication
  in broadcast protocols}.
\newblock \bibinfo{journal}{\emph{Journal of the ACM (JACM)}}
  \bibinfo{volume}{37}, \bibinfo{number}{2} (\bibinfo{year}{1990}),
  \bibinfo{pages}{238--256}.
\newblock


\bibitem[Awerbuch and Ostrovsky(1994)]%
        {awerbuchOstrovsky1994memory}
\bibfield{author}{\bibinfo{person}{Baruch Awerbuch} {and}
  \bibinfo{person}{Rafail Ostrovsky}.} \bibinfo{year}{1994}\natexlab{}.
\newblock \showarticletitle{Memory-efficient and self-stabilizing network
  reset}. In \bibinfo{booktitle}{\emph{Proceedings of the ACM Symposium on
  Principles of Distributed Computing}}. \bibinfo{pages}{254--263}.
\newblock


\bibitem[Awerbuch and Peleg(1992)]%
        {awerbuch1992routing}
\bibfield{author}{\bibinfo{person}{Baruch Awerbuch} {and}
  \bibinfo{person}{David Peleg}.} \bibinfo{year}{1992}\natexlab{}.
\newblock \showarticletitle{Routing with polynomial communication-space
  trade-off}.
\newblock \bibinfo{journal}{\emph{SIAM Journal on Discrete Mathematics}}
  \bibinfo{volume}{5}, \bibinfo{number}{2} (\bibinfo{year}{1992}),
  \bibinfo{pages}{151--162}.
\newblock


\bibitem[Awerbuch and Varghese(1991)]%
        {awerbuchVarghese91}
\bibfield{author}{\bibinfo{person}{Baruch Awerbuch} {and}
  \bibinfo{person}{George Varghese}.} \bibinfo{year}{1991}\natexlab{}.
\newblock \showarticletitle{Distributed program checking: a paradigm for
  building self-stabilizing distributed protocols}. In
  \bibinfo{booktitle}{\emph{FOCS}}, Vol.~\bibinfo{volume}{91}.
  \bibinfo{pages}{258--267}.
\newblock


\bibitem[Azarmehr et~al\mbox{.}(2025)]%
        {azarmehr2025massively}
\bibfield{author}{\bibinfo{person}{Amir Azarmehr}, \bibinfo{person}{Soheil
  Behnezhad}, \bibinfo{person}{Rajesh Jayaram}, \bibinfo{person}{Jakub
  {\L}{\k{a}}cki}, \bibinfo{person}{Vahab Mirrokni}, {and}
  \bibinfo{person}{Peilin Zhong}.} \bibinfo{year}{2025}\natexlab{}.
\newblock \showarticletitle{Massively parallel minimum spanning tree in general
  metric spaces}. In \bibinfo{booktitle}{\emph{Proceedings of the Annual
  ACM-SIAM Symposium on Discrete Algorithms (SODA)}}. SIAM,
  \bibinfo{pages}{143--174}.
\newblock


\bibitem[Ben~Basat et~al\mbox{.}(2025)]%
        {BCCHLS25}
\bibfield{author}{\bibinfo{person}{Ran Ben~Basat}, \bibinfo{person}{Keren
  Censor-Hillel}, \bibinfo{person}{Yi-Jun Chang}, \bibinfo{person}{Wenchen
  Han}, \bibinfo{person}{Dean Leitersdorf}, {and} \bibinfo{person}{Gregory
  Schwartzman}.} \bibinfo{year}{2025}\natexlab{}.
\newblock \showarticletitle{Bounded Memory in Distributed Networks}. In
  \bibinfo{booktitle}{\emph{Proceedings of the ACM Symposium on Parallelism in
  Algorithms and Architectures}}. \bibinfo{pages}{566--581}.
\newblock


\bibitem[Blin et~al\mbox{.}(2013)]%
        {blin2013fast}
\bibfield{author}{\bibinfo{person}{L{\'e}lia Blin}, \bibinfo{person}{Shlomi
  Dolev}, \bibinfo{person}{Maria~Gradinariu Potop-Butucaru}, {and}
  \bibinfo{person}{Stephane Rovedakis}.} \bibinfo{year}{2013}\natexlab{}.
\newblock \showarticletitle{Fast Self-Stabilizing Minimum Spanning Tree
  Construction Using Compact Nearest Common Ancestor Labeling Scheme}.
\newblock \bibinfo{journal}{\emph{arXiv preprint arXiv:1311.0798}}
  (\bibinfo{year}{2013}).
\newblock


\bibitem[Chin and Ting(1985)]%
        {ChinTing85}
\bibfield{author}{\bibinfo{person}{F. Chin} {and} \bibinfo{person}{H.~F.
  Ting}.} \bibinfo{year}{1985}\natexlab{}.
\newblock \showarticletitle{An almost linear time and O(nlogn+e) Messages
  distributed algorithm for minimum-weight spanning trees}. In
  \bibinfo{booktitle}{\emph{26th Annual Symposium on Foundations of Computer
  Science (sfcs 1985)}}. \bibinfo{pages}{257--266}.
\newblock
\href{https://doi.org/10.1109/SFCS.1985.7}{doi:\nolinkurl{10.1109/SFCS.1985.7}}


\bibitem[Cole and Vishkin(1986)]%
        {CV}
\bibfield{author}{\bibinfo{person}{Richard Cole} {and} \bibinfo{person}{Uzi
  Vishkin}.} \bibinfo{year}{1986}\natexlab{}.
\newblock \showarticletitle{Deterministic coin tossing with applications to
  optimal parallel list ranking}.
\newblock \bibinfo{journal}{\emph{Information and Control}}
  \bibinfo{volume}{70}, \bibinfo{number}{1} (\bibinfo{year}{1986}),
  \bibinfo{pages}{32--53}.
\newblock


\bibitem[Datta et~al\mbox{.}(2008)]%
        {datta2008self}
\bibfield{author}{\bibinfo{person}{Ajoy~K. Datta}, \bibinfo{person}{Lawrence~L.
  Larmore}, {and} \bibinfo{person}{Priyanka Vemula}.}
  \bibinfo{year}{2008}\natexlab{}.
\newblock \showarticletitle{Self-stabilizing leader election in optimal space}.
  In \bibinfo{booktitle}{\emph{Symposium on Self-Stabilizing Systems}}.
  Springer, \bibinfo{pages}{109--123}.
\newblock


\bibitem[Dufoulon et~al\mbox{.}(2022)]%
        {dufoulon2022almost}
\bibfield{author}{\bibinfo{person}{Fabien Dufoulon}, \bibinfo{person}{Shay
  Kutten}, \bibinfo{person}{William~K Moses~Jr}, \bibinfo{person}{Gopal
  Pandurangan}, {and} \bibinfo{person}{David Peleg}.}
  \bibinfo{year}{2022}\natexlab{}.
\newblock \showarticletitle{An almost singularly optimal asynchronous
  distributed MST algorithm}.
\newblock \bibinfo{journal}{\emph{arXiv preprint arXiv:2210.01173}}
  (\bibinfo{year}{2022}).
\newblock


\bibitem[Elkin(2006a)]%
        {elkin2006faster}
\bibfield{author}{\bibinfo{person}{Michael Elkin}.}
  \bibinfo{year}{2006}\natexlab{a}.
\newblock \showarticletitle{A faster distributed protocol for constructing a
  minimum spanning tree}.
\newblock \bibinfo{journal}{\emph{J. Comput. System Sci.}}
  \bibinfo{volume}{72}, \bibinfo{number}{8} (\bibinfo{year}{2006}),
  \bibinfo{pages}{1282--1308}.
\newblock


\bibitem[Elkin(2006b)]%
        {elkin2006unconditional}
\bibfield{author}{\bibinfo{person}{Michael Elkin}.}
  \bibinfo{year}{2006}\natexlab{b}.
\newblock \showarticletitle{An Unconditional Lower Bound on the
  Time-Approximation Trade-off for the Distributed Minimum Spanning Tree
  Problem}.
\newblock \bibinfo{journal}{\emph{SIAM J. Comput.}} \bibinfo{volume}{36},
  \bibinfo{number}{2} (\bibinfo{year}{2006}), \bibinfo{pages}{433--456}.
\newblock


\bibitem[Elkin(2017)]%
        {E20Simple}
\bibfield{author}{\bibinfo{person}{Michael Elkin}.}
  \bibinfo{year}{2017}\natexlab{}.
\newblock \showarticletitle{A Simple Deterministic Distributed MST Algorithm,
  with Near-Optimal Time and Message Complexities}. In
  \bibinfo{booktitle}{\emph{Proceedings of the ACM Symposium on Principles of
  Distributed Computing}}. \bibinfo{pages}{157--163}.
\newblock


\bibitem[Elkin et~al\mbox{.}(2014)]%
        {elkin2014can}
\bibfield{author}{\bibinfo{person}{Michael Elkin}, \bibinfo{person}{Hartmut
  Klauck}, \bibinfo{person}{Danupon Nanongkai}, {and} \bibinfo{person}{Gopal
  Pandurangan}.} \bibinfo{year}{2014}\natexlab{}.
\newblock \showarticletitle{Can quantum communication speed up distributed
  computation?}. In \bibinfo{booktitle}{\emph{Proceedings of the ACM Symposium
  on Principles of Distributed Computing}}. \bibinfo{pages}{166--175}.
\newblock


\bibitem[Elkin and Neiman(2018a)]%
        {elkin2018near}
\bibfield{author}{\bibinfo{person}{Michael Elkin} {and} \bibinfo{person}{Ofer
  Neiman}.} \bibinfo{year}{2018}\natexlab{a}.
\newblock \showarticletitle{Near-optimal distributed routing with low memory}.
  In \bibinfo{booktitle}{\emph{Proceedings of the ACM Symposium on Principles
  of Distributed Computing}}. \bibinfo{pages}{207--216}.
\newblock


\bibitem[Elkin and Neiman(2018b)]%
        {elkin2018efficient}
\bibfield{author}{\bibinfo{person}{Michael Elkin} {and} \bibinfo{person}{Ofer
  Neiman}.} \bibinfo{year}{2018}\natexlab{b}.
\newblock \showarticletitle{On efficient distributed construction of near
  optimal routing schemes}.
\newblock \bibinfo{journal}{\emph{Distributed Computing}} \bibinfo{volume}{31},
  \bibinfo{number}{2} (\bibinfo{year}{2018}), \bibinfo{pages}{119--137}.
\newblock


\bibitem[Emek and Wattenhofer(2013)]%
        {emek2013stone}
\bibfield{author}{\bibinfo{person}{Yuval Emek} {and} \bibinfo{person}{Roger
  Wattenhofer}.} \bibinfo{year}{2013}\natexlab{}.
\newblock \showarticletitle{Stone age distributed computing}. In
  \bibinfo{booktitle}{\emph{Proceedings of the ACM Symposium on Principles of
  Distributed Computing}}. \bibinfo{pages}{137--146}.
\newblock


\bibitem[Faloutsos and Molle(2004)]%
        {FM04}
\bibfield{author}{\bibinfo{person}{Michalis Faloutsos} {and}
  \bibinfo{person}{Mart Molle}.} \bibinfo{year}{2004}\natexlab{}.
\newblock \showarticletitle{A linear-time optimal-message distributed algorithm
  for minimum spanning trees}.
\newblock \bibinfo{journal}{\emph{Distrib. Comput.}} \bibinfo{volume}{17},
  \bibinfo{number}{2} (\bibinfo{date}{Aug.} \bibinfo{year}{2004}),
  \bibinfo{pages}{151–170}.
\newblock
\showISSN{0178-2770}
\href{https://doi.org/10.1007/s00446-004-0107-2}{doi:\nolinkurl{10.1007/s00446-004-0107-2}}


\bibitem[Gafni(1985)]%
        {Gafni85}
\bibfield{author}{\bibinfo{person}{Eli Gafni}.}
  \bibinfo{year}{1985}\natexlab{}.
\newblock \showarticletitle{Improvements in the time complexity of two
  message-optimal election algorithms}. In
  \bibinfo{booktitle}{\emph{Proceedings of the ACM Symposium on Principles of
  Distributed Computing}} (Minaki, Ontario, Canada)
  \emph{(\bibinfo{series}{PODC '85})}. \bibinfo{publisher}{Association for
  Computing Machinery}, \bibinfo{address}{New York, NY, USA},
  \bibinfo{pages}{175–185}.
\newblock
\showISBNx{0897911687}
\href{https://doi.org/10.1145/323596.323612}{doi:\nolinkurl{10.1145/323596.323612}}


\bibitem[Gallager et~al\mbox{.}(1983)]%
        {GHS83}
\bibfield{author}{\bibinfo{person}{Robert~G. Gallager},
  \bibinfo{person}{Pierre~A. Humblet}, {and} \bibinfo{person}{Philip~M.
  Spira}.} \bibinfo{year}{1983}\natexlab{}.
\newblock \showarticletitle{A distributed algorithm for minimum-weight spanning
  trees}.
\newblock \bibinfo{journal}{\emph{ACM Transactions on Programming Languages and
  Systems (TOPLAS)}} \bibinfo{volume}{5}, \bibinfo{number}{1}
  (\bibinfo{year}{1983}), \bibinfo{pages}{66--77}.
\newblock


\bibitem[Garay et~al\mbox{.}(1998)]%
        {gkp}
\bibfield{author}{\bibinfo{person}{Juan~A Garay}, \bibinfo{person}{Shay
  Kutten}, {and} \bibinfo{person}{David Peleg}.}
  \bibinfo{year}{1998}\natexlab{}.
\newblock \showarticletitle{A sublinear time distributed algorithm for
  minimum-weight spanning trees}.
\newblock \bibinfo{journal}{\emph{SIAM J. Comput.}} \bibinfo{volume}{27},
  \bibinfo{number}{1} (\bibinfo{year}{1998}), \bibinfo{pages}{302--316}.
\newblock


\bibitem[Gavoille et~al\mbox{.}(2013)]%
        {gavoille2013communication}
\bibfield{author}{\bibinfo{person}{Cyril Gavoille}, \bibinfo{person}{Christian
  Glacet}, \bibinfo{person}{Nicolas Hanusse}, {and} \bibinfo{person}{David
  Ilcinkas}.} \bibinfo{year}{2013}\natexlab{}.
\newblock \showarticletitle{On the communication complexity of distributed
  name-independent routing schemes}. In \bibinfo{booktitle}{\emph{International
  Symposium on Distributed Computing}}. Springer, \bibinfo{pages}{418--432}.
\newblock


\bibitem[Ghaffari and Haeupler(2016)]%
        {ghaffari2016distributed}
\bibfield{author}{\bibinfo{person}{Mohsen Ghaffari} {and}
  \bibinfo{person}{Bernhard Haeupler}.} \bibinfo{year}{2016}\natexlab{}.
\newblock \showarticletitle{Distributed algorithms for planar networks ii:
  Low-congestion shortcuts, mst, and min-cut}. In
  \bibinfo{booktitle}{\emph{Proceedings of the ACM-SIAM Symposium on Discrete
  Algorithms}}. SIAM, \bibinfo{pages}{202--219}.
\newblock


\bibitem[Ghaffari and Haeupler(2021)]%
        {ghaffari2021low}
\bibfield{author}{\bibinfo{person}{Mohsen Ghaffari} {and}
  \bibinfo{person}{Bernhard Haeupler}.} \bibinfo{year}{2021}\natexlab{}.
\newblock \showarticletitle{Low-congestion shortcuts for graphs excluding dense
  minors}. In \bibinfo{booktitle}{\emph{Proceedings of the ACM Symposium on
  Principles of Distributed Computing}}. \bibinfo{pages}{213--221}.
\newblock


\bibitem[Gmyr and Pandurangan(2018)]%
        {gmyr2018time}
\bibfield{author}{\bibinfo{person}{Robert Gmyr} {and} \bibinfo{person}{Gopal
  Pandurangan}.} \bibinfo{year}{2018}\natexlab{}.
\newblock \showarticletitle{Time-message trade-offs in distributed algorithms}.
\newblock \bibinfo{journal}{\emph{arXiv preprint arXiv:1810.03513}}
  (\bibinfo{year}{2018}).
\newblock


\bibitem[Gupta and Srimani(2003)]%
        {gupta2003self}
\bibfield{author}{\bibinfo{person}{Sandeep~KS Gupta} {and}
  \bibinfo{person}{Pradip~K Srimani}.} \bibinfo{year}{2003}\natexlab{}.
\newblock \showarticletitle{Self-stabilizing multicast protocols for ad hoc
  networks}.
\newblock \bibinfo{journal}{\emph{J. Parallel and Distrib. Comput.}}
  \bibinfo{volume}{63}, \bibinfo{number}{1} (\bibinfo{year}{2003}),
  \bibinfo{pages}{87--96}.
\newblock


\bibitem[Haeupler et~al\mbox{.}(2018)]%
        {haeupler2018round}
\bibfield{author}{\bibinfo{person}{Bernhard Haeupler},
  \bibinfo{person}{D.~Ellis Hershkowitz}, {and} \bibinfo{person}{David Wajc}.}
  \bibinfo{year}{2018}\natexlab{}.
\newblock \showarticletitle{Round-and message-optimal distributed graph
  algorithms}. In \bibinfo{booktitle}{\emph{Proceedings of the ACM Symposium on
  Principles of Distributed Computing}}. \bibinfo{pages}{119--128}.
\newblock


\bibitem[Haeupler et~al\mbox{.}(2016)]%
        {haeupler2016low}
\bibfield{author}{\bibinfo{person}{Bernhard Haeupler}, \bibinfo{person}{Taisuke
  Izumi}, {and} \bibinfo{person}{Goran Zuzic}.}
  \bibinfo{year}{2016}\natexlab{}.
\newblock \showarticletitle{Low-congestion shortcuts without embedding}. In
  \bibinfo{booktitle}{\emph{Proceedings of the ACM Symposium on Principles of
  Distributed Computing}}. \bibinfo{pages}{451--460}.
\newblock


\bibitem[Haeupler and Li(2018)]%
        {haeupler2018faster}
\bibfield{author}{\bibinfo{person}{Bernhard Haeupler} {and}
  \bibinfo{person}{Jason Li}.} \bibinfo{year}{2018}\natexlab{}.
\newblock \showarticletitle{Faster distributed shortest path approximations via
  shortcuts}.
\newblock \bibinfo{journal}{\emph{arXiv preprint arXiv:1802.03671}}
  (\bibinfo{year}{2018}).
\newblock


\bibitem[Higham and Liang(2001)]%
        {higham2001self}
\bibfield{author}{\bibinfo{person}{Lisa Higham} {and} \bibinfo{person}{Zhiying
  Liang}.} \bibinfo{year}{2001}\natexlab{}.
\newblock \showarticletitle{Self-stabilizing minimum spanning tree construction
  on message-passing networks}. In \bibinfo{booktitle}{\emph{International
  Symposium on Distributed Computing}}. Springer, \bibinfo{pages}{194--208}.
\newblock


\bibitem[Khan and Pandurangan(2008)]%
        {khan2008fast}
\bibfield{author}{\bibinfo{person}{Maleq Khan} {and} \bibinfo{person}{Gopal
  Pandurangan}.} \bibinfo{year}{2008}\natexlab{}.
\newblock \showarticletitle{A fast distributed approximation algorithm for
  minimum spanning trees}.
\newblock \bibinfo{journal}{\emph{Distributed Computing}} \bibinfo{volume}{20},
  \bibinfo{number}{6} (\bibinfo{year}{2008}), \bibinfo{pages}{391--402}.
\newblock


\bibitem[King et~al\mbox{.}(2015)]%
        {king2015construction}
\bibfield{author}{\bibinfo{person}{Valerie King}, \bibinfo{person}{Shay
  Kutten}, {and} \bibinfo{person}{Mikkel Thorup}.}
  \bibinfo{year}{2015}\natexlab{}.
\newblock \showarticletitle{Construction and impromptu repair of an MST in a
  distributed network with o (m) communication}. In
  \bibinfo{booktitle}{\emph{Proceedings of the ACM Symposium on Principles of
  Distributed Computing}}. \bibinfo{pages}{71--80}.
\newblock


\bibitem[Kor et~al\mbox{.}(2011)]%
        {kor2011tight}
\bibfield{author}{\bibinfo{person}{Liah Kor}, \bibinfo{person}{Amos Korman},
  {and} \bibinfo{person}{David Peleg}.} \bibinfo{year}{2011}\natexlab{}.
\newblock \showarticletitle{Tight bounds for distributed MST verification}. In
  \bibinfo{booktitle}{\emph{28th International Symposium on Theoretical Aspects
  of Computer Science (STACS 2011)}}. \bibinfo{pages}{69--80}.
\newblock


\bibitem[Korman and Kutten(2007)]%
        {korman2007controller}
\bibfield{author}{\bibinfo{person}{Amos Korman} {and} \bibinfo{person}{Shay
  Kutten}.} \bibinfo{year}{2007}\natexlab{}.
\newblock \showarticletitle{Controller and estimator for dynamic networks}. In
  \bibinfo{booktitle}{\emph{Proceedings of the ACM Symposium on Principles of
  Distributed Computing}}. \bibinfo{pages}{175--184}.
\newblock


\bibitem[Korman et~al\mbox{.}(2015)]%
        {KKM15_SelfStabilizing}
\bibfield{author}{\bibinfo{person}{Amos Korman}, \bibinfo{person}{Shay Kutten},
  {and} \bibinfo{person}{Toshimitsu Masuzawa}.}
  \bibinfo{year}{2015}\natexlab{}.
\newblock \showarticletitle{Fast and compact self-stabilizing verification,
  computation, and fault detection of an MST}.
\newblock \bibinfo{journal}{\emph{Distributed Computing}} \bibinfo{volume}{28},
  \bibinfo{number}{4} (\bibinfo{year}{2015}), \bibinfo{pages}{253--295}.
\newblock


\bibitem[Kutten et~al\mbox{.}(2015)]%
        {kutten2015complexity}
\bibfield{author}{\bibinfo{person}{Shay Kutten}, \bibinfo{person}{Gopal
  Pandurangan}, \bibinfo{person}{David Peleg}, \bibinfo{person}{Peter
  Robinson}, {and} \bibinfo{person}{Amitabh Trehan}.}
  \bibinfo{year}{2015}\natexlab{}.
\newblock \showarticletitle{On the complexity of universal leader election}.
\newblock \bibinfo{journal}{\emph{Journal of the ACM (JACM)}}
  \bibinfo{volume}{62}, \bibinfo{number}{1} (\bibinfo{year}{2015}),
  \bibinfo{pages}{1--27}.
\newblock


\bibitem[Kutten and Peleg(1998)]%
        {KP1998fast}
\bibfield{author}{\bibinfo{person}{Shay Kutten} {and} \bibinfo{person}{David
  Peleg}.} \bibinfo{year}{1998}\natexlab{}.
\newblock \showarticletitle{Fast distributed construction of small
  $k$-dominating sets and applications}.
\newblock \bibinfo{journal}{\emph{Journal of Algorithms}} \bibinfo{volume}{28},
  \bibinfo{number}{1} (\bibinfo{year}{1998}), \bibinfo{pages}{40--66}.
\newblock


\bibitem[Lenzen(2016)]%
        {Lenzen2016}
\bibfield{author}{\bibinfo{person}{Christoph Lenzen}.}
  \bibinfo{year}{2016}\natexlab{}.
\newblock \bibinfo{booktitle}{\emph{Lecture Notes on Theory of Distributed
  Systems}}.
\newblock
\urldef\tempurl%
\url{https://www.mpi-inf.mpg.de/fileadmin/inf/d1/teaching/winter15/tods/ToDS.pdf}
\showURL{%
Retrieved Feb 23,2026 from \tempurl}


\bibitem[Lenzen and Patt-Shamir(2013)]%
        {lenzen2013fast}
\bibfield{author}{\bibinfo{person}{Christoph Lenzen} {and}
  \bibinfo{person}{Boaz Patt-Shamir}.} \bibinfo{year}{2013}\natexlab{}.
\newblock \showarticletitle{Fast routing table construction using small
  messages}. In \bibinfo{booktitle}{\emph{Proceedings of the ACM Symposium on
  Theory of Computing}}. \bibinfo{pages}{381--390}.
\newblock


\bibitem[Lenzen and Patt-Shamir(2015)]%
        {lenzen2015fast}
\bibfield{author}{\bibinfo{person}{Christoph Lenzen} {and}
  \bibinfo{person}{Boaz Patt-Shamir}.} \bibinfo{year}{2015}\natexlab{}.
\newblock \showarticletitle{Fast partial distance estimation and applications}.
  In \bibinfo{booktitle}{\emph{Proceedings of the ACM Symposium on Principles
  of Distributed Computing}}. \bibinfo{pages}{153--162}.
\newblock


\bibitem[Lenzen et~al\mbox{.}(2019)]%
        {lenzen2019distributed}
\bibfield{author}{\bibinfo{person}{Christoph Lenzen}, \bibinfo{person}{Boaz
  Patt-Shamir}, {and} \bibinfo{person}{David Peleg}.}
  \bibinfo{year}{2019}\natexlab{}.
\newblock \showarticletitle{Distributed distance computation and routing with
  small messages}.
\newblock \bibinfo{journal}{\emph{Distributed Computing}} \bibinfo{volume}{32},
  \bibinfo{number}{2} (\bibinfo{year}{2019}), \bibinfo{pages}{133--157}.
\newblock


\bibitem[Mashreghi and King(2021)]%
        {mashreghi2021broadcast}
\bibfield{author}{\bibinfo{person}{Ali Mashreghi} {and}
  \bibinfo{person}{Valerie King}.} \bibinfo{year}{2021}\natexlab{}.
\newblock \showarticletitle{Broadcast and minimum spanning tree with o (m)
  messages in the asynchronous CONGEST model}.
\newblock \bibinfo{journal}{\emph{Distributed Computing}} \bibinfo{volume}{34},
  \bibinfo{number}{4} (\bibinfo{year}{2021}), \bibinfo{pages}{283--299}.
\newblock


\bibitem[Murmu(2015)]%
        {murmu2015distributed}
\bibfield{author}{\bibinfo{person}{Mahendra~Kumar Murmu}.}
  \bibinfo{year}{2015}\natexlab{}.
\newblock \showarticletitle{A distributed approach to construct minimum
  spanning tree in cognitive radio networks}.
\newblock \bibinfo{journal}{\emph{Procedia Computer Science}}
  \bibinfo{volume}{70} (\bibinfo{year}{2015}), \bibinfo{pages}{166--173}.
\newblock


\bibitem[Pandurangan et~al\mbox{.}(2017)]%
        {PRS17}
\bibfield{author}{\bibinfo{person}{Gopal Pandurangan}, \bibinfo{person}{Peter
  Robinson}, {and} \bibinfo{person}{Michele Scquizzato}.}
  \bibinfo{year}{2017}\natexlab{}.
\newblock \showarticletitle{A time-and message-optimal distributed algorithm
  for minimum spanning trees}. In \bibinfo{booktitle}{\emph{Proceedings of the
  ACM SIGACT Symposium on Theory of Computing}}. \bibinfo{pages}{743--756}.
\newblock


\bibitem[Pandurangan et~al\mbox{.}(2018)]%
        {P18SURVEY}
\bibfield{author}{\bibinfo{person}{Gopal Pandurangan}, \bibinfo{person}{Peter
  Robinson}, {and} \bibinfo{person}{Michele Scquizzato}.}
  \bibinfo{year}{2018}\natexlab{}.
\newblock \showarticletitle{The distributed minimum spanning tree problem}.
\newblock \bibinfo{journal}{\emph{Bulletin of EATCS}} \bibinfo{volume}{2},
  \bibinfo{number}{125} (\bibinfo{year}{2018}).
\newblock


\bibitem[Peleg(2000)]%
        {peleg2000distr}
\bibfield{author}{\bibinfo{person}{David Peleg}.}
  \bibinfo{year}{2000}\natexlab{}.
\newblock \bibinfo{booktitle}{\emph{Distributed Computing: A Locality-Sensitive
  Approach}}.
\newblock \bibinfo{publisher}{SIAM}.
\newblock


\bibitem[Peleg and Rubinovich(1999)]%
        {PelegRub99_nearTightLwrBound}
\bibfield{author}{\bibinfo{person}{David Peleg} {and} \bibinfo{person}{Vitaly
  Rubinovich}.} \bibinfo{year}{1999}\natexlab{}.
\newblock \showarticletitle{A near-tight lower bound on the time complexity of
  distributed MST construction}. In \bibinfo{booktitle}{\emph{40th Annual
  Symposium on Foundations of Computer Science (Cat. No. 99CB37039)}}. IEEE,
  \bibinfo{pages}{253--261}.
\newblock


\bibitem[Santoro and Khatib(1985)]%
        {santoro1985labelling}
\bibfield{author}{\bibinfo{person}{Nicola Santoro} {and} \bibinfo{person}{Ramez
  Khatib}.} \bibinfo{year}{1985}\natexlab{}.
\newblock \showarticletitle{Labelling and implicit routing in networks}.
\newblock \bibinfo{journal}{\emph{Comput. J.}} \bibinfo{volume}{28},
  \bibinfo{number}{1} (\bibinfo{year}{1985}), \bibinfo{pages}{5--8}.
\newblock


\bibitem[Sarma et~al\mbox{.}(2012)]%
        {sarmaEtAl12_distributedVerification}
\bibfield{author}{\bibinfo{person}{Atish~Das Sarma}, \bibinfo{person}{Stephan
  Holzer}, \bibinfo{person}{Liah Kor}, \bibinfo{person}{Amos Korman},
  \bibinfo{person}{Danupon Nanongkai}, \bibinfo{person}{Gopal Pandurangan},
  \bibinfo{person}{David Peleg}, {and} \bibinfo{person}{Roger Wattenhofer}.}
  \bibinfo{year}{2012}\natexlab{}.
\newblock \showarticletitle{Distributed verification and hardness of
  distributed approximation}.
\newblock \bibinfo{journal}{\emph{SIAM J. Comput.}} \bibinfo{volume}{41},
  \bibinfo{number}{5} (\bibinfo{year}{2012}), \bibinfo{pages}{1235--1265}.
\newblock


\bibitem[Thorup and Zwick(2001)]%
        {TZ01Compact}
\bibfield{author}{\bibinfo{person}{Mikkel Thorup} {and} \bibinfo{person}{Uri
  Zwick}.} \bibinfo{year}{2001}\natexlab{}.
\newblock \showarticletitle{Compact routing schemes}. In
  \bibinfo{booktitle}{\emph{Proceedings of the ACM Symposium on Parallel
  Algorithms and Architectures}}. \bibinfo{pages}{1--10}.
\newblock


\bibitem[Tixeuil(2009)]%
        {tixeuil2009self}
\bibfield{author}{\bibinfo{person}{S{\'e}bastien Tixeuil}.}
  \bibinfo{year}{2009}\natexlab{}.
\newblock \showarticletitle{Self-stabilizing algorithms}.
\newblock \bibinfo{journal}{\emph{Algorithms and Theory of Computation
  Handbook}} (\bibinfo{year}{2009}), \bibinfo{pages}{26--1}.
\newblock


\end{thebibliography}
%%% -*-BibTeX-*-
%%% Do NOT edit. File created by BibTeX with style
%%% ACM-Reference-Format-Journals [18-Jan-2012].

\pagebreak
\pagenumbering{arabic}
\appendix
\clearpage

\section{Memory Complexity of Existing Algorithms}
\label{Appendix.A_ExistingAlgs_MemCplx}
\subsection{Primitives}
\label{Appendix.A.1_Primitives}
\subsubsection{Trees}
\label{Appendix.A.1.1_Bcast}

Let $G = (V,E)$ be a graph that is storing a distributed representation of subgraph $G'=(V',E')$, so that $ V'\subseteq V$, $ E'\subseteq E$. In particular, $G'$ may be a spanning tree. It is typically assumed that such a representation of $G'$ requires every vertex $v\in V'$ to know, for every edge $e=\{v,u\}$ adjacent to it, whether $e\in E'$. In this paper, we assume that a more compact representation is sufficient - for every edge $e\in E', e=\{v,u\}$, either $u$ or $v$ is required know that $e\in E'$. For a vertex $v'\in V'$, we denote the edges from $E'$ that $v'$ knows as $E^{'}_{v'}$. We argue that this representation is good enough, because it enables $G$ to execute a broadcast over $G'$, with the same asymptotic round and message complexity as a ``full" representation.

\subsubsection{Broadcast}
\label{Appendix.A.1.2_Bcast}
In order to carry out a broadcast over a subtree represented in this way, the graph $G$ first holds a round where every vertex $v$ sends a request message over every adjacent tree edge $e'\in E^{'}_v$. We assume that edges are memory buffers, and that a recipient vertex $u$ can leave such a request in the buffer, ``unopened", and $v$ does not need to re-send it every round. Next, suppose that a vertex $w$ has a message $MSG$ to broadcast. Vertex $w$ iterates over its edges, sending $MSG$ over every edge $e\in E^{'}_{w}$, and also over every edge that holds a request. Every vertex $u\in V$ that receives a broadcast message from some adjacent edge broadcasts that message in the following round.

\subsubsection{Convergecast}
\label{Appendix.A.1.3_Convt}
In a tree subgraph, we require every edge to be known to one of its endpoints. For rooted trees the requirement is more specific. In a rooted tree, we require every edge $e = (u,v)$ between a parent $u$ and a child $v$ to be known to the child $v$. Additionally, in order to carry out a convergecast on a rooted tree, each parent vertex needs know the number of its children. Storage of a general tree \textit{can be} compact, storage of rooted tree is always compact.

In a convergecast calculation, a vertex $v$ waits to receive inputs from all its children. Vertex $v$ aggregates inputs as they arrive, and maintains a count of messages received. Once the count of received messages is equal to the number of children, $v$ knows that the aggregated result is final. In the next round, $v$ sends the aggregate to its parent vertex. 

In some cases (e.g., calculating a minimum) the outcome has a single source vertex $v_m$. Sometimes it is necessary to notify $v_m$ that it is the source vertex, and to send a message to $v_m$ from the root. In order to make it possible to retrace the path back to the source $v_m$, another variable is required. Every vertex $v$ is also required to remember the edge $e_0 = (v, v_0)$ that connects $v$ to the vertex $v_0$ that upcasts the minimum outcome. This requires $O(\log n)$ bits of memory.

We summarize our observations with the following claim, stated without proof:

\begin{Claim}
	\label{CA.1_TreeCplx}
	\leavevmode 
	\begin{enumerate}
		\item A spanning tree can be stored in a distributed fashion with memory complexity $O(\log n)$. A tree stored in this compact fashion is able to carry out convergecast and broadcast algorithms.
		\item Both broadcast and convergecast can be carried out with $O(D)$ time complexity, $O(n)$ message complexity, and $O(\log n)$ local memory complexity.
	\end{enumerate}
\end{Claim}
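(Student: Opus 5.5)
The plan is to prove the two parts of Claim \hyperref[CA.1_TreeCplx]{A.1} separately, using the compact representation of Appendix \hyperref[Appendix.A.1.1_Bcast]{A.1.1} and the broadcast and convergecast procedures of Appendices \hyperref[Appendix.A.1.2_Bcast]{A.1.2} and \hyperref[Appendix.A.1.3_Convt]{A.1.3}.

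\textbf{Part 1: storage.} I will orient the tree from a root, so that each non-root vertex $v$ stores only its parent port and the number of its children. A port index takes $O(\log n)$ bits, since the read-only port list is not counted. A child count is at most $n$, so it also takes $O(\log n)$ bits. Every tree edge is then known to its child endpoint, so the representation is compact in the sense of Appendix \hyperref[Appendix.A.1.1_Bcast]{A.1.1}.

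To show this supports broadcast, I will use the request round. Each child sends a request over its parent edge. The request stays unopened in the parent's port buffer, so a vertex forwarding $MSG$ scans its ports and sends on every port holding a request. This scan needs only an $O(\log n)$-bit port counter.

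To show it supports convergecast, each vertex keeps the following:
\begin{itemize}
\item an aggregate of $O(\log n)$ bits;
\item a counter of received child messages, compared against the stored child count;
\item optionally, one port pointing to the child that supplied the current extremum, so the path back to the source can be retraced.
\end{itemize}

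\textbf{Part 2: complexity.} For broadcast, I will argue by induction on depth that a vertex at depth $\ell$ receives $MSG$ by round $\ell+1$, counting the request round. Hence the time is at most the tree depth plus $O(1)$. Each tree edge carries one request and one copy of $MSG$, so the total is $O(n)$ messages.

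For convergecast, a symmetric induction on height shows that a vertex sends its final aggregate by round height$(v)+1$. Each edge carries exactly one upcast message, again giving $O(n)$ messages.

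The bound is stated as $O(D)$, so here I will take the tree to be the BFS tree, whose depth is at most $D$. For other trees of depth $D_b$, the same argument gives $O(D_b)$. Memory per vertex is the stored state plus one in-flight aggregate, all of size $O(\log n)$.

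\textbf{Main obstacle.} The delicate point is making broadcast work when the parent does not know its children's ports. This relies on the model assumption that buffered requests persist and can be enumerated without copying them into local memory. I will make this explicit: a vertex iterates over its ports while holding only an index, and reads each buffer in turn. This keeps memory at $O(\log n)$ even when the degree is large.
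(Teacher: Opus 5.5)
Your proposal is correct and takes essentially the same approach as the paper. The paper states Claim A.1 without proof and supports it only through the discussion in Appendices A.1.1--A.1.3, which uses the same ingredients you use: children store their parent edge and parents store a child count, broadcast relies on a request round whose messages persist in port buffers, convergecast counts child messages, and an optional pointer allows retracing the extremum. Your depth/height inductions spell out the round and message bounds that the paper leaves implicit. You are also right that the $O(D)$ time bound needs the tree to have depth $O(D)$, such as the BFS tree (or $O(D_b)$ for base-fragment trees).
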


\subsection {Memory Complexity of GHS and GKP Algorithms}
\label{Appendix.A.2_GHS_and_GKP}
\subsubsection {The GHS pkase}
\label{Appendix.A.2.1_GHS}
 In each phase of the GHS algorithm, two steps occur:
\begin{enumerate}
	\item MST-fragments discovered in the previous phase perform, in parallel, a search for their respective least-weight outgoing edges (MWOEs). The fragments and MWOEs form a virtual graph. This graph is a forest \cite{GHS83}. Every fragment discovers a single MWOE and considers the neighbor across that MWOE its parent in the tree. When two fragments discover the same MWOE, one of them assumes the role of root.
	\item In each of the trees in the forest, the root fragment initiates a broadcast of its ID. The tree thus becomes a single fragment in the next phase.
\end{enumerate}

We review the local memory complexity of a single GHS phase:
\begin{enumerate}
	\item Every vertex sends messages over every edge, with its own fragment ID. It next iterates over its edges and considers their weights and IDs, in order to select the minimal edge whose ID differs from its own. At every step of this process, the vertex is required to hold in its memory the following values: its own ID, the weight and edge ID of the best candidate for MWOE. All these values have size $O(\log n)$ bits.
	
	\item Within every fragment, a convergecast calculation is held for fragment-MWOE. A convergecast calculation has memory complexity $O(\log n)$ (see Section \hyperref[Appendix.A.1.3_Convt]{A.1.3}).
	
	\item The message announcing the chosen MWOE is sent back from the fragment root to the vertex that is adjacent to the selected MWOE. To make this possible, every vertex $v$ participating in the convergecast needs to remember the identity of the neighbor that originated the value that $v$ chose to upcast. This requires additional $O(\log n)$ bits of memory (see Section \hyperref[Appendix.A.1.3_Convt]{A.1.3}).
	
	\item Fragments now form a forest of virtual trees, with MWOEs as edges. A fragment in each virtual tree is designated as root, by the method described by GHS \cite{GHS83}.\footnote{In each subtree, every MWOE is found by one fragment, except a single MWOE that is found by two fragments. One of these two fragments becomes the root, chosen by larger fragment ID.} The MWOE-adjacent vertex in that fragment becomes fragment root, and originates a broadcast of fragment ID over the entire fragment tree. Fragment ID requires $O(\log n)$ bits.
\end{enumerate}

It follows that the memory complexity of the GHS algorithm is $O(\log n)$. We omit the mechanism used by \cite{GHS83} to reduce message complexity to $O(m)$, by sending a query only twice for each edge, once from each endpoint. This would require every vertex $v$ to maintain a record of queried edges. Instead, we assume that edges are checked again at the start of every phase, and message complexity is $O(m\log n)$.

\subsubsection {The GKP phase}
\label{Appendix.A.2.2_GKP}
We review the memory complexity of the additional steps used in the GKP phase.

\begin{enumerate}
	\item After finding MWOEs, in the process of coloring and color reduction, for every fragment, its vertices store its color and the colors of its virtual-tree neighbors. These are $O(\log n)$-size values. 
	\item When generating a maximum matching,  every fragment is required to store its matching status, and the identity of its matched partner. This has memory complexity of $O(\log n)$ bits as well. Once the matching is complete, unmatched fragments store the identity of the fragment they attach to, using another $O(\log n)$ bits. This information is sufficient for maintaining virtual subtrees.
	\end{enumerate}

\subsubsection {Lenzen's Phase}
\label{Appendix.A.2.3_Lenzen}

To implement the version of the GKP phase described by Lenzen \cite{Lenzen2016}, fragments are required to initiate the process of MWOE detection in phase $i$ only if their diameter is at most $2^i$. To apply this policy, in every fragment $F$, at the start of every phase $i$, the root $r_F$ is required to discover the diameter of $F$ and announce via broadcast to the vertices of $F$ whether or not to detect an MWOE in that phase. This additional step requires every vertex $v\in V$ to store its partial value in its fragment' diameter calculation, and a bit indicating participation. This step thus has memory complexity of $O(\log n)$.

\subsection{The GKP Pipeline Procedure}
\label{Appendix.A.3_GKP_pipeline}
Over the course of the Pipeline Procedure, used by \cite{gkp,KP1998fast}, vertices use an auxiliary BFS tree $T$ to calculate the edges still missing from the MST after the first part of the GKP algorithm. Every vertex upcasts cross-fragment edges over $T$. Vertices upcast edges known to them initially, as well as edges received from $T$-children, ordered by weight - least-weight edges are sent first. A vertex $v$ upcasts edges only if the edges do not create a cycle with the subset of the fragment tree that $v$ has upcast in previous rounds. Vertices begin upcasting according to their depth in $T$, in a way that enables efficient pipelining and also ensures that all MST edges are upcast first, and non-MST edges are filtered out.

In order to detect cycles, and filter out non-MST edges, every vertex needs to maintain a representation of the subtree that it has already upcast. This representation is required to describe $O(\sqrt{n})$ virtual fragments and edges, together with edge weights. Thus, the representation may require up to $O(\sqrt{n})$ local memory space.

\subsection{Message-Efficient Algorithms}
\label{Appendix.A.4_MSGEfficient}
The algorithm in \cite{E20Simple} also proceeds in phases. The algorithm consists of base fragments performing partial calculations to find the MWOE of the fragment that contains them. Each base fragment upcasts its MWOE-candidate to the root vertex $v_r$ of a BFS tree. The root vertex then determines which candidate is the MWOE for each fragment, and notifies every base fragment of the new ID ot the fragment. In order to determine MWOEs, vertex $v_r$ is required to retain several items of information that incur memory complexity beyond polylogarithmic. 

First, while MWOE candidates are being sent, $v_r$ needs to retain an MWOE-candidate for each fragment, for up to $O(\sqrt n)$ fragments in any phase. Secondly, $v_r$ messages every base fragment with its new fragment ID once MWOEs are found, and so needs to track the state of every base fragment - its old fragment ID as well as its new fragment ID. As there are $O(\sqrt n)$ base fragments, this also has memory complexity $O(\sqrt n)$.

See also the discussion in Section \hyperref[1.3_TechOvViw]{1.3} concerning the memory complexity of the algorithm of \cite{haeupler2018round}.

\subsection{Compact Routing}
\label{2.5_Compact_routing}

In our algorithm we make use of a routing scheme to direct messages to their destinations. Specifically, we make use of a tree routing scheme. We use the compact tree routing scheme devised by Thorup and Zwick \cite{TZ01Compact} (TZ). Here we briefly define routing schemes, describe the TZ tree routing scheme, and state the memory complexity of its elements without proof. 

\subsubsection{Routing Schemes}
\label{A.5.1_Routing_schemes}

A routing scheme for a graph $G=(V,E)$ is a distributed algorithm that can deliver a message from any vertex to any other vertex. Suppose that vertex $v \in V$ is in the process of executing the routing scheme algorithm. Whenever $v$ receives a message meant for destination vertex $u$, vertex $v$ decides if the message has reached its final destination and, if not, decides over which edge to forward the message next (see \cite{peleg2000distr}, Ch. 9). The routing scheme relies on a local \textit{routing table} stored by each vertex, and on \textit{routing labels} associated with each destination vertex. 

In the general case, when a message is sent from an origin vertex $v_{orig}$ to some destination $v_{dest}$, $v_{orig}$ is required to know the routing label. Based on the routing label and its local routing table, $v_{orig}$ determines the next edge $e_{next} = (v_{orig},v_{next})$ in the path of the message to $v_{dest}$, and it also creates a \textit{header} to send to $v_{next}$ alongside the message. The header is then used by $v_{next}$ to determine the next step and generate the next header, and so on. In the TZ tree routing algorithm, headers are simply identical to the routing label.

A routing scheme consists of:
\begin{enumerate}
	\item A function that, given a routing label or header, determines the next edge and next header for a message being routed.
	\item An algorithm for computing routing tables and routing labels.
\end{enumerate}

The memory complexity of the routing scheme depends on the sizes of routing tables, and on the sizes of routing labels. We will assume that every vertex knows the label associated with it. 

\subsubsection{Thorup-Zwick's Compact Tree Routing}
\label{A.5.2_TZ_Tree_Routing}
Next, we describe the TZ routing scheme, and state its memory complexity, as well as the round and message complexity needed to compute its routing tables and routing labels. 

The TZ routing label consists, for a vertex $v\in V$, of the list of edges leading from the tree root to $v$. The size of this list is reduced from $O(D)$ edges to $O(\log n)$ edges by omitting certain edges from the path. An edge leading from $u$ to $v$ is omitted if the subtree rooted at $v$ is of size at least half the size of the subtree rooted at $u$. Every vertex $v\in V$ has at most one such edge, and the routing table for $v$ indicates this edge. The routing tables and labels contain some additional information. Specifically, every vertex has a number, and every subtree has an associated interval. A label $lbl_v$ for vertex $v$ indicates $v$'s number, and the routing table at $v$ indicates the interval for the subtree rooted at $v$. This additional information has size $O(\log n)$ for labels and tables.
\begin{Claim}
	\label{CA.1_TZ_clpx}

	\begin{enumerate}
		\item The size of a routing label is $O(\log^2 n)$ bits, and the size of a routing table is $O(\log n)$ bits.\footnote{Thorup and Zwick \cite{TZ01Compact} provide a routing scheme with label size $O(\log n)$. However, it relies on the ability of a vertex to rename adjacent edges to assist the scheme. If this ability is not available, explicitly simulating edge renaming in vertex memory would by itself impose a prohibitively large memory cost (of $O(deg(v))$).}
		\item The time complexity of the setup algorithm in the CONGEST model is $O(D \log n)$, and the message complexity is $O(n \log n)$.
	\end{enumerate}
	
\end{Claim}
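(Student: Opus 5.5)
The statement to prove is the last one shown: Claim A.1 on the Thorup--Zwick tree routing scheme over $T$. The label has size $O(\log^2 n)$ bits, the table $O(\log n)$ bits, and the setup costs $O(D\log n)$ rounds and $O(n\log n)$ messages in CONGEST.

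\textbf{Sizes.} I first fix the scheme concretely.
\begin{enumerate}
\item Every vertex $v$ computes the size $s(v)$ of its subtree in $T$.
\item Each vertex marks its \emph{heavy child}, the child of maximum subtree size; the remaining children are \emph{light}.
\item Vertices receive DFS numbers $\mathrm{dfs}(v)$ such that the subtree of $v$ occupies the interval $[\mathrm{dfs}(v),\mathrm{dfs}(v)+s(v))$.
\end{enumerate}
The table of $v$ stores four items: $\mathrm{dfs}(v)$, $s(v)$, the port to its parent, and the port to its heavy child. This is $O(\log n)$ bits.

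The label of $v$ stores $\mathrm{dfs}(v)$ and the list of light edges on the $r_T$--$v$ path. Each light edge is recorded as the pair (DFS number of the upper endpoint, port number), which takes $O(\log n)$ bits.

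Descending along a light edge from $u$ to $w$ gives $s(w)\le s(u)/2$. Hence a root-to-leaf path has at most $\log n$ light edges, and the label has $O(\log^2 n)$ bits.

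Routing correctness is the standard argument. A vertex $u$ holding the label of target $x$ first checks whether $\mathrm{dfs}(x)$ lies in its interval. If it does not, $u$ forwards to its parent. If it does and $u\ne x$, then $u$ looks up its own DFS number among the light-edge entries of the label. If it finds an entry, it uses the stored port; otherwise it uses its heavy-child port. Ports are local to $u$, so no edge renaming is needed, which is consistent with the footnote in the statement.

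\textbf{Construction cost.} I compute the ingredients in passes over $T$, each of depth $d(T)=O(D)$.
\begin{enumerate}
\item A convergecast computes all $s(v)$. This takes $O(D)$ rounds and $O(n)$ messages.
\item Each vertex picks its heavy child locally in one extra round, while holding only a running maximum and its port, so memory stays $O(\log n)$.
\item DFS numbers are assigned by interval allocation top-down, as in Fig.~2: a vertex hands each child a contiguous interval of length equal to that child's $s$. This takes $O(D)$ rounds and $O(n)$ messages.
\item Labels are built by a top-down broadcast. Each vertex passes its own label to each child, appending an entry if the child is light.
\end{enumerate}

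\textbf{Main obstacle.} The label broadcast is the costly step, because a label has $O(\log^2 n)$ bits, i.e.\ $O(\log n)$ CONGEST words. A vertex $v$ cannot store a copy per child, but it does not need to: it sends the same stream of words to all children.

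I would pipeline the broadcast word by word. Word $j$ leaves the root at round $j$ and advances one level per round. After $O(D+\log n)$ rounds every label is in place, and each tree edge carries $O(\log n)$ words, for $O(n\log n)$ messages in total. The light-edge entry for a light child is sent on that child's edge as one extra word.

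A vertex forwards each word in the round after receiving it. It retains the word only as part of its own label, so its memory stays $O(\log^2 n)$.

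The remaining care point is a sender with many children. In one round it may send different contents to its heavy child and its light children, since the light children get the extra word. To handle this it iterates over its ports, which are in read-only memory, in the same way as in the broadcast primitive of Section A.1.2. This does not change the rounds.

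Together these bounds give $O(D+\log n)\subseteq O(D\log n)$ rounds and $O(n\log n)$ messages, which proves part 2 of the claim.
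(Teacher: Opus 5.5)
Your proposal is correct and is essentially the construction the paper has in mind. The paper states this claim without proof, citing Thorup--Zwick, and its description in Section A.5.2 is the heavy-path scheme you make explicit: omit the at-most-one heavy edge per vertex, keep the $O(\log n)$ light edges in the label, and number vertices so that subtrees are intervals stored in the tables. Your deviations are cosmetic: the heavy child is taken as the maximum-size child, and light entries are keyed by the upper endpoint's number. The one substantive addition is the word-by-word pipelining of the label broadcast, which gives $O(D+\log n)$ rounds and so implies the stated $O(D\log n)$ bound.
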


%\clearpage
\section{Analysis of the Communication Cycle}
\label{Appendix.B_Analysis}

\subsection{The Convergecast Cycle}
\label{B.1_analysis}

\subsubsection{Avoiding Congestion}
\label{3.3.1_DepthSprBtwFrags}
We show that if all subset roots use Equation (\hyperref[eqn_slot_to_depth]{2}) to choose the round on which they insert their partial outcomes, and if all vertices aggregate and upcast inputs immediately, then in every round, the convergecast calculation for each fragment takes place at a different depth. Thus, it is possible to carry out each convergecast successfully, as if it was the only calculation being performed. As a result, there is no need to handle congestion and queuing, or take action to distinguish between values from different fragments. Depth separation enables us to establish the correctness and the round-, message- and vertex memory complexity of the communication cycle. 

\begin{Claim}
	\label{CB.1_CycleSeparatesFrags}
	If the following conditions hold:
	\begin{enumerate}
		\item At every round $t$, every vertex $v$ upcasts to its $T$-parent an $f$-aggregate of all the inputs it has received in round $t$.
		\item If a vertex $v$ itself belongs to $F_i$, and has an input for $F_i$'s convergecast, $v$ adds the input to the calculation of $f$-aggregate in round $t = i+d(T)-d(v)$, i.e., $d(v) = dp(i,t)$.
	\end{enumerate}
	Then:		
	\begin{enumerate}
		\item Any vertex $v$ will receive upcast messages concerning the fragment $F_i$ during at most one round - round $t = i+d(T)-d(v)$, i.e., the round $t$ when $d(v) = dp(i,t)$.
		\item The vertex $v$ will upcast at most one message concerning $F_i$. This happens during round $t = i+d(T)-d(v)$, i.e., the round $t$ when $d(v) = dp(i,t)$.
	\end{enumerate}
\end{Claim}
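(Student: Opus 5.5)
We prove both conclusions together by induction on the height of $v$ in $T$, i.e., on $h(v)=d(T)-d(v)$, processing vertices from the deepest level upward. The invariant is exactly the conjunction of the two conclusions for a fixed fragment $F_i$: every message concerning $F_i$ that $v$ receives arrives in round $t_v:=i+d(T)-d(v)$, and $v$ sends at most one message concerning $F_i$, also in round $t_v$. The key observation is that for a $T$-child $u$ of $v$ we have $d(u)=d(v)+1$, so $t_u = t_v-1$. A message upcast by $u$ in round $t_u$ is therefore received by $v$ at the start of round $t_v$, in time to be aggregated in that same round under condition (1). (We adopt the paper's convention that a message sent in round $t-1$ is handled by its recipient in round $t$; in Algorithm 1 a vertex aggregates inputs and upcasts within one round, so this is the timing we use.)

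For the base case, take $v$ with $d(v)=d(T)$, a leaf at maximum depth. Such a vertex has no $T$-children, so it receives nothing concerning $F_i$ and conclusion (1) holds vacuously. Its only possible $F_i$-related activity comes from its own input. By condition (2) that input is injected only in round $i+d(T)-d(v)=t_v$, and by condition (1) it is upcast in that same round. Conclusion (2) follows, since nothing else could cause $v$ to send an $F_i$-message.

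For the inductive step, assume the invariant holds for all children $u$ of $v$. Each child sends its unique $F_i$-message, if it has one, only in round $t_u=t_v-1$. Hence $v$ receives $F_i$-messages only in round $t_v$, which is conclusion (1). Now $v$ has exactly two sources of $F_i$-content: messages from its children, which arrive in round $t_v$, and its own input, which by condition (2) is injected in round $t_v$. By condition (1), whatever $v$ holds in a round is aggregated and upcast in that round. So any $F_i$-content is sent in round $t_v$ and in no other round, and because aggregation folds everything into a single message, $v$ sends at most one $F_i$-message. This gives conclusion (2). A vertex at any depth, including a shallow leaf, is covered by the same argument, since having no children only removes one of the two sources.

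The main obstacle is making the notion of a message ``concerning $F_i$'' precise. Vertices never tag messages with $i$; a vertex infers the slot from the round. The cleanest fix is to define the content of a message as concerning $F_i$ when it derives, through the aggregation chain, from some input of a vertex in $F_i$, and to prove the induction over this provenance. Under that definition we must also rule out the possibility that $F_i$-content and $F_{i'}$-content with $i'\neq i$ arrive in the same round at $v$. That would require $i+d(T)-d(v)=i'+d(T)-d(v)$, which is impossible. This additional uniqueness statement should be folded into the invariant, so that in each round the single aggregated message $v$ sends concerns exactly one fragment.
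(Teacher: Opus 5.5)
Your proof is correct and uses the same bottom-up argument as the paper. A vertex at depth $d(T)$ only emits its own input, in round $i$. A child at depth $d(v)+1$ that emits in round $t_v-1$ is received by $v$ in round $t_v$, which is exactly when $v$ injects its own input and upcasts.

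Your induction on $h(v)=d(T)-d(v)$ inside the fixed tree $T$, together with your provenance definition of a message ``concerning $F_i$'', formalizes this more cleanly than the paper's induction on $d(T)$. That induction deletes the deepest leaves and implicitly needs a re-indexing of slots, since the round formula $i+d(T)-d(v)$ depends on $d(T)$. The cross-fragment uniqueness you propose to add to the invariant is not needed for the claim; it follows immediately afterwards, because $t_v$ is injective in $i$.
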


\begin{proof}
	We prove the claim by induction on the depth $d(T)$ of tree $T$. If $d(T)=0$, then $T$ contains only the root vertex $r_T$, and the claim is true because no messages are sent. Next, consider a tree of depth $d(T)$. The vertices of depth $d(T)$ are leaves, and thus they never receive $f$-aggregate values. They upcast only $f$-aggregates over their own inputs. If such a vertex belongs to $F_i$, it will only send a message during round $i+d(T)-d(v)=i$. Hence, for vertices of depth $d(T)$, the assertion of the claim holds. 
	
	Now consider the subtree that remains after excluding leaves of depth $d(T)$ from the initial tree. Its depth is $d(T)-1$. The induction assumption holds for any messages coming from within it, leaving only messages originating from the leaves. For any fragment $F_i$, messages from the $d(T)$-depth leaf vertices reach the $(d(T)-1)$-depth vertices at round $i+d(T)-(d(T)-1)=i+1$. Round $i+1$ is the round that we have already designated for $(d(T)-1)$-depth vertices to receive and send messages from fragment $F_i$.
\end{proof}

\subsubsection{Correctness}
\label{3.3.2_correctness}

The following claim establishes the correctness of a single convergecast communication cycle. 
\begin{Claim}
	\label{CB.2_ConvCastCorrectness}
	After a convergecast cycle is performed, for each fragment $F_i$, the root vertex $r_i$ of the fragment holds a value $f_i$ equal to the outcome of a convergecast calculation of the function $f$ performed over all input values initially held by the vertices of $F_i$, and only over these values.
\end{Claim}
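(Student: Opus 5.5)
The plan is to argue in two stages, the intrinsic step and then the extrinsic step. The key tool is Claim \hyperref[CB.1_CycleSeparatesFrags]{B.1}, which separates the computations of different fragments by depth and round.

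\emph{Intrinsic step.} Each subset $B\in\mathcal{B}^i$ runs an ordinary convergecast over its own spanning tree $T_B$. The subsets are vertex-disjoint and their trees are disjoint, so the correctness of a standard tree convergecast (Claim \hyperref[CA.1_TreeCplx]{A.1}) gives the following. After at most $D_b$ rounds, the root $r_B$ holds $f_B$, the $f$-aggregate of exactly the inputs of the vertices of $B$. Since the subsets of $F_i$ partition $F_i$, and $f$ is commutative and associative, $f$ applied to $\{f_B : B\in\mathcal{B}^i\}$ equals $f$ applied to all inputs of $F_i$. So it suffices to show that the extrinsic step delivers exactly this aggregate to $r_i$.

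\emph{Extrinsic step.} Each root $r_B$ with $B\subseteq F_i$ injects $f_B$ in round $i+d(T)-d(r_B)$, as Algorithm \hyperref[Alg1_convergecast]{1} prescribes. This is exactly condition (2) of Claim \hyperref[CB.1_CycleSeparatesFrags]{B.1}, and condition (1) is the aggregate-and-upcast-immediately rule of the algorithm. By Claim \hyperref[CB.1_CycleSeparatesFrags]{B.1}, each vertex $v$ receives and sends messages concerning $F_i$ only in round $t_i(v)=i+d(T)-d(v)$. Moreover, for $j\neq i$ we have $t_j(v)\neq t_i(v)$, so in round $t_i(v)$ every message reaching $v$ concerns $F_i$ alone.

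I then prove by induction on decreasing depth the following statement. The message that $v$ sends in round $t_i(v)$ equals the $f$-aggregate of $\{f_B : B\subseteq F_i,\ r_B\in T_v\}$, where $T_v$ is the subtree of $T$ rooted at $v$, and it carries the label $lbl_i$ whenever this set is nonempty.

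\begin{itemize}
\item \emph{Base case.} A leaf either injects its own $f_B$ or sends the empty value.
\item \emph{Inductive step.} Each child $u$ of $v$ sends its aggregate in round $t_i(u)=t_i(v)-1$, so it arrives at $v$ in round $t_i(v)$. Vertex $v$ combines these child aggregates with its own injection, if any. No value of another fragment is mixed in, by the separation above.
\end{itemize}

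The label claim follows because every message for $F_i$ carries the same $lbl_i$, which is consistent across $F_i$. Applying the statement at $v=r_T$ in round $i+d(T)$ shows that the root holds the aggregate over all subsets of $F_i$, and only those, together with $lbl_i$. Finally, tree routing according to $lbl_i$ delivers this value to $r_i$; the downward messages do not collide, as noted in Section \hyperref[3.2_description]{3.2}.

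The main obstacle is the inductive step, specifically the exclusion of foreign inputs: a vertex that is simultaneously forwarding another fragment's traffic must never fold that traffic into $F_i$'s aggregate. I would handle this by using part (1) of Claim \hyperref[CB.1_CycleSeparatesFrags]{B.1} for every $j\neq i$, so that no message of $F_j$ is received in round $t_i(v)$. A smaller point needs care at the same place: Algorithm \hyperref[Alg1_convergecast]{1} updates $lbl$ only when the output changes, so one must check that an $F_i$-round message with a nonempty value always ends up carrying $lbl_i$. This holds because every incoming nonempty message in that round already carries $lbl_i$.
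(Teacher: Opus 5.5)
Your proposal is correct and follows essentially the same route as the paper's proof. The paper likewise runs the intrinsic convergecast over the disjoint subset trees, uses Claim B.1 to show that each subset root of $F_i$ injects in its round and that $F_i$'s partial results are aggregated only with each other up to $r_T$ in round $i+d(T)$, and then routes the result to $r_i$ via $lbl_i$. Your explicit induction on decreasing depth (the round-$t_i(v)$ message is exactly the aggregate over subset roots of $F_i$ in $T_v$) and your check of label propagation in Algorithm 1 make rigorous what the paper states in a single line.
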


\begin{proof}
	
	The intrinsic step of convergecast is held over disjoint subsets. It is immediate that when the intrinsic step completes, all subset roots hold the convergecast outcomes of their subsets. 
	
	In the extrinsic step, every partial result initially held by a subset root is added to the calculation of its fragment, because for every subset root $r_B\in F_i$ at depth $d(r_B)$, there exists a round $t$ such that $dp(i,t) =  d(r_B)$. During this round $t$, the root vertex $r_B$ handles the calculation for its own fragment $F_i$. 
	
	By Claim \hyperref[CB.1_CycleSeparatesFrags]{B.1}, partial results from $F_i$ are only aggregated with other partial results from $F_i$. Hence, the outcome which reaches the root $r_T$ at any round $t=d(T)+i, i\in[1,K]$, is aggregated precisely over the set of $F_i$'s inputs.
	
	Each outcome $f_i$, after reaching the BFS tree root $r_T$ during round $d(T)+i$, is routed to the fragment leader vertex $r_i$ within the next $d(T)$ rounds. It will reach leader vertex $r_i$ as long as the routing label is correct.
	
\end{proof}

\subsubsection{Analysis of Complexity}
\label{3.3.3_complexity}
The intrinsic step of a communication cycle - convergecast calculations occurring in parallel over a forest of disjoint trees that span subsets - requires up to $D_b$ rounds. 

\begin{Claim}
	\label{CB.3_ConvCastTimeCplx}
	The extrinsic step requires $O((d(T) + K)\log n)$ rounds.
\end{Claim}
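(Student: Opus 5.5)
The plan is to count rounds of the extrinsic step in "label-free" units first, and then charge the $O(\log n)$ overhead of carrying routing labels. By Equation (\hyperref[eqn_slot_to_depth]{2}) and Algorithm \hyperref[Alg1_convergecast]{1}, the last injection happens at a subset root of slot $K$ and depth $0$ in round $d(T)+K$. The earliest injection is in round $1$, by slot $1$ at depth $d(T)$. By Claim \hyperref[CB.1_CycleSeparatesFrags]{B.1}, the aggregate for $F_i$ is upcast by depth-$d$ vertices exactly in round $i+d(T)-d$. Hence the outcome for $F_i$ reaches $r_T$ in round $i+d(T)$, so all upward traffic ends by round $K+d(T)$.

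Next I would handle the downward routing. Each outcome leaves $r_T$ in round $i+d(T)$ and follows the Thorup--Zwick path encoded in $lbl_i$, which has length at most $d(T)$. So it arrives by round $i+2d(T)\le K+2d(T)$. I would argue the downward messages do not collide with each other: $r_T$ releases at most one outcome per round, and every outcome then advances one level per round along $T$. Consequently, on any edge at depth $d$ the outcome of slot $i$ appears only in round $i+d(T)+d$, which is distinct for distinct $i$. Upward and downward traffic travel on opposite directions of tree edges, so they do not collide either. This gives $2d(T)+K$ unit steps.

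Finally, I would account for message size. Each upcast message and each routed outcome carries a label of $O(\log^2 n)$ bits, while CONGEST allows only $O(\log n)$ bits per message. So every unit step is realized as a block of $O(\log n)$ actual rounds: time is dilated by an $O(\log n)$ factor, a "round" $t$ of the schedule becomes the block $[(t-1)c\log n+1,\, tc\log n]$, and every vertex sends the chunks of its message in consecutive rounds of the block. The main point to check is that this dilation preserves the separation of Claim \hyperref[CB.1_CycleSeparatesFrags]{B.1}. A vertex needs all chunks from its children before it can compute the aggregate and pick a label, so it must forward in the next block rather than pipelining within the block. This costs one block per level, which is already counted in the schedule above. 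I would also check that buffering one block's chunks fits in $O(\log^2 n)$ memory (cf. Remark \hyperref[B.1.Remark_Convt_MemCplx_wrt_storingPartialResults]{B.1}). The total is $(2d(T)+K)\cdot O(\log n)=O((d(T)+K)\log n)$ rounds. The expected obstacle is exactly this block-synchronization argument, i.e.\ confirming that the depth-separation invariant survives dilation and that a vertex never holds chunks of two slots at once.
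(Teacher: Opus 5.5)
Your proposal is correct and follows the paper's proof: $F_i$'s window of ownership reaches $r_T$ by round $i+d(T)$, and routing to $r_i$ takes at most $d(T)$ more rounds, giving $2d(T)+K$ ``inflated'' rounds, each costing $O(\log n)$ actual rounds because of the $O(\log^2 n)$-bit labels; your explicit collision-freeness argument for the downward traffic and your block-dilation argument just spell out what the paper asserts in Section~3.2 and Remark~B.1. The memory check you flag as the main obstacle is not needed for this time bound, and as you state it, it is not quite true: in your block schedule a vertex sends slot $i$'s chunks while receiving slot $i+1$'s, so it does hold data for two slots at once, which is harmless since both fit in $O(\log^2 n)$ bits.
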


\begin{proof}
	For a given fragment $F_i$, its window of ownership visits vertex depths from $d(T)$ (on round $i$) to $0$ (on round ($i + d(T)$)) during rounds $i,\ldots i+d(T)$. Within another $d(T)$ rounds, the outcome reaches the designated fragment leader $r_i$. Within $2d(T) + K$ rounds, the extrinsic step is therefore completed for all fragments. Because routing label size is $O(\log^2 n)$, these are $2d(T) + K$ ``inflated'' rounds, or $O((d(T) + K)\log n)$ actual rounds.
\end{proof}

Message complexity for the intrinsic step is $O(n)$, as it consists of convergecasts over a forest of disjoint trees spanning subsets of graph $G$. 

\begin{Claim}
	\label{CB.4_ConvCastMsgCplx}
	The upper bound for message complexity in the extrinsic step is $O((d(T)\cdot K_b)\log n)$.
\end{Claim}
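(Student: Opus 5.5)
Fix a fragment slot $i$ and count how many messages can be attributed to it during the extrinsic step; the bound then follows by summing over all $K\le K_b$ slots (and, in the broadcast variant, over the $K+K_b\le 2K_b$ p- and q-slots). The key tool is Claim B.1. A vertex $v$ handles messages of $F_i$ only in the single round $t=i+d(T)-d(v)$, and in that round it upcasts at most one message concerning $F_i$. So each vertex sends at most one upcast message per slot. The obvious bound from this is $n$ messages per slot, which gives $O(nK)$ and is too weak. The argument therefore has to restrict which vertices send anything at all.

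To do that, I would use the convention of Algorithm 1. A vertex whose $output$ is empty, meaning it has no injected input and received no nonempty message from a child, sends nothing; the empty message $\langle\{\},\{\}\rangle$ is simply suppressed. Under this convention, the vertices that send a message for slot $i$ are exactly the ancestors in $T$ of the subset roots $r_B$ with $B\in\mathcal{B}^i$. Each such root contributes a root-ward path of at most $d(T)$ edges. Hence the number of upcast messages for slot $i$ is at most $d(T)\cdot|\mathcal{B}^i|$. Summing over $i$ gives at most $d(T)\cdot\sum_i|\mathcal{B}^i| = d(T)\cdot K_b$ upcast messages.

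The downward routing from $r_T$ to the leader $r_i$ costs at most $d(T)$ messages per fragment, for a total of $d(T)\cdot K\le d(T)\cdot K_b$.

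Finally, each logical message carries a routing label of $O(\log^2 n)$ bits. In CONGEST this is split into $O(\log n)$ physical messages, as explained in Section 3.1. This multiplies the total by $\log n$ and yields $O((d(T)\cdot K_b)\log n)$.

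The step I expect to need the most care is the suppression of empty messages. I must check two things. First, that it does not break the depth-separation invariant of Claim B.1: silence only removes messages, so the invariant survives. Second, that a parent never needs to hear from all its children, since aggregation is done per round rather than by counting children. This second point is what allows the paths-to-root union bound to replace the naive $O(n)$ per-slot count.
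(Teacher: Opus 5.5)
Your proof is correct and uses essentially the paper's charging argument: each upcast edge is charged to a subset root whose root-ward path, of length at most $d(T)$, contains it, and routing back costs at most $d(T)$ hops per fragment, all inflated by a $\log n$ factor for the $O(\log^2 n)$-bit label. You also make explicit that empty messages must be suppressed rather than sent unconditionally, as the last line of Algorithm~1 literally reads; the paper's proof leaves this implicit, but the bound genuinely needs it.
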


\begin{proof}
	Each of the $K_b$ subsets originates a single message. Every such message will be sent over up to $2d(T)$ edges before aggregating with another message or reaching its destination. As each message is of size $O(\log^2 n)$, we multiply by a factor of $\log n$.
\end{proof}

%\subsubsection{Memory Complexity}
%\label{3.3.3_memCplx}
Finally, we analyze the memory complexity of the convergecast communication cycle.
\begin{Claim}
	\label{CB.5_ConvCastMemCplx}
	The memory complexity of a single communication cycle is $O(\log^2 n)$.
\end{Claim}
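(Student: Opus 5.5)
The plan is to bound, vertex by vertex, everything a vertex $v$ must keep in local memory during one convergecast cycle, and to show each item is either $O(\log n)$ bits or a single routing label of $O(\log^2 n)$ bits. Persistent state comes first. By Claim \hyperref[C5.2]{5.2}, and by the prerequisites of Section \hyperref[3.1_preq]{3.1}, $v$ stores its depth $d(v)$, the depth $d(T)$, its $T$-parent edge and number of $T$-children, and its TZ routing table, all of which are $O(\log n)$ bits by Claim \hyperref[CA.1_TZ_clpx]{A.1}. It also stores its slot number $i$ and the label $lbl_i$ of its fragment leader, which is $O(\log^2 n)$ bits. The intrinsic step is an ordinary convergecast over the subset tree $T_B$, so by Claim \hyperref[CA.1_TreeCplx]{A.1} it adds only $O(\log n)$ bits: the partial aggregate, a count of children heard from, and possibly the edge toward the source of the optimum.

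The extrinsic step is where memory could blow up, because $v$ relays traffic of up to $K$ different fragments. Here I would invoke Claim \hyperref[CB.1_CycleSeparatesFrags]{B.1}. In any round $t$, every message that $v$ receives or sends concerns the single fragment $F_i$ with $i=d(v)-d(T)+t$, and $v$ handles $F_i$ in exactly one round. Consequently, following Algorithm \hyperref[Alg1_convergecast]{1}, $v$ keeps only one working pair $\langle output, lbl\rangle$. It folds incoming child messages into this pair one at a time, keeping the label of whichever message it adopts, and forwards the pair in the same round. No state from round $t$ has to survive into round $t+1$, so there are no queues and no per-fragment tables. The pair costs $O(\log n)+O(\log^2 n)$ bits.

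The step I expect to be the main obstacle is the CONGEST subtlety that a label of $O(\log^2 n)$ bits occupies $O(\log n)$ actual rounds, and that $v$ may have many children sending simultaneously. I would argue via the ``inflated round'' convention of Claim \hyperref[CB.3_ConvCastTimeCplx]{B.3}. Messages wait in the port buffers, which the model does not charge to local memory. Vertex $v$ reads the aggregate fields first, each $O(\log n)$ bits, and compares each against its running value. It then copies the label piece by piece only from the port it has finally selected. At any moment, therefore, $v$ holds at most one label plus $O(\log n)$ bits of scratch space, and this is the content of Remark \hyperref[B.1.Remark_Convt_MemCplx_wrt_storingPartialResults]{B.1}.

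Finally comes the downward routing of each outcome from $r_T$ to $r_i$. In a given round the root $r_T$ forwards only the message that just completed, and by the window argument any intermediate vertex forwards at most one downward message per inflated round. Each forwarded message consists of a result plus a label used as the TZ header. Deciding the next hop uses only the local $O(\log n)$-bit routing table. Summing the three parts gives $O(\log^2 n)$ bits per vertex.
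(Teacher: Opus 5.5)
Your proposal is correct and follows essentially the same route as the paper's proof. The paper likewise takes an inventory of persistent state, meaning the $O(\log n)$-bit tree pointers, depths, slot number and TZ routing table plus the $O(\log^2 n)$-bit routing labels. It then uses Claim B.1 to show that in each round a vertex carries only one partial result and one label, for the single fragment that owns its depth, and Remark B.1 to handle collecting the selected label over the $O(\log n)$ actual rounds of an inflated round.

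You treat the intrinsic step and the downward routing from $r_T$ more explicitly than the paper does. For the downward routing, the same depth-separation argument applies to the descending windows, so a vertex handles at most one downward message per round in addition to its upcast. This extra detail does not change the argument.
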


\begin{proof}
	In order to implement the communication cycle procedure, it is enough for each vertex to store the following variables, whose sizes are all $O(\log^2 n)$. 
	
	\begin{itemize}
		\item The share of vertex $v$ in the tree data structures that make up BFS-tree $T$ and subset tree $T_b$ (i.e., the identity of the parent edge, and the number of children).
		
		\item The depth $d(v)$ of vertex $v$ in BFS tree $T$, and the overall depth $d(T)$.
		
		\item Every vertex $v\in F_i$ is required to store its slot number $i$.
		
		\item To enable tree routing, every vertex $v \in V$ stores a routing table of size $O(\log n)$. Each vertex is also required to know its own routing label, and the routing label of its fragment leader $r_i$. These routing labels require $O(\log^2 n)$ bits \cite{TZ01Compact}.
		
	\end{itemize}
	
	In the course of the convergecast calculation, a vertex $v$ may be required to store up to two partial results at any given moment. Firstly, $v$ is required to store its own partial result, for the fragment that $v$ itself belongs to. Secondly, for round $t$, vertex $v$ is required to store the partial result that it is calculating during this round for fragment $F_i$, where $i = d(v) - d(T) + t$. At the end of round $t$, vertex $v$ no longer needs to retain the partial outcome for $F_i$ (See Claim \hyperref[CB.1_CycleSeparatesFrags]{B.1}). The size of these partial results is $O(\log n)$. The partial result for each fragment $F_i$ is accompanied by a routing label of size $O(\log^2 n)$. 
\end{proof}

\begin{Remark}
\label{B.1.Remark_Convt_MemCplx_wrt_storingPartialResults}
We note that in general, convergecast with $O(\log^2 n)$-size messages might require up to $O(deg(v)\log^2 n)$ memory space at a vertex $v\in V$. In a convergecast with $O(\log n)$ size messages, a vertex $v\in V$ receiving $O(deg(v))$ messages can iterate through them, discarding each message after aggregating it. In a convergecast with asymptotically larger messages, the vertex $v$ might need to locally store partial copies of each message throughout the $O(\log n)$ rounds that are needed for these messages to arrive. In the particular case presented in this work, however, the problem does not arise, because all $O(\log^2 n)$-size messages consist of an $O(\log n)$-size input for convergecast, and an $O(\log^2 n)$-size routing label. An ``inflated'' round will consist of a single round where the vertex $v$ iterates over convergecast inputs, and $O(\log n)$ subsequent rounds, where the vertex $v$ collects the appropriate label. Vertex $v$ will always be able to determine which label to collect once the convergecast output is decided. The label will either be identical for all inputs, or it will be the label associated with the input that the vertex $v$ elected to upcast.
\end{Remark}

\subsection{The Broadcast Cycle}
\label{4.3_Broadcast Cycle Analysis}

\subsubsection{Correctness}
\label{4.3.1_Correctness}
To establish the correctness of the extrinsic step, we prove that for each fragment $F_i, i\in [K]$, when its leader $r_i$ sends a message, the message reaches the roots of the subsets of $F_i$. As we have shown in Claim \hyperref[CB.1_CycleSeparatesFrags]{B.1}, messages reach the BFS-root in the order of their slot numbers. Using consecutive p-q slot sets guarantees that for every subset $B\subseteq F_i$, the query message from $B$ will arrive after the p-message from $F_i$, and before the p-message from any other fragment. Thus, the reply to $B$ will be composed while the local message variable held by $r_T$ equals $msg_i$. Correctness follows as long as the slot allocation meets the requirements - every p-q range is consecutive and does not overlap with other p-q ranges, and the p-slot belongs to the leader vertex $r_i$.

\begin{Claim}
	\label{CB.6_BCastCorrectness}
	Suppose that a broadcast cycle takes place, following the algorithm described in Section \hyperref[4.2_pq_broadcast]{4.2}. Suppose also that the set of slots assigned to fragments and subsets meets the following requirements: 
	
	\begin{itemize}
		\item Every fragment $F_i$ consisting of $k$ subsets $B_1,\ldots,B_k$, has a unique consecutive range of $k+1$ slots that does not overlap with other ranges.
		\item Within every such range, the first slot is assigned to the fragment $F_i$, and the remaining $k$ slots are assigned to the $k$ different subsets of $F_i$.
	\end{itemize}
	
	Then, after executing a broadcast communication cycle, for each fragment $F_i$, each vertex $v\in F_i$ receives the message $msg_i$ sent by $F_i$'s leader vertex $r_i$.
\end{Claim}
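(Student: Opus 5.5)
The proof splits into the extrinsic step and the intrinsic step, with the extrinsic step carrying the weight.

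\textbf{Extrinsic step, arrival order at $r_T$.} Every message sent in a slot is an upcast along $T$ that starts at its origin vertex $v$ in round $j+d(T)-d(v)$. Claim \hyperref[CB.1_CycleSeparatesFrags]{B.1} applies to this, with each slot treated as its own ``fragment'' and the aggregation function being trivial pass-through, since each slot carries exactly one message. So the message of slot $j$ reaches $r_T$ exactly in round $j+d(T)$ and is never merged with a message of another slot. Hence $r_T$ receives at most one message per round, in increasing slot order. This step needs care: Claim \hyperref[CB.1_CycleSeparatesFrags]{B.1} was stated for aggregation, so I would note explicitly that distinct slots occupy distinct depths in each round. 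Then no vertex ever has to combine or queue two different messages, which is exactly what the depth-separation argument guarantees.

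\textbf{Extrinsic step, replies carry the right message.} Fix $F_i$ with p-q range $\{j_i,\dots,j_i+k_i\}$. By the first requirement, no slot of another fragment lies strictly between $j_i$ and $j_i+k_i$. By the second requirement, slot $j_i$ is used by $r_i$ to send $\langle P,msg_i\rangle$. So in round $j_i+d(T)$, $r_T$ sets $msg_T:=msg_i$. In rounds $j_i+1+d(T),\dots,j_i+k_i+d(T)$, $r_T$ receives only Q-messages from subsets of $F_i$, and $msg_T$ is not overwritten before the next P-message, whose slot is at least $j_i+k_i+1$. So each query $\langle Q,lbl_B\rangle$ from a subset $B\subseteq F_i$ is answered with $\langle msg_i,lbl_B\rangle$. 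Since the $k_i$ slots go to the $k_i$ distinct subsets, every subset of $F_i$ gets exactly one such reply.

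\textbf{Delivery of replies.} By correctness of TZ tree routing (Appendix \hyperref[A.5.2_TZ_Tree_Routing]{A.5.2}), each reply reaches $r_B$ within $d(T)$ further rounds. Replies leave $r_T$ in distinct consecutive rounds and travel downward one level per round. So in any round they occupy distinct depths on any root-to-leaf path, and no edge carries two replies in the same round. They also do not conflict with upcasts, because upward and downward traffic use opposite edge directions. I would treat this non-congestion of the downward traffic as the main obstacle. I would handle it with the same slot-to-depth bookkeeping as Claim \hyperref[CB.1_CycleSeparatesFrags]{B.1}: the reply for slot $j$ is at depth $t-j-d(T)$ in round $t$.

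\textbf{Intrinsic step.} Each $r_B$ now holds $msg_i$ and broadcasts it over $T_B$. The subsets partition $F_i$, so every $v\in F_i$ receives $msg_i$ (Claim \hyperref[CA.1_TreeCplx]{A.1}).
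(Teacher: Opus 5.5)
Your proof is correct and follows the same route as the paper. Claim~B.1 gives arrival of upcasts at $r_T$ in slot order. The consecutive, non-overlapping p-q ranges then ensure that the most recent P-message before any query from $B\subseteq F_i$ is $msg_i$, and the intrinsic broadcast over the partition $\mathcal{B}_i$ finishes the argument.

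You add two details the paper's proof omits. You apply Claim~B.1 slot-by-slot with pass-through aggregation, and you check that downward replies occupy distinct depths, which the paper only asserts in the prose of Section~3.2. This makes your write-up slightly more complete, but it is not a different approach.
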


\begin{proof}
	The root vertex $r_T$ replies to each query $\langle Q, lbl_b\rangle$ sent by a subset $B$ by returning to label $lbl_b$ the last message that $r_T$ has received. Since the p-q ranges are all consecutive and non-overlapping, for every q-slot $j_B$ of subset $B$, the most recent p-slot preceding $j_B$ is the p-slot of $B$'s fragment leader. By Claim \hyperref[CB.1_CycleSeparatesFrags]{B.1}, the upcast messages reach the root in the order of the numbers of their associated slots. That is, if the message upcast in slot $j$ arrives in round $t$, then the message from slot $j + 1$ arrives in round $t+1$. Thus, for every query from subset $B$ of $F_i$, the message sent from $r_T$ in response to the query will be the message published by the leader $r_i$ of $F_i$. 
	
	Once all queries are met with replies, the root of every subset $B\subseteq F_i$ has the message $msg_i$ from the leader $r_i$ of $F_i$. Next, subset roots perform a broadcast inside their respective subsets. Since every vertex $v\in F_i$ is a part of some subset $B\in \mathcal{B}_i$, every vertex $v$ receives $msg_i$.
\end{proof}
\subsubsection{Analysis of Complexity}
\label{4.3.2_Cplx}

Next, we analyze the time, message and memory complexity of the broadcast communication cycle. We start with time complexity. 

\begin{Claim}
	\label{CB.7_BCastCTimeCplx}
	The broadcast cycle has time complexity of $O((d(T)+K_b)\log n + D_b)$ rounds.
\end{Claim}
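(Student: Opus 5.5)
We bound the running time of the broadcast cycle by splitting it into its extrinsic step and its intrinsic step and handling each separately, mirroring the proof of Claim \hyperref[CB.3_ConvCastTimeCplx]{B.3}. The total is then the sum of the two bounds.

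\emph{Upward phase.} The p-q slot set uses at most $K+K_b\leq 2K_b$ distinct slots. Each slot $j$ launches an upcast that, by Claim \hyperref[CB.1_CycleSeparatesFrags]{B.1}, occupies depth $dp(j,t)=d(T)+j-t$ at round $t$. Upcasts of distinct slots therefore never share a depth in the same round, so no queuing occurs. The message of slot $j$ reaches $r_T$ exactly at round $j+d(T)$, and hence all P- and Q-messages have arrived at $r_T$ by round $d(T)+K+K_b$.

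\emph{Downward phase.} Each reply $\langle msg_T, lbl_B\rangle$ is emitted by $r_T$ in the same round in which the corresponding query arrives. It is then routed down $T$ along the tree path to $r_B$ using Thorup--Zwick routing, which takes at most $d(T)$ hops. We must check that replies do not congest one another. A reply emitted at round $j+d(T)$ is at depth $t-(j+d(T))$ at round $t$, so distinct replies are at distinct depths in every round. Moreover, downward traffic uses the parent-to-child direction of edges, while upcasts use the child-to-parent direction, so the two flows are disjoint. Consequently every reply reaches its subset root by round $2d(T)+K+K_b=O(d(T)+K_b)$.

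\emph{Cost of labels and the intrinsic step.} Both the Q-messages and the replies carry routing labels of $O(\log^2 n)$ bits. As in Claim \hyperref[CB.3_ConvCastTimeCplx]{B.3} and Remark \hyperref[B.1.Remark_Convt_MemCplx_wrt_storingPartialResults]{B.1}, each such round is an inflated round costing $O(\log n)$ actual CONGEST rounds, which gives $O((d(T)+K_b)\log n)$ for the extrinsic step. The intrinsic step consists of parallel broadcasts of $O(\log n)$-bit messages over the vertex-disjoint spanning trees $T_B$, each of diameter at most $D_b$. By Claim \hyperref[CA.1_TreeCplx]{A.1} this takes $O(D_b)$ rounds. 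Summing the two steps yields $O((d(T)+K_b)\log n+D_b)$.

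The main subtlety is that pipelining inflated rounds is not automatic. When each logical round is stretched into $O(\log n)$ actual rounds, we must still guarantee that a vertex never holds more than one label in transit per direction, so that the depth separation of Claim \hyperref[CB.1_CycleSeparatesFrags]{B.1} survives. I would handle this by having all vertices synchronously advance one logical round every $\Theta(\log n)$ actual rounds. Under this schedule the separation argument applies verbatim at the logical level, and the time bound simply scales by $\log n$.
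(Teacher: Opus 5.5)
Your proof is correct and follows the paper's own argument. As the paper does, you allow $d(T)+K+K_b$ rounds for all P- and Q-messages to reach $r_T$ and another $d(T)$ rounds for replies, use $K\le K_b$ and the $O(\log n)$ inflation for $O(\log^2 n)$-bit labels, and add $O(D_b)$ for the intrinsic broadcasts over the subset trees. Your extra checks (replies occupy distinct depths, upward and downward traffic use opposite edge directions, and inflated rounds are globally synchronized) make explicit what the paper leaves implicit.
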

\begin{proof}
	For the extrinsic step of the broadcast cycle, calculation of time complexity is similar to the convergecast cycle. Within $d(T)+K+K_b$ rounds all messages reach $r_T$, and within another $d(T)$ rounds all replies from $r_T$ reach their respective destinations. We note that every fragment $F$ is partitioned into one or more subsets, and thus $K \leq K_b$. Taking into consideration that the message size is $O(\log^2 n)$, we get time complexity $O((d(T)+K_b)\log n)$.
	
	The intrinsic step consists of broadcasts over a forest of disjoint trees with depth $O(D_b)$. Messages in the intrinsic step do not include routing labels, and so have size $O(\log n)$. The intrinsic step therefore requires $O(D_b)$ rounds.
\end{proof}

Next, we analyze the message complexity.

\begin{Claim}
	\label{CB.8_BCastCMsgCplx}
	The broadcast cycle has message complexity of $O((d(T)\cdot K_b)\log n + n)$ messages.
\end{Claim}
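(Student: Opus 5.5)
The bound splits into the extrinsic step and the intrinsic step, and I would count each separately, mirroring the proof of Claim B.4.

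For the extrinsic step, the upward phase carries one message per slot. There are $K+K_b\le 2K_b$ slots in total: one p-slot per fragment and one q-slot per subset. By Claim B.1, the message injected in a slot occupies exactly one depth of $T$ per round on its way to $r_T$. Different slots never share a depth in the same round, so a vertex forwards at most one message per slot. The useful way to charge is per injected message and per edge traversed. A message starting at a vertex of depth $d(v)\le d(T)$ crosses at most $d(T)$ edges of $T$ before it reaches $r_T$ or is absorbed into an aggregate. In the broadcast cycle nothing is actually aggregated: each slot has a unique originator, since the p-slot belongs to the leader and each q-slot to a single subset root. So the upward phase costs at most $(K+K_b)\cdot d(T)=O(d(T)\cdot K_b)$ CONGEST messages of label size.

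The downward phase consists of the replies. For each q-slot, $r_T$ sends one reply $\langle msg_T,lbl_B\rangle$ that is routed with the TZ scheme along the unique $r_T$--$r_B$ path in $T$. That path has at most $d(T)$ edges, which adds $K_b\cdot d(T)$ messages. Both upward and downward messages carry an $O(\log^2 n)$-bit routing label, so each one is split into $O(\log n)$ CONGEST messages, exactly as in Claim B.4. The extrinsic total is therefore $O(d(T)\cdot K_b\cdot\log n)$.

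For the intrinsic step, each subset root broadcasts the received $msg_i$ over its spanning tree $T_B$. The messages have size $O(\log n)$ and carry no label. The trees $T_B$ are vertex-disjoint and together span $V$. Claim A.1 gives $O(|B|)$ messages per subset, so summing over subsets gives $O(n)$. There is one wrinkle: the compact tree representation of Section A.1.2 needs a preliminary request round over tree edges. That round also costs one message per tree edge, so it stays $O(n)$. Adding the two steps yields $O((d(T)\cdot K_b)\log n+n)$.

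The only step needing care is the claim that each upward slot message is sent over at most $d(T)$ edges and is not duplicated. A vertex at depth $dp(j,t)$ forwards only to its single $T$-parent. An intermediate vertex that has no input for slot $j$ must not emit a message at all, rather than sending an empty message. If empty messages were sent, every vertex at that depth would transmit, which would cost $O(n)$ per slot instead of $O(d(T))$. So I would state explicitly that vertices transmit only when they hold a nonempty value. With that convention, the counting above goes through.
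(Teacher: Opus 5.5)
Your proposal is correct and follows the paper's proof. Both arguments take the $K+K_b\le 2K_b$ originating messages of the extrinsic step, each traveling at most $2d(T)$ edges, inflate this by a factor of $\log n$ for the $O(\log^2 n)$-bit labels, and add $O(n)$ for the intrinsic broadcasts over the disjoint subset trees. Your explicit convention that a vertex stays silent when it holds nothing for the current slot is a worthwhile clarification: the paper's count tacitly relies on it, even though Algorithm 1 as written sends $\langle output, lbl\rangle$ to the parent unconditionally.
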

\begin{proof}
	For the extrinsic step, the overall number of vertices initiating messages is $K+K_b\leq2K_b$, including both fragments and subsets. Each message takes up to $2d(T)$ steps until it is delivered to its final destination. The total number of messages sent is $O(d(T)\cdot K_b)$. When we take into account that routing label size is $O(\log^2 n)$, the number of messages is multiplied by a factor of $O(\log n)$, and so we have $O((d(T)\cdot K_b)\log n)$ messages.
	The intrinsic step requires $O(n)$ messages.
\end{proof}

Finally, we analyze the memory complexity of the broadcast cycle.

%\subsubsection{Memory Complexity}
%\label{4.3.3_MemCplx}
\begin{Claim}
	\label{CB.9_BCastCMemoryCplx}
	The memory complexity of the broadcast cycle is \\$O(\log^2 n)$ bits.
\end{Claim}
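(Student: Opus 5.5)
The plan mirrors the proof of Claim \hyperref[CB.5_ConvCastMemCplx]{B.5}. I would list every variable a vertex must hold during a broadcast cycle and bound each by $O(\log^2 n)$ bits. The argument splits into three parts: the persistent state common to all cycles, the transient state used by ordinary vertices during the extrinsic step, and the special state held by the BFS root $r_T$. The intrinsic step is handled afterwards.

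First, the persistent state. This is exactly the list in the proof of Claim \hyperref[CB.5_ConvCastMemCplx]{B.5}: the vertex's share of $T$ and of $T_B$ (parent edge and number of children), the depths $d(v)$ and $d(T)$, the routing table of size $O(\log n)$, the vertex's own label, and the label of its fragment leader. Each label has $O(\log^2 n)$ bits by Claim \hyperref[CA.1_TZ_clpx]{A.1}. The broadcast cycle adds only a constant number of slot numbers, namely the p-slot of the leader and the q-slot of the subset root, each of $O(\log n)$ bits. The query message $\langle Q, lbl_B\rangle$ carries $lbl_B$, which $r_B$ already stores as its own label. The publishing message $\langle P, msg_i\rangle$ carries $msg_i$ of $O(\log n)$ bits.

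Second, the transient state at intermediate vertices. In the upward phase, Claim \hyperref[CB.1_CycleSeparatesFrags]{B.1} says that in each round a vertex at depth $d(v)$ handles messages of exactly one slot $j=d(v)-d(T)+t$. Since each slot is injected by a single vertex, at most one child sends a non-empty message for that slot, so no aggregation is needed. The vertex just forwards one $O(\log^2 n)$-bit message, and it can discard it at the end of the (inflated) round. Streaming the $O(\log n)$ chunks of the label as in Remark \hyperref[B.1.Remark_Convt_MemCplx_wrt_storingPartialResults]{B.1} keeps this within $O(\log^2 n)$ bits.

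In the downward phase, responses are routed by TZ tree routing, in which the header equals the label. A vertex at depth $\delta$ receives the response to slot $j$ only in round $j+d(T)+\delta$. Hence distinct responses occupy distinct depths in each round, and there is no queueing. A vertex holds at most one downward message, plus possibly one upward message, per round. This step is the main obstacle. I would argue it by the same induction as in Claim \hyperref[CB.1_CycleSeparatesFrags]{B.1}, applied to the downward sequence of pairs $\langle i+t, t-d(T)\rangle$ from Section \hyperref[3.2_description]{3.2}, which shows that each vertex buffers $O(1)$ labels at any time.

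Third, the root $r_T$. It keeps a single variable $msg_T$, which is overwritten on each P-message. On each Q-message it composes $\langle msg_T, lbl_B\rangle$ and sends it down immediately, so it never holds more than one message and one label at once. This uses the consecutiveness of p-q ranges, as in Claim \hyperref[CB.6_BCastCorrectness]{B.6}. Finally, the intrinsic step is an ordinary broadcast of an $O(\log n)$-bit message over $T_B$. By Claim \hyperref[CA.1_TreeCplx]{A.1} it needs $O(\log n)$ memory. Summing the three parts gives $O(\log^2 n)$ bits.
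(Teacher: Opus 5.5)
Your proposal is correct and takes the same route as the paper: its proof reuses the variable inventory of Claim~\hyperref[CB.5_ConvCastMemCplx]{B.5} and adds only the single $O(\log n)$-bit message variable $msg_T$ stored at $r_T$. Your explicit treatment of the downward routed responses (distinct slots occupy distinct depths in each round, so there is no queueing) and of the intrinsic broadcast over $T_B$ spells out details that the paper's short proof leaves implicit.
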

\begin{proof}
	By Claim \hyperref[CB.5_ConvCastMemCplx]{B.5}, the memory complexity of the convergecast cycle is $O(\log^2 n)$ bits. The implementation of the broadcast cycle requires individual vertices to remember, besides the variables used in the convergecast cycle, one more variable. This is the message variable, hosted by the BFS tree root $r_T$. The root $r_T$ stores up to one message variable at any given round. The size of this variable is $O(\log n)$. 
\end{proof}

%\clearpage

%\begin{Remark}
%	\label{Remark_SlotGen_MSF}
%
%\end{Remark}

%We summarize the main properties of this procedure, its correctness and complexity, in Claims \hyperref[C5.5]{5.5} and \hyperref[C5.6]{5.6}, respectively. We give the proofs below.
 
%Proof for Claim \hyperref[C5.5]{5.5}:

%Proof for Claim \hyperref[C5.6]{5.6}:

%\clearpage

%\section{Pseudocode Algorithms}
%\label{Appendix.G_Algs}
%\subsection{A Single Convergecast Cycle}

%\subsection{Interval Allocation}

%\clearpage
%\section{Figures}
%\label{Appendix.F_Figs}
%\subsection{The Progression of Time Slots}

%\subsection{Range Propagation}

%\subsection{Step 2 of the Slot Set Generation Procedure}

%\pagebreak

\end{document}